\keywords{satisfiability problem, finite model property, uniform one-dimensional fragment, three-variable logic}
\theoremstyle{plain} 
\begin{document}

\title[Alternating Quantifiers in Uniform One-Dimensional Fragments]{Alternating Quantifiers \texorpdfstring{\\}{} in Uniform One-Dimensional Fragments \texorpdfstring{\\}{} with an Excursion into  Three-Variable Logic}
\titlecomment{{\lsuper*}This paper is a consolidated, revised and extended version of \cite{FK23} and \cite{Kie23}}

\author[O.~Fiuk]{Oskar Fiuk\lmcsorcid{0009-0006-1312-4899}}
\author[E.~Kiero\'nski]{Emanuel Kiero\'nski\lmcsorcid{0000-0002-8538-8221}}

\address{Institute of Computer Science, University of Wroc\l{}aw, Poland}	
\email{307023@uwr.edu.pl, emanuel.kieronski@cs.uni.wroc.pl}  






\newcommand{\FO}{\mbox{\rm FO}}
\newcommand{\FOt}{\mbox{$\mbox{\rm FO}^2$}}
\newcommand{\FOthree}{\mbox{$\mbox{\rm FO}^3$}}
\newcommand{\TGF}{\mbox{$\mbox{\rm TGF}$}}
\newcommand{\GFcp}{\mbox{$\mbox{\rm GF}^{\times_2}$}}
\newcommand{\GFU}{\mbox{$\mbox{\rm GFU}$}}
\newcommand{\UNFO}{\mbox{\rm UNFO}}
\newcommand{\LGF}{\mbox{\rm LGF}}
\newcommand{\GNFO}{\mbox{\rm GNFO}}
\newcommand{\GFTG}{\mbox{\rm GF+TG}}
\newcommand{\GFEG}{\mbox{\rm GF+EG}}
\newcommand{\TGFTG}{\mbox{\rm TGF+TG}}
\newcommand{\GFUTG}{\mbox{\rm GFU+TG}}
\newcommand{\GF}{\mbox{\rm GF}}
\newcommand{\GFt}{\mbox{$\mbox{\rm GF}^2$}}
\newcommand{\FF}{\mbox{\rm FF}}
\newcommand{\UF}{\mbox{$\mbox{\rm UF}_1$}}
\newcommand{\AUFm}{\mbox{$\mbox{\rm AUF}_1^{-}$}}
\newcommand{\AUF}{\mbox{$\mbox{\rm AUF}_1$}}
\newcommand{\AUFthree}{\mbox{$\mbox{\rm AUF}_1^3$}}
\newcommand{\sUF}{\mbox{$\mbox{\rm sUF}_1$}}
\newcommand{\bK}{\mbox{$\bar{\mbox{\rm K}}$}}
\newcommand{\sUFthree}{\mbox{$\mbox{\rm sUF}_1^3$}}
\newcommand{\FOthreem}{\mbox{$\mbox{\rm FO}^3_{-}$}}

\newcommand{\NLogSpace}{\textsc{NLogSpace}}
\newcommand{\LogSpace}{\textsc{LogSpace}}
\newcommand{\NP}{\textsc{NPTime}}
\newcommand{\PTime}{\textsc{PTime}}
\newcommand{\PSpace}{\textsc{PSpace}}
\newcommand{\ExpTime}{\textsc{ExpTime}}
\newcommand{\ExpSpace}{\textsc{ExpSpace}}
\newcommand{\NExpTime}{\textsc{NExpTime}}
\newcommand{\TwoExpTime}{2\textsc{-ExpTime}}
\newcommand{\TwoNExpTime}{2\textsc{-NExpTime}}
\newcommand{\APSpace}{\textsc{APSpace}}
\newcommand{\AExpSpace}{\textsc{AExpSpace}}
\newcommand{\ASpace}{\textsc{ASpace}}
\newcommand{\DTime}{\textsc{DTime}}

\newcommand{\set}[1]{\{#1\}}
\newcommand{\str}[1]{{\mathfrak{#1}}}
\newcommand{\restr}{\!\!\restriction\!\!}
\newcommand{\N}{{\mathbb N}}
\newcommand{\Z}{{\mathbb Z}} 
\newcommand{\sss}{\scriptscriptstyle}
\newcommand{\cT}{\mathcal{T}}
\newcommand{\cL}{\mathcal{L}}
\newcommand{\cF}{\mathcal{F}}
\newcommand{\tpstar}{{\rm tp}^*}
\newcommand{\tp}{{\rm tp}}

\newcommand*{\UU}{\mathsf{U}}
\newcommand*{\NN}{\mathbb{N}}
\newcommand*{\ZZ}{\mathbb{Z}}
\newcommand{\TG}{\mbox{\rm TG}}
\newcommand*{\Lc}{\mathcal{L}}
\newcommand*{\Pc}{\mathcal{P}}
\newcommand*{\Fc}{\mathcal{F}}
\newcommand*{\Mf}{\mathfrak{M}}
\newcommand*{\Nf}{\mathfrak{N}}
\newcommand*{\qf}{\mathrm{qf}}
\newcommand*{\<}{\langle}
\renewcommand*{\>}{\rangle}
\renewcommand{\le}{\leqslant}
\renewcommand{\ge}{\geqslant}
\renewcommand{\bar}{\overline}

\newcommand{\noproof}{\pushQED{\qed} \qedhere \popQED}

\newcommand{\Qfr}{\mbox{Q}}

\newcommand{\phie}{\phi^{\sss \exists}}
\newcommand{\phiu}{\phi^{\sss \forall}}

\newcommand{\psie}{\psi^{\sss \exists}}
\newcommand{\psiu}{\psi^{\sss \forall}}
\newcommand{\mse}{m_{\sss \exists}}
\newcommand{\msu}{m_{\sss \forall}}

\renewcommand{\phi}{\varphi}

\newcommand{\type}[2]{{\rm tp}^{{#1}}({#2})}

\newcommand{\AAA}{\mbox{\large \boldmath $\alpha$}}
\newcommand{\BBB}{\mbox{\large \boldmath $\beta$}}
\newcommand{\CCC}{\mbox{\large \boldmath $\gamma$}}

\newcommand{\pretype}[2]{{\rm pretp}^{{#1}}({#2})}
\newcommand{\outertype}[2]{{\rm otp}^{{#1}}({#2})}

\newcommand{\ff}{\mathfrak{f}}
\newcommand{\fg}{\mathfrak{g}}
\newcommand{\fh}{\mathfrak{h}}
\newcommand{\pat}{\mathfrak{pat}}
\newcommand{\chc}{\mathfrak{chc}}
\newcommand{\eki}[1]{{\color{blue} \tiny \bf Emanuel says: #1}}
\newcommand{\ofi}[1]{{\color{green} \tiny \bf Oskar says: #1}}
\renewcommand{\mod}{\text{mod }}
\newcommand{\gtp}{{\rm gtp}}
\newcommand{\Vars}{{\rm Vars}}
\newcommand{\Prob}{{\rm Prob}}

\begin{abstract}

The uniform one-dimensional fragment of first-order logic was introduced a few years ago as a generalization
of the two-variable fragment to contexts involving relations of arity greater than two. 
Quantifiers in this logic are used in blocks, each block consisting only of existential quantifiers or only of universal 
quantifiers. In this paper we consider the possibility of mixing both types of quantifiers in blocks. We show the finite
(exponential) model property and \NExpTime-completeness of the satisfiability problem for two restrictions of the resulting
formalism: in the first we require that every block of quantifiers is either purely universal or ends with the existential
quantifier, in the second we restrict the number of variables to three; in both equality is not allowed.
We also extend the second variation to a rich subfragment of the three-variable fragment (without equality) that still has
 the finite model property and decidable, \NExpTime-complete, satisfiability.

\end{abstract}

\maketitle

\section{Introduction}

In this paper we aim to advance the research on the uniform one-dimensional fragment of first-order logic, \UF{}.
Additionally, we will take an excursion to the area of three-variable logic, and get close to the limits of decidability.

As \UF{} was originally proposed as a generalization of the two-variable fragment, \FOt{}, to contexts with relations of arbitrary arity,
we set up the scene by recalling the most important facts about \FOt{}.

\smallskip
\noindent
{\bf The two-variable fragment.}
\FOt{} is obtained just by restricting first-order logic so that formulas may use only variables $x$ and $y$,
and is considered as one of the most significant decidable fragments of first-order logic identified so far.
The decidability of its satisfiability problem was shown by Scott \cite{Sco62} in the case without equality, and
by Mortimer \cite{Mor75} in the case with equality. Actually, in \cite{Mor75} it is proved that \FOt{} has the finite model property, that is,
every satisfiable \FOt{}-sentence has a finite model. Later, Gr\"adel, Kolaitis and Vardi \cite{GKV97} strengthened that result by showing the exponential model property of \FOt{}, that is, every satisfiable \FOt{}-sentence has a model of size bounded exponentially in its length. The latter result establishes that the satisfiability problem for \FOt{} is in \NExpTime{}. The matching lower bound follows from the earlier work by Lewis \cite{Lew80}.
With respect to the number of variables \FOt{} is the maximal decidable fragment of \FO{}, as already the three-variable fragment, \FOthree{},
has undecidable satisfiability \cite{KMW62}.

An important  motivation for studying \FOt{} is the fact that it embeds, via the so-called standard translation, basic modal logic and many standard description logics. 
Thus, \FOt{} constitutes an elegant first-order framework for analysing those formalisms.
Its simplicity and naturalness make it also an attractive logic in itself, with applications in knowledge representation, artificial intelligence, and verification of hardware and of software.
In the last few decades, \FOt{} together with its extensions and variations have been extensively studied, and plenty of results have been obtained. To mention just a few of them: decidability was shown for \FOt{} with counting quantifiers \cite{GOR97,PST97,P-H10}, one or two equivalence relations \cite{KO12,KMPHT14}, counting quantifiers and equivalence relation \cite{PH15}, betweenness relations \cite{KLPS20}, and its complexity was established on words and trees, in various scenarios, including the presence of data or counting \cite{BMS09,BDM11,CW16,CW16a,BBC16}.

\smallskip\noindent
{\bf Decidable fragments embedding FO$^{\bm{2}}$.}
Further applications of \FOt{}, e.g., in database theory, are limited, as \FOt{} and its extensions mentioned above can speak non-trivially only about
relations of arity at most two. This restriction is not present in many other decidable fragments studied because of their potential applications in computer science, e.g., the guarded fragment, \GF{} \cite{ABN98},
the unary negation fragment, \UNFO{} \cite{StC13}, the guarded negation fragment, \GNFO{} \cite{BtCS15}, or the fluted fragment, \FF{} \cite{Q69,P-HST19}. 

None of the above-mentioned fragments contains \FOt{}. Hence, a natural question arises: Is there an elegant decidable formalism which retains full expressivity of \FOt{}, but additionally, allows one to speak
non-trivially about relations of arity bigger than two? In the recent literature, several proposals have emerged in this direction. 

One of them is an idea of combining \FOt{} with \GF{}, which can be found in Kazakov's PhD thesis \cite{Kaz06}, and was also present in the work
by Bourish, Morak and Pieris \cite{BMP17}. In the end, it received a more systematic treatment in the paper by 
Rudolph and \v{S}imkus \cite{RS18}, who formally introduced the triguarded fragment, \TGF{}, which is obtained from
\GF{} by allowing quantification for subformulas with at most two free variables to be unguarded.
This yields a logic in which one can speak freely about pairs of elements and in a local, guarded way about tuples of bigger arity. 
\TGF{} turns out to be undecidable with equality, but becomes decidable when equality is forbidden. The satisfiability problem
is then \TwoExpTime{}- or \TwoNExpTime-complete, depending on whether constants are allowed in signatures \cite{RS18}; the finite model property is retained \cite{KR21}.
A variation of the idea above is the one-dimensional triguarded fragment \cite{Kie19}, still containing \FOt{}, which is decidable even in the presence of equality.

\FOt{} (or, actually, even its extension with counting quantifiers, C$^2$) combined with \FF{} was demonstrated to be decidable by Pratt-Hartmann \cite{P-H21}, but the complexity of its satisfiability problem is non-elementary,
as already \FF{} alone has non-elementary complexity
\cite{P-HST19}. Recently, Bednarczyk, Kojelis and Pratt{-}Hartmann proposed a generalization of \FF, called the {adjacent fragment} \cite{BKP23},
which also embeds \FOt{}.

Finally, probably the most canonical generalization of \FOt{} to contexts with relations of arbitrary arity, in our opinion capturing the spirit of \FOt{} more faithfully than
the formalisms discussed above, is the \emph{uniform one-dimensional fragment}, \UF{}, proposed by Hella and Kuusisto \cite{HK14}.

In \UF{} quantifiers are used in blocks, each block consists exclusively of existential or universal quantifiers, and leaves at most one variable free;
a fragment meeting this condition is called \emph{one-dimensional}. However, one-dimensionality alone is not sufficient for the decidability of the satisfiability problem \cite{HK14}, and thus another restriction is needed. In the case of \UF{}, this is the \emph{uniformity} restriction, which, roughly speaking, permits 
to use atoms with at least two variables in the same Boolean combination only if they use precisely the same set of variables (atoms using only a single variable can be used freely).
In effect, just as \FOt{} contains modal logic (or even Boolean modal logic), \UF{} contains \emph{polyadic} modal logic (even with negations of the accessibility relations, cf.~\cite{Kuu16}).
In \cite{HK14} it is shown that \UF{} without equality is decidable and has the finite model property.
In \cite{KK14} this result is strengthened by proving 
that \UF{}, even with a \emph{free use} of equalities, has the exponential model property and thus \NExpTime-complete satisfiability, exactly as \FOt{}.
Here, \emph{free use} means that the uniformity restriction does not constrain the use of equality symbol.

We note that \FOt{} naturally satisfies both restrictions imposed in \UF{}.
The requirement of one-dimensionality is fulfilled whenever a quantifier binds a variable, such as $x$, automatically leaving only $y$.
Similarly, the uniformity restriction is trivially met, since the set $\{x,y\}$ admits only singleton non-empty proper subsets, implying that non-singleton atoms must precisely mention it.

\smallskip\noindent
{\bf Undecidability of FO${^{\mathbf{3}}}$.}
We already mentioned that the three-variable fragment is undecidable. The first undecidability proof is due to Kahr, Moore and Wang \cite{KMW62}.
More precisely, they demonstrated that the class of formulas of the shape $\forall x \exists y \forall z \psi(x,y,z)$, with $\psi$ being quantifier-free and without equality, is undecidable.
Apart from that, there are two other sources of undecidability of \FOthree{} that are worth highlighting here.
The first one is that in \FOthree{} it is possible to specify that some binary relations are transitive (by the obvious formulas starting with the quantification pattern $\forall\forall\forall$). This leads us to undecidability, as already \FOt{} (and even its guarded subfragment) with two transitive relations is undecidable (see Kazakov's Phd thesis \cite{Kaz06} or \cite{Kie05}).
The second one is that the class of formulas admitting equality and of the shape $\forall\forall\exists \psi(x,y,z)$, with $\psi$ being quantifier-free,
constitutes a three-variable subfragment of the G\"odel class with equality, which is known to be undecidable from the work of Goldfarb \cite{Gol84}.
 
\smallskip\noindent
{\bf Our contribution.}
We aimed to discover attractive extensions of \UF{}, enabling the expression of even more interesting properties while preserving the decidability of satisfiability.
As we have previously discussed, both \UF{} and \FOt{} are already quite robust and significant fragments of \FO{}. Therefore, studying their extensions is a well-justified research direction in its own right.
Moreover, this can help us to better understand the limits of decidability, a long-standing theme in computer science.
We also believe that novel formalisms might be useful in practice by offering powerful extensions of modal and description logics to contexts involving relations of arity greater than two.

In the definition of \UF{}, there is a stipulation that each block of quantifiers must exclusively use either universal or existential quantifiers.
In our work we are interested in the following  questions:
Is this requirement necessary for decidability? What implications arise for the complexity of satisfiability if we permit to mix quantifiers inside blocks?

For illustrative purposes, consider the following property \emph{``every director has an actor who played in all his famous movies''}, which can be expressed in \FO{} as follows:
\begin{align}\label{ex1}
  \forall d \exists a \forall m \big[\big(\text{\em Director}(d) \wedge \text{\em FamousMovie}(m)\big) \rightarrow \big(\text{\em Actor}(a) \wedge \text{\em PlayedIn}(a,m,d)\big)\big].
\end{align}
As one may see, despite the fact that this formula satisfies the uniformity condition, it does \emph{not} belong to \UF{}, as the subformula starting with ``$\forall m$'' has two free variables: $d$ and $a$. However, if we were able to mix quantifiers within blocks, allowing us to treat ``$\forall d \exists a \forall m$'' as a single block, then the one-dimensionality restriction would not be violated.

Another natural, although more complicated, property, also not directly expressible in \UF{}, is \emph{``for every Polish team there is a referee who refereed matches of this team against every Italian team''}, which can be translated into \FO{} as follows:
\begin{align}\nonumber
  \forall p & \big[\text{\em PolishTeam}(p) \rightarrow \exists r \forall i \exists a \exists n \big(\text{\em Referee}(r) \wedge \text{\em Arena}(a) \\
	\label{ex2}
  & \wedge \text{\em MatchNumber}(n) \wedge [\text{\em ItalianTeam}(i) \rightarrow \text{\em PlayedMatch}(p,i,r,a,n)] \big) \big].
\end{align}
Once again, in this formula, if quantifiers could be mixed, we would have two blocks: ``$\forall p$'' and ``$\exists r \forall i \exists a \exists n$''. Notice that the subformula beginning with the second block has only one free variable, $p$, and hence obeys the restriction of one-dimensionality.

Motivated by the examples above, we introduce an expressive variation of \UF{} allowing us to mix both types of quantifiers within blocks: the \emph{alternating uniform one-dimensional fragment}, \AUF{}. 

We remark here that  the definition of  \UF{}/\AUF{}, even though in general lead to a different formalisms, show some similarities with the concept of \emph{binding form} \cite{MP15,BM17}. We, however, do not study this issue in more detail, noting here only that logics defined via binding forms have lower computational complexity, fitting into the polynomial hierarchy.

Unfortunately, we could not determinate whether the whole \AUF{} has the finite model property, or even if its satisfiability problem is decidable.
However, we successfully managed to accomplish this task for its two non-trivial subfragments, denoted \AUFm{} and \AUFthree{}, both without equality.
\AUFm{}, called the \emph{diminished alternating uniform one-dimensional fragment}, is obtained from \AUF{} by restricting the shape of quantifier blocks: each block either consists solely of universal quantifiers or it ends with the existential one; this idea is inspired by the definition of the Maslov class $\bar{\text{K}}$ \cite{Mas71}. \AUFthree{} is just the restriction of \AUF{} to its three-variable subfragment. 

Our main technical result is that both these subfragments, \AUFm{} and \AUFthree{}, have the exponential model property, that is, their every satisfiable formula has a model of size bounded exponentially in its length. This immediately yields the decidability and \NExpTime-completeness of their satisfiability problems.
Worth mentioning here is that the result for \AUFm{} can be also obtained via a simple embedding in the class of conjunctions of Maslov class $\bar{\text{K}}$ sentences (see Section~\ref{s:aufmany}),
whose satisfiability was shown to be \NExpTime-complete in our recent work \cite{FKM24}. 
The case of $\bar{\text{K}}$/$\bar{\text{DK}}$ is, however, much more involved than the case of \AUFm{}, and in this paper 
we present a simpler, direct approach for the latter.

After we analyse $\AUFthree{}$, we take the occasion and make a step towards a better understanding of the undecidability
of \FOthree{}. For this purpose, we introduce a rich, decidable subfragment of \FOthree{}, containing \AUFthree, and hence also \FOt{} (without equality).
Roughly speaking, our new fragment, \FOthreem{}, is defined by limiting (but not forbidding!) the use of two ``dangerous'' quantifier
patterns: $\forall\exists\forall$ and $\forall\forall\forall$, identified in the paragraph on the undecidability of \FOthree{}.
As we restrict ourselves to the case without equality, the third pattern $\forall\forall\exists$ does not lead directly to undecidability, and hence we do not consider it as ``dangerous''.

\FOthreem{}, in addition to what is expressible in \AUFthree{}, allows us to specify new properties not being expressible in other discussed fragments.
An example of such a property is ``\emph{every married couple has a child}'', with its natural translation into \FO{}:
\begin{align} \label{ex3}
  \forall h \forall w \big[\text{\em Married}(h,w) \rightarrow \exists c \big( \text{\em ChildOf}(c,h) \wedge \text{\em ChildOf}(c,w)\big)\big].
\end{align}
We stress that the subformula starting with ``$\exists c$'' is neither one-dimensional nor uniform. However, this example is contained in \FOthreem{}.
Hence, indeed, \FOthreem{} is more expressive than \FOt{}, even over signatures built out only of unary and binary relation symbols,
as demonstrated by the above formula (we remark that over such signatures \AUFthree{} syntactically collapses to \FOt{}).
Therefore, we believe that \FOthreem{} can serve as a foundational framework for an interesting specification language, distinct from other proposals, giving a lot of freedom for speaking about relations of arity at most three.

\smallskip
\noindent
{\bf Organization of the paper.} The organization of this paper is as follows.

In Section~\ref{s:preliminaries}, we introduce our notation and terminology, formally define the relevant fragments of first-order logic, and present a normal form for \AUF{}.

In Section~\ref{s:aufmany}, we establish that \AUFm{} possesses the exponential model property.
We provide two proofs of the existence of exponential size models of satisfiable \AUFm{}-sentences:
the first one is constructive and involves a use of Hall's marriage theorem, the second one goes via the probabilistic method.
 
In Section~\ref{s:auf}, we show that \AUFthree{} has the exponential model property.
We first solve its subclass in which we focus mainly on the quantifier prefix $\forall\exists\forall$, and then extend the construction to capture full \AUFthree{}. 

In Section~\ref{s:fothreem}, we study the decidable extension \FOthreem{} of \AUFthree{}.
This section has two independent parts. In Subsection~\ref{ss:complexity}, we establish the complexity of \FOthreem{} by
showing that for a given satisfiable \FOthreem{}-sentence there is an exponentially bounded structure which can be unravelled into an infinite model, and thus may serve as a \emph{witness for satisfiability}.
In Subsection~\ref{ss:fmp}, we combine classical style constructions from Section~\ref{s:auf} with the probabilistic method to obtain the finite, exponential model property for \FOthreem{}.

In Section~\ref{s:conc}, we conclude our work by providing comments on the equality, and by discussing possible directions for further research on uniform one-dimensional fragments.

With such an organization, certain sections of this paper may appear redundant.
Indeed, in Section~\ref{s:aufmany}, we provide two different proofs of the exponential model property for \AUFm{}.
Further, overlap is also present in the part of our paper concerning three-variable logic,
as the complexity results of \FOthreem{} from Subsection~\ref{ss:complexity} are also implied by the probabilistic proof of the exponential model property from Subsection~\ref{ss:fmp}.
Finally, since \FOthreem{} subsumes \AUFthree{}, and both of them have the same \NExpTime{} complexity, results from Section~\ref{s:fothreem} cover those from Section~\ref{s:auf}.

However, in our opinion these redundancies are well-justified.
In the case of \AUFm{}, our proofs utilize two distinct views on the problem.
The classical construction highlights the combinatorial structure of models (solved via Hall's marriage theorem),
while the probabilistic proof reveals the quantitative results (that is almost every structure coming from some ``nice'' family of structures is a model).
For a similar aim of gaining deeper insight, this time into the reasons for the decidability of \FOthreem{},
we decided to retain Subsection~\ref{ss:complexity} with an alternative approach for identifying the complexity.
On the contrary, the overlap between \AUFthree{} and \FOthreem{} is mostly to help the reader to understand our ideas in a simpler setting first (as the proofs of the latter are rather technically involved).
Moreover, this allows us to obtain modularization of the proof for \FOthreem{}, as we use some crucial part from the proof for \AUFthree{} as a black box.

\section{Preliminaries} \label{s:preliminaries}

\subsection{Fragments of logic} \label{s:fragments}

We assume that the reader is familiar with the syntax of first-order logic.
Throughout the remaining, we work with signatures consisting solely of relational symbols, thus we exclude function and constant symbols.
We also restrict ourselves to equality-free first-order logic until the final discussion in Section~\ref{s:conc}.

We briefly recall that the \emph{satisfiability problem} for a fragment of first-order logic asks whether a sentence from it has a model.
We say that a fragment has the \emph{finite model property} if its every satisfiable sentence has a finite model.
Further, we speak about the \emph{exponential model property} if its every satisfiable sentence $\phi$ has a model bounded by $2^{p(|\phi|)}$, with $p(x)$ being a polynomial, and $|\phi|$ being the length of $\phi$, measured in any reasonable fashion. On the contrary, the size of a model is just the size of its domain.

\smallskip
\noindent
{\bf AUF$_{\rm{\mathbf{1}}}$.}
Now, we define the \emph{alternating uniform one-dimensional fragment}, \AUF{}, playing a central role in our investigations.
Informally speaking, in this fragment (i) blocks of quantifiers leave at most one variable free, and 
(ii) atoms with at least two variables can be used in the same Boolean combination only if they use
precisely the same set of variables (atoms with one variable can be used freely). Let us turn to a formal definition.

We define an atom as a $V$-\emph{atom} if its set of variables is exactly $V$.
Then the set of \AUF{}-formulas is defined as the smallest set such that:
\begin{enumerate}[label=(\roman*), align=left] 
\item Every atom using at most one variable is in \AUF{}.
\item \AUF{} is closed under Boolean combinations.
\item Let $V$ be a set of variables, and $v_1, \ldots, v_k$ be one of its enumerations.
  If $\psi(v_1,\dots,v_k)$ is a Boolean combination of $V$-atoms and \AUF{}-formulas with free variables in $V$,
  then $\Qfr_1 v_1, \ldots, \Qfr_k v_k \psi(v_1,\dots,v_k)$ and $\Qfr_2 v_2, \ldots, \Qfr_k v_k \psi(v_1,\dots,v_k)$, where the $\Qfr_i$'s are quantifiers, both belong to \AUF{}.  
\end{enumerate}
\noindent 
For a clearer understanding of point (iii) in the above definition, consider the following example (here we focus only on syntactic issues; this formula is not very meaningful):
\begin{align}\nonumber\label{ex4}
  \exists x \forall y \exists z
  & \big[ T(x,y,x,y,z) \vee P(x) \vee S(x,x) \vee \forall x \exists z \big(\neg R(x,y,z)\big) \\
  & \vee \forall x\forall y \big(\neg S(y,x) \wedge P(x) \big) \vee \exists x \exists z \big(R(z,y,x)\big) \big].
\end{align}
In this example: $P(x)$ and $S(x,x)$ are $\{x\}$-atoms, which are by point (i) also \AUF{}-subformulas, and $T(x,y,x,y,z)$ is an $\{x,y,z\}$-atom.
We have also the following \AUF{}-subformulas with quantifiers: $\forall x \exists z \big(\neg R(x,y,z)\big)$ and $\exists x \exists z \big(R(z,y,x)\big)$, which both leave one variable free, $y$; and $\forall x\forall y \big(S(y,x) \wedge P(x) \big)$, which is a closed subformula.
We obtain the whole formula by applying rule (iii), with the quantifier sequence $\Qfr_1,\Qfr_2,\Qfr_3$ being ``$\exists\forall\exists$'', to the disjunction of the mentioned \AUF{}-subformulas and a $\{x,y,z\}$-atom $T(x,y,x,y,z)$.
Notice that the atom $T(x,y,x,y,z)$ cannot be replaced in this example by, say, an atom $T(x,y,x,y,x)$, as it does not use precisely all the variables $\{x,y,z\}$.
The reader can find two other examples of \AUF{}-formulas in Introduction: (\ref{ex1}) and (\ref{ex2}).

After becoming acquainted with the definition of \AUF{}, we proceed to discuss its relevant subfragments.
First, we remark that \AUF{} is defined, mostly for the clarity of presentation, \emph{not} as an extension of \UF{} originally defined in \cite{HK14},
but rather as an extension of the \emph{strongly} uniform one-dimensional fragment, \sUF{}, introduced in \cite{KK15} and being a subfragment of \UF{}.
Formally, \sUF{} is obtained by modifying point (iii) of the definition of \AUF{}: we are obligated to use the quantifiers of the same type for the $\Qfr_i$'s.
Here is an example of a formula in \sUF{} (again, only syntax matters):
\begin{align}\label{ex5}
  \forall x \forall y \forall z \big[P(x) \vee R(y,y,y) \vee R(z,y,x) \vee \exists y \exists z \big(\neg R(y,z,x) \wedge \neg R(x,y,z) \wedge P(y)\big) \big].
\end{align}
In this case, we have two blocks: universal one, ``$\forall x\forall y\forall z$'', and existential one, ``$\exists y\exists z$''.

For interested readers, we explain the original \emph{uniform one-dimensional fragment}, \UF{}: it is defined similarly to \sUF{} --- every block must use the same type of quantifiers, but, in point (iii) of the definition,
the non-unary atoms do \emph{not} have to use the whole set $\{v_1, \ldots, v_k\}$ of variables, but rather all those atoms use the same subset of this set,
e.g., in (\ref{ex5}), we can replace $R(y,z,x)$ and $R(x,y,z)$ with, say, $R(x,z,z)$ and $R(x,x,z)$, so they both use the same subset of variables $\{x,z\}$ in the subformula beginning with ``$\exists y\exists z$'' (see \cite{KK15} for further discussion on variations of \UF{}).
Our approach could perhaps be adapted to capture the full \UF{}, but the technical details would then become awkward.

Another subfragment, which we consider, is \AUFm{}: it consists of \AUF{}-formulas in negation normal form
which can be generated if we restrict the point (iii) of the definition of \AUF{} to allow us to use only \emph{diminished} sequences for the $\Qfr_i$'s.
We say that a sequence of quantifiers is \emph{diminished} if it is either purely universal or ends with the existential quantifier.
Notice that (\ref{ex4}) is a valid \AUFm{}-formula, as its every maximal block of quantifiers is diminished: ``$\exists\forall\exists$'', ``$\forall\exists$'', ``$\exists\exists$'' end with the existential quantifier, and ``$\forall\forall$'' is a purely universal block.
Here is another example of an \AUFm{}-formula:
\begin{align}
\forall x \forall y \forall z \big[ R(x,y,z) \vee \exists t T(x,t) \wedge \exists t T(y,t) \wedge \exists t T(z,t) \big].
\end{align}
On the contrary, (\ref{ex1}) is \emph{not} an \AUFm{}-formula, as it begins with the quantifier block: ``$\forall\exists\forall$''.

The last subfragment of our interest is \AUFthree{}: the three-variable subfragment of \AUF{}.
Its example formula is  (\ref{ex4}), as it uses three-variables: $x,y,z$. Notice that we can reintroduce variables freely.

\smallskip
\noindent
{\bf FO$^{\mathbf{3}}_{\mathbf{-}}$.}
Finally, we leave the realm of \AUF{} and consider a broad subfragment of \FOthree{}:
\FOthreem{}, containing full \FOt{} (without equality), and reaching into the area of \FOthree{} even further than \AUFthree{}.
Its definition may seem very technical. However, our motivations become clear in Section~\ref{s:fothreem}, where we introduce its normal form.
To avoid dealing with some distracting details, we restrict our attention to formulas in negation normal form.
Formally, the set of \FOthreem{}-formulas is the smallest set of formulas over variables $x,y,z$ such that:
\begin{enumerate}[label=(\roman*), align=left]
\item Every literal using at most one variable is in \FOthreem{}.
\item \FOthreem{} is closed under conjunction and disjunction.
\item Let $v$ be a variable. If $\psi$ is a positive Boolean combination of \FOthreem{}-formulas and literals, then $\exists v \psi$ belongs to \FOthreem{}.
\item Let $v,v'$ be distinct variables. If $\psi$ is a positive Boolean combination of \FOthreem{}-formulas with
  free variables in $\{v, v'\}$ and literals using the variables $v$ and $v'$, then $\forall v \psi$ is in \FOthreem{}.
\item Let $v,v'$ be any variables.
  If $\psi$ is a positive Boolean combination of \FOthreem{}-formulas with at most one free variable and literals using precisely all of $x,y,z$,
  then $\exists v \forall v' \psi$ and $\forall v \forall v' \psi$ both belong to \FOthreem{}.  
\end{enumerate}
\noindent 
We remark that in the above definition $v$ and $v'$ always represent variables from the set $\{x, y, z\}$.
The main restriction on \FOthree{} formulas, which we impose in (v), is limiting the use of quantifier patterns ending with the universal quantifier and binding subformulas with three free variables.
On the contrary, existential quantification in (iii) and universal quantification for subformulas with two free variables in (iv) can be used quite freely.

It is readily verified that the negation normal form of any sentence in \FOt{}, or in \AUFthree{}, is indeed in \FOthreem{}. 
For example, to show that $\forall x \forall y \neg R(x,y)$ is in $\FOthreem$, we first use rule (iv) (with $v:=y$ and $v':=x$) to generate $\forall y \neg R(x,y)$,
and then again use rule (iv) (with $v:=x$ and $v':=y$). To show that  $\forall x \exists y \forall z P(x,y,z)$ is in \FOthreem{}, we first generate 
$\exists y \forall z P(x,y,z)$ using rule (v) (with $v:=y$, $v':=z$), and then use rule (iv) (with $v:=x$ and arbitrary $v'$).\\ 
A more complicated example of a formula in \FOthreem{} is left for the reader to contemplate:\vspace*{0.5\baselineskip}
\begin{align}\forall x \forall y \big[\neg R(x,y) \vee \exists z \big(S(x,z) \wedge S(y,z) \wedge [\neg T(x,z,y) \vee \forall x S(x,z) \vee \exists x \forall y (T(z,y,x) {\wedge} P(y))]\big)\big].\notag 
\end{align}
\\[0.5\baselineskip]
Additionally, two simpler examples are present in the Introduction (see (\ref{ex1}) and (\ref{ex3})).

We finish this subsection with a comment on the non-symmetric treatment of universal and existential quantifiers in \AUFm{} and \FOthreem{},
which implies that these formalisms are not closed under negation.
This is in contrast to, e.g., \FOt{}, \UF{} and \AUF{}, but similar to some other known decidable classes of first-order formulas, like the prefix classes or the unary negation fragment, \UNFO{}, \cite{StC13}.

\subsection{Notation, outer-structures and outer-types}
We refer to structures using Fraktur capital letters, and to their domains using the corresponding Roman capitals.
If $\str{A}$ is a structure and $B$ is a subset of $A$, then we denote by $\str{A} \restr B$ the restriction of $\str{A}$ to its subdomain~$B$. 

We usually use $a, b, \ldots$ to denote elements from the domains of structures, and $x$, $y$, $\ldots$ for variables, all of these possibly with some decorations.
For a tuple of elements $\bar{x}$, we write $\psi(\bar{x})$ to denote that all the free variables of $\psi$ are in $\bar{x}$.

In the context of  uniform fragments, it is convenient to speak about some partially defined (sub)structures which we will call {outer-(sub)structures}.
An \emph{outer-structure} over a signature $\sigma$ consists of its domain $A$ and a function specifying the truth-value of every fact
$P(\bar{a})$, for $P \in  \sigma$ and a tuple $\bar{a}$ of elements of $A$ of length equal to the arity of $P$, such that 
$\bar{a}$ either contains all the elements of $A$ or just one of them. The truth values of all the other facts remain unspecified.
Every structure naturally induces an outer-structure. Also, given a structure $\str{A}$ and its subdomain $B$ we denote by
$\str{A} \restr_* B$ the \emph{outer-substructure} of $\str{A}$ with domain $B$, that is the induced outer-structure of $\str{A} \restr B$.

An (atomic) $1$-{\em type} over a signature $\sigma$ is a $\sigma$-structure over the domain $\{1 \}$. A $k$-\emph{outer-type} is an outer-structure over $\{1, \ldots, k\}$.
Let $\str{A}$ be a structure and $a \in A$ an element of its domain. We denote by $\type{\str{A}}{a}$ the $1$-type \emph{realized} in $\str{A}$ by $a$, \emph{i.e.}, the $1$-type  
obtained from $\str{A} \restr \{a\}$ by renaming the element $a$ to $1$.
Let $a_1, \ldots, a_k \in A$ be pairwise distinct elements. We denote by  $\outertype{\str{A}}{a_1, \ldots, a_k}$ the $k$-outer-type \emph{realized} in $\str{A}$ by
the tuple $a_1, \ldots, a_k$, \emph{i.e.}, the $k$-outer-type obtained from $\str{A} \restr_* \{a_1, \ldots, a_k\}$ by renaming the element $a_i$ to $i$, for $i=1, \ldots, k$. 
We remark that the notions of $1$-types and $1$-outer-types coincide.

Note that to fully specify a structure over a signature containing relation symbols of arity at most $\ell$, it suffices to define its domain $A$, and, for every subset $B$ of $A$ of size at most $\ell$, the outer-type of some tuple consisting of all the elements of $B$. 

In our proofs we will need to bound the number of all possible $k$-outer-types over a given signature. For this we we state the following lemma. We skip its obvious proof.
\begin{lem}\label{l:ntyp}
  Let $\sigma$ be a purely relational finite signature containing relational symbols of 
  arity at most $\ell$, for some fixed $\ell$. Then the number of all possible $k$-outer-types, for $k \le \ell$, is an exponential function of $|\sigma|$.
  This holds in particular for the number of $1$-types.
\end{lem}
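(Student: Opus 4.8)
The plan is to notice that a $k$-outer-type carries essentially no information beyond a truth-value assignment to a fixed, finite collection of facts whose size is linear in $|\sigma|$ once $\ell$ is frozen; once this is made precise, the count is simply a power of two, and bounding the exponent gives the claim.

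First I would unpack the definitions. By definition a $k$-outer-type is an outer-structure on the domain $\{1,\dots,k\}$, and an outer-structure is fixed precisely by choosing the truth value of every fact $P(\bar{a})$ with $P \in \sigma$, $\bar{a} \in \{1,\dots,k\}^{r}$ where $r$ is the arity of $P$, and $\bar{a}$ either listing all of $1,\dots,k$ or being constant (using a single element). Call these the \emph{relevant} facts and let $N$ denote their total number over all symbols of $\sigma$. Two distinct assignments yield distinct outer-types, and conversely every $k$-outer-type arises from exactly one such assignment, so the number of $k$-outer-types equals $2^{N}$. Thus everything reduces to estimating $N$.

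Next I would bound $N$. Fix a symbol $P \in \sigma$ of arity $r \le \ell$. Its relevant tuples form a subset of $\{1,\dots,k\}^{r}$, so there are at most $k^{r} \le \ell^{\ell}$ of them, using $k \le \ell$ and $r \le \ell$; since $\ell$ is fixed this is a constant. On the other hand, the constant tuple $(1,\dots,1)$ always qualifies as relevant, so $P$ contributes at least one fact. Summing over the $|\sigma|$ symbols of $\sigma$ gives $|\sigma| \le N \le |\sigma| \cdot \ell^{\ell}$.

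Combining the two steps yields $2^{|\sigma|} \le 2^{N} \le 2^{|\sigma| \cdot \ell^{\ell}}$, which is an exponential function of $|\sigma|$ because $\ell$ is a fixed constant; specializing to $k=1$ gives the stated bound on the number of $1$-types. There is no genuine obstacle here: the only point requiring care is the identification of the relevant facts dictated by the definition of outer-structure, and because we are after an upper bound the crude count $k^{r}$ of \emph{all} tuples is already enough, so no delicate combinatorics of which tuples ``use all elements or just one'' is needed.
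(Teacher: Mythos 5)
Your proof is correct and is exactly the ``obvious'' counting argument the authors had in mind when they wrote ``We skip its obvious proof'': an outer-type is determined by a truth assignment to at most $|\sigma|\cdot \ell^{\ell}$ relevant facts, giving at most $2^{|\sigma|\cdot\ell^{\ell}}$ outer-types, which is exponential in $|\sigma|$ for fixed $\ell$. The paper provides no proof to diverge from, and your write-up fills the gap in the intended way.
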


\smallskip\noindent
\emph{The pull operation.} When defining logical structures, we will heavily use the following \emph{pull operation}. 
Let $\str{C}$ be a partially defined structure, and let $\str{A}$ be a fully defined structure.
If $c_1, \ldots, c_k$ is a sequence of {\em pairwise distinct} elements of $C$, and $a_1, \ldots, a_k$ is a sequence of {\em not necessarily pairwise distinct} elements of $A$,
then we write $\str{C}\restr \{c_1, \ldots, c_k\} \leftarrow pull(\str{A}, c_1 \leftarrow a_1, \ldots, c_k \leftarrow a_k)$ to specify that $\str{C}\restr \{c_1, \ldots, c_k\}$ is copied from $\str{A}\restr \{a_1, \ldots, a_k\}$.
Formally, for every relation symbol $R$, and every sequence of indices $i_1, \ldots, i_\ell$, where $\ell$ is the arity of $R$,
we set $\str{C} \models R(c_{i_1}, \ldots, c_{i_\ell})$ iff $\str{A} \models R(a_{i_1}, \ldots, a_{i_\ell})$.

We also introduce a variant of the pull operation for specifying outer-substructures.
Under the assumptions as stated above, we write $\str{C}\restr_* \{c_1, \ldots, c_k\} \leftarrow pull^*(\str{A}, c_1 \leftarrow a_1, \ldots, c_k \leftarrow a_k)$ to mean that only the outer-substructure on $\{c_1, \ldots, c_k\}$ is being defined. 

We will always use the pull operation in such contexts that no conflicts with previously defined parts of $\str{C}$ will arise.
In particular, the $1$-types of the $c_i$'s will be always defined before any use of this pull operation, and further these $1$-types must agree, \emph{i.e.}, for all $i = 1,\dots,k$, $\type{\str{C}}{c_i}=\type{\str{A}}{a_i}$.
The following fact easily follows from the definition.

\begin{lem}\label{l:preserving}
Let $\str{A}$ and $\str{C}$ be structures, let $a_1, \ldots, a_k$ be a sequence of elements of $A$,
and let $c_1, \ldots, c_k$ be a sequence of distinct elements of $C$. If the substructure $\str{C} \restr \{c_1, \ldots, c_k\}$
was defined by $\str{C}\restr \{c_1, \ldots, c_k\} \leftarrow pull(\str{A}, c_1 \leftarrow a_1, \ldots, c_k \leftarrow a_k)$ 
(resp.~$\str{C}\restr_* \{c_1, \ldots, c_k\} \leftarrow pull^*(\str{A}, c_1 \leftarrow a_1, \ldots, c_k \leftarrow a_k)$),
then, for any first-order quantifier-free formula without equality $\psi(x_1, \ldots, x_l)$ (resp.~build out of
atoms which use all the variables $x_1, \ldots, x_l$ or just one of them), and any sequence of indices $i_1, \ldots, i_l$, we have $\str{A} \models \psi(a_{i_1}, \ldots, a_{i_l}) \text{ iff } \str{C} \models \psi(c_{i_1}, \ldots, c_{i_l})$. 
\end{lem}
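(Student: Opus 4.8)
The plan is to argue by structural induction on the quantifier-free formula $\psi$, establishing the biconditional $\str{A} \models \psi(a_{i_1}, \ldots, a_{i_l})$ iff $\str{C} \models \psi(c_{i_1}, \ldots, c_{i_l})$ simultaneously for every index sequence $i_1, \ldots, i_l$ (with each $i_p \in \{1, \ldots, k\}$). Because $\psi$ carries no quantifiers and no equality, only two kinds of cases arise: the atomic base case, and the Boolean connectives $\neg, \wedge, \vee$ in the inductive step. I would prove the full-pull statement and the $pull^*$ statement in parallel, since they differ only in the base case.

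The heart of the argument is the atomic base case. An atom $R(x_{j_1}, \ldots, x_{j_m})$ instantiates to $R(c_{i_{j_1}}, \ldots, c_{i_{j_m}})$ in $\str{C}$ and to $R(a_{i_{j_1}}, \ldots, a_{i_{j_m}})$ in $\str{A}$. For the full pull the equivalence is immediate: writing $t_p := i_{j_p}$, each $t_p$ lies in $\{1, \ldots, k\}$, so the defining clause of the pull operation yields $\str{C} \models R(c_{t_1}, \ldots, c_{t_m})$ iff $\str{A} \models R(a_{t_1}, \ldots, a_{t_m})$, which is exactly what is required. For the $pull^*$ variant I must additionally check that the instantiated fact lies in the part of $\str{C} \restr_* \{c_1, \ldots, c_k\}$ that has actually been defined, namely facts whose arguments exhaust $\{c_1, \ldots, c_k\}$ or consist of a single element. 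This is precisely the role of the hypothesis on the shape of the atoms: a one-variable atom instantiates to a fact over the single element $c_{i_p}$, while an atom using all of $x_1, \ldots, x_l$ instantiates to a fact whose set of arguments is the full image $\{c_{i_1}, \ldots, c_{i_l}\}$ of the substitution, which in the situations in which $pull^*$ is applied coincides with $\{c_1, \ldots, c_k\}$. In both subcases the fact is defined, and the same defining clause applies.

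The inductive step is routine and identical for both variants: the biconditional commutes with $\neg$, $\wedge$, and $\vee$, and the atom-shape restriction used in the $pull^*$ case is inherited by subformulas, so the induction hypothesis applies to each immediate subformula. The only genuine obstacle is the bookkeeping just described in the $pull^*$ base case, that is, verifying that the atom-shape restriction forces every instantiated atom into the defined fragment of the outer-substructure. Once this is observed, the lemma is simply a matter of reading off the defining clause of the pull operation; the absence of equality is what spares us any case analysis on whether distinct variables happen to be mapped to the same element.
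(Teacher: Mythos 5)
Your proof is correct and is precisely the routine structural induction the paper has in mind when it states this lemma without proof (``easily follows from the definition''): the atomic base case reads off the defining clause of the pull operation, and the Boolean step is immediate. The only point of substance is the one you flag in the $pull^*$ base case --- checking that the atom-shape restriction forces every instantiated atom into the defined fragment of the outer-substructure --- and you resolve it correctly by observing that in every application the index sequence is such that a multi-variable atom instantiates to a fact whose argument set is all of $\{c_1,\ldots,c_k\}$ (strictly, the lemma should be read with this proviso, as its uses in the paper always satisfy it).
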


\subsection{Normal form of AUF$_{\mathbf{1}}$.} \label{s:normalform}
We introduce a normal form of \AUF{}-sentences, inspired by Scott's normal form for \FOt{} (cf.~\cite{Sco62,GKV97}),
together with an efficient procedure for converting an arbitrary \AUF{}-sentence into equisatisfiable one being in this normal form.

We say that an \AUF{}-sentence is \emph{simple} if its shape is $\Qfr_1v_1 \ldots \Qfr_kv_k \psi(v_1,\ldots, v_k)$,
with the $\Qfr_i$'s being quantifiers, and $\psi$ being a quantifier-free formula with variables in $\{v_1,\dots,v_k\}$.
Additionally, we require that all the $v_i$'s are distinct. We remark that we permit here the empty sequence of quantifiers, with $\psi$ being a Boolean combination of $0$-ary predicates.
Finally, an \AUF{}-formula is in \emph{normal form} if it is a conjunction of the so-defined simple formulas.

Now, we describe our translation procedure.
The general idea behind it is to iteratively replace subformulas with fresh $0$- and $1$-ary symbols, and then axiomatise them.
More precisely, let $\phi$ be an \AUF{}-sentence.
We assume that it is in negation normal form, as we can use standard procedures for this.
We apply the following transformation step in loop, until $\phi$ reaches its normal form.
In each step, select $\psi$ as one of its innermost \AUF{}-subformulas beginning with a quantifier. There are two cases:
\begin{itemize}
  \item If $\psi$ has a free variable, \emph{i.e.}, its shape is $\Qfr_2 v_2 \ldots \Qfr_k v_k \psi_0(v_1,\ldots, v_k)$, with $\psi_0$ being quantifier-free,
add the conjunct, $\mu_\psi$, (partially) axiomatising a fresh unary symbol $P_\psi$:
\begin{align}\nonumber
  \forall v_1 \Qfr_2 v_2 \ldots \Qfr_k v_k \big[\neg P_\psi(v_1) \vee \psi_0(v_1, \ldots, v_k)\big],
\end{align}
and replace then $\psi$ by $P_\psi(v_1)$ in $\phi$; formally $\phi$ is now $\phi[P_\psi(v_1)/\psi] \wedge \mu_{\psi}$.
  \item Otherwise, $\psi$ is a proper subsentence, \emph{i.e.}, its shape is $\Qfr_1 v_1 \ldots \Qfr_k v_k \psi_0(v_1,\ldots, v_k)$, with $\psi_0$ being quantifier-free,
add the conjunct, $\nu_\psi$, (partially) axiomatising a fresh nullary symbol $E_\psi$:
\begin{align}\nonumber
  \Qfr_1v_1 \ldots \Qfr_kv_k \big[\neg E_\psi \vee \psi_0(v_1,\ldots, v_k)\big],
\end{align}
and replace then $\psi$ by $E_\psi$ in $\phi$; formally $\phi$ is now $\phi[E_\psi/\psi] \wedge \nu_{\psi}$.
\end{itemize}
\noindent 
It is readily verified that this procedure runs in polynomial time, and produces a normal form \AUF{}-sentence, possibly over a signature extended by fresh $0$- and $1$-ary symbols.
Moreover, one can verify that our reduction preserves both \AUFm{} and \AUFthree{}:
if the input formula is in \AUFm{} (\AUFthree{}), then output formula remains in \AUFm{} (\AUFthree{}), respectively.
It is slightly less trivial to demonstrate that our algorithm indeed produces an equisatisfiable sentence. This is proven in the following lemma.
\begin{lem}\label{l:nf}
  Let $\phi$ be an \AUF{}-sentence, and let $\phi_{\rm{NF}}$ be its normal form obtain via our algorithm.
  Then every model of $\phi$ can be expanded to a model of $\phi_{\rm{NF}}$, and every model of $\phi_{\rm{NF}}$ is a model of $\phi$.
\end{lem}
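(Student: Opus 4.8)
The plan is to prove the statement by induction on the number of transformation steps performed by the algorithm, reducing everything to the correctness of a \emph{single} step. Writing $\phi = \phi_0 \leadsto \phi_1 \leadsto \cdots \leadsto \phi_n = \phi_{\mathrm{NF}}$ for the sequence of intermediate \AUF{}-sentences (each again in negation normal form), it suffices to establish, for every consecutive pair $\phi_{i-1} \leadsto \phi_i$, that (a) every model of $\phi_{i-1}$ can be expanded, by interpreting the single fresh symbol introduced in this step, to a model of $\phi_i$, and (b) every model of $\phi_i$ is already a model of $\phi_{i-1}$. Composing the expansions from (a) yields the first assertion of the lemma, and composing the reducts from (b) yields the second.

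For the forward direction (a) I would simply interpret the fresh symbol so as to be \emph{equivalent} to the replaced subformula $\psi$. In the case where $\psi = \Qfr_2 v_2 \cdots \Qfr_k v_k \psi_0(v_1,\dots,v_k)$ has the free variable $v_1$, set $P_\psi^{\Mf} := \{a \in M : \Mf \models \psi[v_1 \mapsto a]\}$; in the case where $\psi$ is a subsentence, declare $E_\psi$ true iff $\Mf \models \psi$. With this choice the axiom $\mu_\psi$ (resp.\ $\nu_\psi$) holds trivially, since it only asserts the implication $P_\psi(v_1) \to \psi$ (resp.\ $E_\psi \to \psi$), which here even holds as an equivalence; and because $P_\psi(v_1)$ (resp.\ $E_\psi$) is equivalent to $\psi$ in the expanded structure, the substituted sentence $\phi_{i-1}[P_\psi(v_1)/\psi]$ (resp.\ $\phi_{i-1}[E_\psi/\psi]$) retains the truth value of $\phi_{i-1}$, so $\phi_i$ holds.

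The backward direction (b) is the crux, and it rests squarely on the assumption that all intermediate sentences are in negation normal form. Since $\psi$ begins with a quantifier it is not an atom, so in negation normal form the only connectives above the selected occurrence are $\wedge$, $\vee$ and quantifiers; hence that occurrence sits in a \emph{monotone} (positive) position of $\phi_{i-1}$. The axiom supplies exactly the one-directional implication we need: $\mu_\psi$ entails $\forall v_1\,(P_\psi(v_1) \to \psi)$, and $\nu_\psi$ entails $E_\psi \to \psi$ — indeed, when $E_\psi$ is true the disjunct $\neg E_\psi \vee \psi_0$ collapses to $\psi_0$ under the block $\Qfr_1 v_1 \cdots \Qfr_k v_k$, giving back $\psi$. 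Thus in any model of $\phi_i$ the formula substituted for $\psi$ entails $\psi$, so enlarging that occurrence back to $\psi$ — an operation that preserves truth in a monotone position — shows $\phi_{i-1}$ holds, that is, $\phi_i \models \phi_{i-1}$.

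I expect the only genuine obstacle to be making this monotone-replacement step fully rigorous in the presence of the free variable $v_1$ and of outer quantifiers that may bind it: one must phrase the claim as ``under every assignment, if the replacing formula entails $\psi$ pointwise then the whole sentence is monotone in that subformula,'' rather than treating $\psi$ as a closed object, and handle the (possibly several) syntactic occurrences uniformly. Everything else is routine: the preservation of negation normal form by each step, the freshness of the introduced $0$- and $1$-ary symbols guaranteeing that the expansions in (a) do not interfere with one another, and the termination of the loop, whose polynomial running time has already been noted.
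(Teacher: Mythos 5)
Your proposal is correct and follows essentially the same route as the paper's proof: induction on the transformation steps, expanding the model by interpreting the fresh symbol as exactly the set where $\psi$ holds for the forward direction, and for the backward direction using the fact that the axiom only gives the implication $P_\psi(v_1)\to\psi$ (resp.\ $E_\psi\to\psi$) together with the positivity of the replaced occurrence guaranteed by negation normal form. The paper states this more tersely but the argument is the same.
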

\begin{proof}
  Let $\phi_t$ be a formula obtained after $t$ steps of our procedure (then $\phi_0$ is just $\phi$).
  It is enough to show the equisatisfiability of $\phi_t$ and $\phi_{t+1}$ in each step $t$.
  As the two cases follow the same line of reasoning,
  we focus on the first one, \emph{i.e.}, the selected subformula $\psi$ of $\phi_t$ has a free variable $v$, and hence $\phi_{t+1}$ is $\phi_t[P_{\psi}(v)/\psi] \wedge \mu_{\psi}$.

  Suppose that $\str{A}$ is a model of $\phi_t$, then we obtain a model $\str{A}'$ of $\phi_{t+1}$ by interpreting the fresh unary symbol $P_\psi$ as the set $\{ a \in A : \str{A}\models\psi[a] \}$.
  Hence, obviously, $\str{A}'\models \mu_{\psi}$ and also $\str{A}'\models\phi_t[P_\psi(v_1)/\psi]$. 

  In the opposite direction, we assume that $\str{A}'$ is a model of $\phi_{t+1}$.
  It may happen that the subformula $\psi(v)$ is true in more points than $P_\psi(v)$, indeed, the added conjunct $\mu_\psi$ only entails $\psi(v)$, but not vice-versa.
  However, it suffices to guarantee that $\str{A}'\models\phi_t$, as $\phi_t$ is in negation normal form, and thus $\psi$ does not appear in the scope of the negation symbol.
\end{proof}
\noindent 
An artefact of our algorithm is that it may introduce fresh $0$-ary symbols, which are a relatively uncommon feature in \FO{}.
However, in the context of satisfiability, we can further simplify our normal form formulas by eliminating these inconvenient symbols:
it is enough to nondeterministically guess the truth values of $0$-ary predicates, replace them by $\top$ or $\bot$, in accordance with the guess, and then simplify the formulas accordingly.
Hence, we have the following corollary:
\begin{cor}\label{l:nf2}
  If the subfragment of \AUFm{} (\AUFthree{}) consisting of normal form sentences without $0$-ary relation symbols  possesses the finite, exponential model property, then the whole \AUFm{} (\AUFthree{}) also possesses it.
  Consequently, the \NExpTime{} upper bound on the complexity of satisfiability of normal form sentences is also transferred.
\end{cor}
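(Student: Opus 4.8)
The plan is to chain the two size-preserving, fragment-preserving transformations that immediately precede the statement: the normal-form translation of Lemma~\ref{l:nf}, and the elimination of $0$-ary symbols sketched just above it. Throughout, the key observation is that none of these transformations changes the \emph{domain} of a model, so minimal model sizes are preserved exactly, while formula length grows only polynomially.

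Concretely, I would start from an arbitrary satisfiable \AUFm{} (resp.\ \AUFthree{}) sentence $\phi$ and apply the normal-form procedure to obtain $\phi_{\rm NF}$. By the remarks preceding Lemma~\ref{l:nf} this runs in polynomial time, stays inside \AUFm{} (resp.\ \AUFthree{}), and by Lemma~\ref{l:nf} is equisatisfiable with $\phi$; moreover the expansion/reduct witnessing this equisatisfiability only adds or forgets the interpretations of the fresh $0$- and $1$-ary predicates, leaving the universe untouched. Hence $\phi$ and $\phi_{\rm NF}$ have models of exactly the same sizes. Next I would remove the $0$-ary symbols $E_1,\dots,E_m$ of $\phi_{\rm NF}$: for each of the $2^m$ truth assignments $g$, substitute $\top$ or $\bot$ for each $E_i$ accordingly and simplify, yielding a normal-form sentence $\phi_{\rm NF}^g$ of the same fragment with no $0$-ary symbols and $|\phi_{\rm NF}^g| \le |\phi_{\rm NF}|$. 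Then $\phi_{\rm NF}$ is satisfiable iff some $\phi_{\rm NF}^g$ is, and again a model of $\phi_{\rm NF}^g$ becomes a model of $\phi_{\rm NF}$ of the same domain simply by interpreting each $E_i$ as prescribed by $g$ (and conversely). Thus every $\phi_{\rm NF}^g$ lies in the restricted subfragment to which the hypothesis applies.

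Now assume the restricted subfragment has the exponential model property. If $\phi$ is satisfiable, then so is some $\phi_{\rm NF}^g$, which therefore has a model of size at most $2^{p(|\phi_{\rm NF}^g|)}$. Since the normal-form translation is polynomial, $|\phi_{\rm NF}^g| \le |\phi_{\rm NF}| \le q(|\phi|)$ for a fixed polynomial $q$, so this bound is at most $2^{p(q(|\phi|))}$, exponential in $|\phi|$. Passing back along the two transformations (interpret the $E_i$ according to $g$, then take the reduct to the original signature) produces a model of $\phi$ of the very same size. Running the identical argument with ``finite'' in place of ``exponential'' yields the finite model property, so the whole fragment inherits both, as claimed.

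For the complexity claim, the exponential model property just transferred gives the \NExpTime{} upper bound by the routine guess-and-check argument: nondeterministically write down a structure of size at most $2^{p(q(|\phi|))}$ by specifying, for each small subset of its domain, an outer-type---by Lemma~\ref{l:ntyp} there are only exponentially many of these, so the whole structure fits in exponential space---and then verify $\phi$ in deterministic exponential time. (Equivalently, one computes $\phi_{\rm NF}$, nondeterministically guesses $g$, and calls the \NExpTime{} decision procedure for the restricted subfragment on $\phi_{\rm NF}^g$.) I do not expect any genuine obstacle: the argument is pure bookkeeping. The only points demanding care are confirming that both transformations keep us inside \AUFm{}/\AUFthree{} (already noted for the normal-form step, and immediate for the $\top/\bot$ substitution), and tracking that every blow-up in formula length is polynomial while the model size is unchanged, so that the final model stays exponential and the decision procedure stays within \NExpTime{}.
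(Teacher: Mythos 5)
Your proposal is correct and follows exactly the route the paper intends: the corollary is stated as an immediate consequence of the polynomial-time, fragment-preserving normal-form translation, Lemma~\ref{l:nf} (whose expansion/reduct witnesses leave the domain untouched), and the remark on eliminating $0$-ary symbols by guessing truth values and substituting $\top/\bot$. Your bookkeeping of domain preservation and polynomial formula blow-up is precisely the content the paper leaves implicit.
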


\section{Diminished alternating uniform one-dimensional fragment} \label{s:aufmany}

In this section we work on \AUFm{}. 

We first note that the decidability of the satisfiability problem for \AUFm{} follows from the results  on Maslov class 
$\bar{\text{K}}$ (the dual of the Maslov class $\text{K}$).
Sentences of  $\bar{\text{K}}$ are often presented in prenex form. They are then of the following 
shape
\begin{eqnarray}
\label{eq:maslov}
\exists y_1 \ldots \exists y_k \forall x_1 \ldots \forall x_l \Qfr_1 z_1 \ldots \Qfr_l z_m \psi,
\end{eqnarray}
where the $\Qfr_i$ are quantifiers, $\psi$ is a quantifier-free formula without equality,  
and every atom of $\psi$ satisfies one of the following conditions: (i) it contains at most one $x_i$- or $z_i$-variable, (ii) it 
contains all the $x_i$-variables and no $z_i$-variables, or (iii) it contains an existentially quantified  variable $z_j$ and no $z_i$-variables
with $i>j$.

Decidability of the satisfiability problem for $\bar{\text{K}}$ (or, actually, the equivalent validity problem for $\text{K}$) 
was shown by Maslov \cite{Mas71} by his own inverse method. Later, a few resolution-based proofs were proposed. A proof 
by Hustadt and Schmidt \cite{HS99} works also for the class $\bar{\text{DK}}$ of conjunctions of $\bar{\text{K}}$-sentences.

Now, one easily observes that every \AUFm{}-normal form conjunct is a prenex $\bar{\text{K}}$-sentence.  Indeed, every conjunct containing only universal quantifiers
is of the form (\ref{eq:maslov}) with $k=m=0$, and its every atom  satisfies either condition (i) or (ii); every conjunct whose quantifier prefix
ends with the existential
quantifier is of the form (\ref{eq:maslov}) with $k=l=0$ and its every atom satisfies (i) or (iii).

Hence any normal form \AUFm{}-sentence  belongs to 
 $\bar{\text{DK}}$, and we can infer:
 
\begin{thm}
The satisfiability problem for \AUFm{} is decidable.
\end{thm}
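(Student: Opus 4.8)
The plan is to reduce an arbitrary \AUFm{}-sentence to an equisatisfiable sentence lying in the Maslov class $\bar{\text{DK}}$ of conjunctions of $\bar{\text{K}}$-sentences, and then to invoke the already established decidability of the latter. Concretely, I would first apply the normal form translation of Section~\ref{s:normalform}: by Lemma~\ref{l:nf} together with Corollary~\ref{l:nf2}, every \AUFm{}-sentence is equisatisfiable with a conjunction of simple formulas, and, as remarked after the translation algorithm, this reduction stays within \AUFm{}. It therefore suffices to show that each simple conjunct $\Qfr_1 v_1 \ldots \Qfr_k v_k \psi$, with $\psi$ quantifier-free, is a prenex $\bar{\text{K}}$-sentence.

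The second step is a purely syntactic case analysis on the quantifier prefix which, being a single block of a diminished sentence, is either purely universal or ends with the existential quantifier. If it is purely universal, the conjunct reads $\forall v_1 \ldots \forall v_k \psi$, matching the shape (\ref{eq:maslov}) with $k=m=0$: by the uniformity restriction every atom of $\psi$ either mentions a single variable, and so satisfies condition~(i), or mentions precisely all of $v_1,\dots,v_k$, hence all the universally quantified variables and no existential one, satisfying condition~(ii). If the prefix ends with $\exists$, I read it as $\Qfr_1 z_1 \ldots \Qfr_m z_m$ with $\Qfr_m=\exists$, matching (\ref{eq:maslov}) with $k=l=0$; now a full-variable atom contains the existentially quantified variable $z_m$ and no $z_i$ with $i>m$, so it satisfies condition~(iii) with $j=m$, while single-variable atoms again fall under~(i). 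In both cases the conjunct is a prenex $\bar{\text{K}}$-sentence, so the whole normal form sentence belongs to $\bar{\text{DK}}$, and decidability follows from Maslov~\cite{Mas71} and Hustadt and Schmidt~\cite{HS99}.

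The argument is not deep, so I do not expect a genuine obstacle; rather, the single point requiring care is the bookkeeping that the syntactic restrictions of \AUFm{} line up exactly with the Maslov atom conditions~(i)--(iii). In particular one should check that the normal form translation does not spoil the diminished shape of the prefixes---prepending a leading $\forall v_1$ to a block that was purely universal or ended with $\exists$ keeps it, respectively, purely universal or $\exists$-terminated---and that the auxiliary $0$-ary symbols introduced by the translation are harmless, since they trivially satisfy condition~(i).
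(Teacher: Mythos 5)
Your proposal is correct and follows essentially the same route as the paper: convert to normal form, observe that each diminished conjunct is a prenex $\bar{\text{K}}$-sentence (purely universal prefixes matching (\ref{eq:maslov}) with $k=m=0$ via conditions (i)/(ii), and $\exists$-terminated prefixes matching it with $k=l=0$ via conditions (i)/(iii)), and conclude membership in $\bar{\text{DK}}$ whose decidability is known. The additional bookkeeping you flag --- that the normal form translation preserves diminished prefixes and that $0$-ary atoms are harmless --- is exactly what the paper relies on (via its remark that the reduction preserves \AUFm{} and via Corollary~\ref{l:nf2}).
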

\noindent 
In none of the works mentioned above the complexity of $\bar{\text{K}}$ has not been established. It has, however, eventually been
done in the recent work \cite{FKM24}. It turned out that $\bar{\text{K}}$ as well as the class $\bar{\text{DK}}$ of conjunctions of $\bar{\text{K}}$-sentences
have \NExpTime-complete satisfiability problem and finite, exponential model property. This gives that the satisfiability problem
for \AUFm{} is also \NExpTime-complete. 

The aim of this section is to present direct proofs of the exponential model property and \NExpTime-completeness of  \AUFm{}.
The constructions in \cite{FKM24} establishing the complexity of $\bar{\text{DK}}$ are very involved, and have to deal with a few problems which are
not present in the case of \AUFm{}, so we believe that the material in this section remains quite valuable.

\begin{thm} \label{t:fmp}
\AUFm{} has the exponential model property, hence its satisfiability (= finite satisfiability) problem is \NExpTime-complete.
\end{thm}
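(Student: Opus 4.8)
The plan is to prove the exponential model property directly; the \NExpTime{} upper bound on satisfiability then follows routinely by guessing a structure of exponential size and model-checking $\phi$ on it (which takes time exponential in $|\phi|$ since each conjunct quantifies a bounded number of variables). The matching lower bound is inherited from \FOt{} without equality, whose Scott normal form — a conjunction of conjuncts $\forall x\forall y\,\chi$ and $\forall x\exists y\,\chi$ — already sits inside \AUFm{}; hence satisfiability and finite satisfiability coincide and are \NExpTime{}-complete.

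By Lemma~\ref{l:nf} and Corollary~\ref{l:nf2} I may assume $\phi$ is a conjunction of simple normal-form conjuncts over a signature without $0$-ary symbols. Each conjunct is of one of two shapes: a \emph{universal} conjunct $\forall v_1\cdots\forall v_k\,\psi$, or a \emph{witness} conjunct $\forall v_1\,\Qfr_2 v_2\cdots\Qfr_k v_k\,\psi$ whose diminished prefix ends with $\exists$; in both cases uniformity guarantees that $\psi$ is a Boolean combination of single-variable atoms and of atoms mentioning all of $v_1,\ldots,v_k$. The crucial first consequence is that the truth of $\psi$ at a tuple of \emph{distinct} elements depends only on their $1$-types and on the single $k$-outer-type they realise; no joint relationship among a proper sub-tuple of two or more witnesses is ever inspected. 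I would fix a model $\str{M}\models\phi$ and read off from it, for every witness conjunct, a finite \emph{strategy} for the existential player in the model-checking game: Skolem-type data assigning, to each sequence of $1$-types chosen so far by the universal quantifiers, a $1$-type for the next existential witness, together with, at the leaves, a winning $k$-outer-type that makes $\psi$ true. Since $\psi$ sees only $1$-types and $k$-outer-types, and since every $1$-type used in our small model will be one realised in $\str{M}$, a single such outer-type per type-history suffices, and the whole strategy table has size bounded exponentially in $|\phi|$ (Lemma~\ref{l:ntyp}).

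For the construction I would take as domain a bounded, exponentially large, number of copies of each $1$-type realised in $\str{M}$, fix each copy's $1$-type, and then define all higher-arity facts by repeatedly applying the $pull$ operation from $\str{M}$ (Lemma~\ref{l:preserving}), so that every $k$-subset of the new domain receives a $k$-outer-type actually realised in $\str{M}$. This at once discharges the universal conjuncts, since $\str{M}\models\phi$ means no forbidden outer-type occurs in $\str{M}$, and the cases where some quantified variables are equated are inherited from $\str{M}$ as well; the few closed conjuncts (with prefix ending in $\exists$) are satisfied by importing a bounded witnessing configuration from $\str{M}$. The witness conjuncts are handled by replaying the strategies: for a copy $a$ one selects copies of the prescribed existential types, using for each universally played copy $c$ the response type dictated by the $1$-type of $c$, and then pulls the corresponding winning $k$-outer-type onto the set $\{a,\ldots,c,\ldots\}$. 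Keeping several copies of each $1$-type is precisely what makes enough witnesses simultaneously available; organising the consistent global assignment of witnesses to copies is where I would invoke Hall's marriage theorem, matching the witness demands of all elements and conjuncts against the supply of copies. (Alternatively one assigns copies and outer-types at random and shows by a union-bound/expectation argument that with positive probability every demand is met while only realised outer-types appear — the probabilistic variant.)

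The main obstacle, and the reason the inner universal quantifiers make \AUFm{} genuinely harder than the purely existential blocks of \sUF{}, is this simultaneity: a witness conjunct of shape $\forall\exists\forall\exists\cdots$ requires, for a \emph{single} outer witness attached to $a$, that a good response exist against \emph{every} element $c$ of the whole constructed model at once, and the winning $k$-outer-types pulled onto the various sets $\{a,\ldots,c,\ldots\}$ must be mutually consistent and must clash neither with the outer-types demanded by other conjuncts nor with those forbidden by the universal ones. Arranging the copies so that the relevant $k$-subsets stay distinct as $c$ varies, while keeping their total number exponential and ensuring every newly used witness still meets its own requirements, is the delicate bookkeeping that the Hall-theorem (respectively probabilistic) argument is designed to control.
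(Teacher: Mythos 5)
Your proposal follows essentially the same route as the paper: extract from a model a tree-shaped witnessing strategy for each existential conjunct (the paper's \emph{satisfaction forests}), take an exponentially bounded domain consisting of many copies of the realised $1$-types, define all facts by pulling realised outer-types from the original model, and use Hall's marriage theorem (or, alternatively, a probabilistic union-bound argument) to keep the label-sets of distinct branches distinct so that the pulled winning outer-types never conflict with one another or with the universal conjuncts. The only presentational difference is that the paper packages the Hall argument as an injective extension function $ext_l:[2M]^{l}\rightarrow[2M]^{l+1}$ used to choose the layer of the final (existential) witness of each branch --- which is precisely where the diminished-prefix condition is exploited --- whereas your sketch invokes Hall at the same spot but describes it more loosely as matching witness demands against the supply of copies.
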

\noindent 
In the rest of this section we prove the above theorem. Actually, we propose two different arguments for the existence of small
finite models: the first one is constructive, the second---probabilistic. In both we look at satisfaction of \AUFm{} sentences
via the concept of \emph{satisfaction forests}. We start with the constructive variant, but the reader willing to see first
perhaps a bit simpler probabilistic proof may read Section~\ref{s:satfor}, and then jump directly to Section~\ref{s:probabilistic}.

 By Corollary \ref{l:nf2} we may restrict attention to normal form sentences without $0$-ary predicates.
For convenience we split the set of the  conjuncts of  a normal form sentence $\phi$ into those whose all quantifiers are universal and those which end with the existential 
quantifier:
\begin{eqnarray} \label{eq:normal}
&&\phi = \bigwedge_{1\le i \le \mse} \phie_i
\wedge \bigwedge_{1\le i \le \msu}  \phiu_i,
\end{eqnarray}
where   $\phie_i= \Qfr_1^i x_1 \Qfr_2^i x_2 \ldots \Qfr_{k_i{-}1}^i  x_{k_i{-}1} \exists  x_{k_i} \psie_i,$
$\phiu_i=\forall x_1 \ldots x_{l_i} \psiu_i,$
for  $\Qfr_j^i \in \{ \forall, \exists \}$, $\psie_i=\psie_i(x_1, x_2, \ldots, x_{k_i})$ and $\psiu_i=\psiu_i(x_1, \ldots, x_{l_i})$.
By $M$ we denote the maximal number of variables in a conjunct of $\phi$.

\subsection{Satisfaction forests} \label{s:satfor}

In this subsection we introduce \emph{satisfaction forests}, which are auxiliary structures (partially) describing some finite models of normal form 
\AUFm{} sentences.  
We first explain how to extract a satisfaction forest from a given model $\str{B}$ of a normal form sentence $\varphi$. Then we
formally define satisfaction forests and relate their existence to the existence of finite models of normal form sentences.

\subsubsection{Extracting a satisfaction forest from a model} 
\label{s:extraction}
Let $\varphi$ be a normal form $\AUFm$ sentence and let $\str{B}$ be its model. Assume that $\varphi$ is as in (\ref{eq:normal}). The satisfaction forest will be
a collection of labelled trees, one tree for each existential conjunct of $\varphi$, showing how this conjunct is satisfied in $\str{B}$. There will be two labelling functions: $\cL_1$ will assign elements from $B$ to tree nodes (with the exception of the root, which will be assigned the special empty label), and $\cL_2$ will assign outer-substructures of $\str{B}$ to the branches of the trees.
By a \emph{branch} we mean here
a sequence of elements $d_1, \ldots, d_{k_i}$ such that $d_1$ is a child of the root, $d_{k_i}$ is a leaf, and each $d_{i+1}$ is a child of $d_i$. An initial segment of a branch will be called a \emph{partial branch}.

It is helpful to think that a satisfaction forest just represents a winning strategy of
the existential player in the standard verification game for $\str{B}$ and $\phi$.

Consider a single existential conjunct 
$\phie_i=\Qfr_1^i x_1 \Qfr_2^i x_2 \ldots \Qfr_{k_i{-}1}^i  x_{k_i{-}1} \exists  x_{k_i} \psie_i$ of $\phi$.
Its satisfaction tree $\cT_i$ is built in the following process.

Start with the root and label it by the empty label. The root forms \emph{level} $0$ of the tree. \emph{Level} $j$ (or \emph{level} $x_j$), $0 < j \le k_i$ will correspond
to the quantifier $\Qfr^i_j$ and hence to the variable $x_j$. Assume level $j-1$ has been constructed, for $0 < j  \le k_i$. For each of its nodes $d$: 
\begin{itemize} \itemsep0pt
\item If 
$\Qfr^i_j=\forall$, then for each element $b \in B$ add a child $d'$ of $d$ to $\cT_i$, and set $\cL_1(d'):=b$. Nodes added in this step are called \emph{universal nodes}.
\item If 
$\Qfr^i_j=\exists$, then let $d_1, \ldots, d_{j-1}=d$ be the sequence of non-root nodes on the partial branch ending at $d$, ordered from the child of the root towards $d$. Choose in $\str{B}$ an element $b$ such that 
$$\str{B} \models \Qfr_{j+1}^i x_{j+1} \ldots \Qfr_{k_i-1}^i x_{k_i-1}\exists x_{k_i} \psie_i (\cL_1(d_1), \ldots, \cL_1(d_{j-1}), b, x_{j+1}, \ldots, x_{k_{i}}).$$
By the choice of the labels for $d_1, \ldots, d_{j-1}$ and the fact that $\str{B} \models \phie_i$, it is clear that such an element exists. If $j<k_i$, then we call $b$ an \emph{intermediate witness} for $\phie_i$, and if $j=k_i$ we call it a \emph{final witness} for $\phie_i$. Add a single child $d'$ of $d$ to $\cT_i$, and set $\cL_1(d')=b$. This node $d'$ is called an \emph{existential node}.
\end{itemize}\noindent 
For a branch $\flat$ of the above-defined tree, 
we denote by $Set(\flat)$ the set of labels of the non-root elements of $\flat$, by $Set^-(\flat)$ the set of labels of non-root and non-leaf elements
of $\flat$ and by $Seq(\flat)$ the sequence
of the labels of the non-root elements of $\flat$, ordered from the child of the root towards the leaf.

We now define  function $\cL_2$ assigning labels to the branches of the tree. For a branch  $\flat$, we set $\cL_2(\flat)$ to be the outer-structure $\str{B} \restr_* Set(\flat)$.

To declare some properties of satisfaction forests we need the following notions.
An outer-structure $\str{H}$  is $\phiu_i$-\emph{compatible},
if for every sequence $a_1, \ldots, a_{l_i}$ of elements of $H$ such that  $\{a_1, \ldots, a_{l_i}\}=H$ 
we have $\str{H} \models \psiu_i(a_1, \ldots, a_{l_i})$.
An outer-structure is $\phiu$-\emph{compatible} if it is $\phiu_i$-compatible for every conjunct 
$\phiu_i$.
Further, a set of $1$-types $\{\alpha_1, \ldots, \alpha_k \}$ is $\phiu$-\emph{compatible} if for any $1 \le m  \le M$ (recall that  $M$ is the maximal number of
variables in a conjunct of $\phi$), any set of  distinct
elements $H=\{a_1, \ldots, a_m \}$ and any assignment $f: \{a_1, \ldots, a_m\} \rightarrow \{\alpha_1, \ldots, \alpha_k \}$
one can build a $\phiu$-compatible outer-structure on $H$ in which, for every $1 \le i \le m$, the $1$-type of  $a_i$ is $f(a_i)$.

\smallskip
We now collect some properties of the tree $\cT_i$ and its labellings for $\phie_i$ constructed as above.
\begin{enumerate}[label=(T\arabic*),leftmargin=1.5cm,align=left]
\item for $1 \le j \le k_i$, and every node $d$ from level $j-1$:
\begin{enumerate}[align=left]
	\item if $\Qfr_j^i=\forall$, then $d$ has precisely $| B |$ children, labelled by distinct elements of $B$ (recall that each of these children is called a universal node)
	\item if $\Qfr_j^i=\exists$, then $d$ has precisely one child (recall that this child is called an existential node),
	\end{enumerate}
	\item for every branch $\flat \in \cT_i$, if $Seq(\flat)=(a_1, \ldots, a_{k_i})$, we have $\cL_2(\flat) \models \psie_i(a_1, \ldots, a_{k_i})$,
	\item for every pair of branches $\flat_1, \flat_2 \in \cT_i$, for every $a \in B$ such that $a \in Set(\flat_1)$ and $a \in Set(\flat_2)$
 the $1$-types of $a$ in $\cL_2(\flat_1)$ and in $\cL_2(\flat_2)$ are identical,
	\item for every pair of branches $\flat_1, \flat_2 \in \cT_i$ such that $Set(\flat_1)=Set(\flat_2)$, we have that $\cL_2(\flat_1)  = \cL_2(\flat_2)$,
	\item for every branch $\flat \in \cT_i$, $\cL_2(\flat)$ is $\phiu$-compatible.
	\end{enumerate}
\noindent 
If $\mathcal{T}_i$ meets the above properties it is called a \emph{satisfaction tree over} $B$ for $\phie_i$.

We then collect some properties of the whole sequence of trees  $\str{T}=(\cT_1, \ldots, \cT_{\mse})$ and their labellings constructed for $\varphi$ and $\str{B}$.
\begin{enumerate}[label=(F\arabic*),leftmargin=1.5cm,align=left]
\item for every $i$, $\cT_i$ is a satisfaction tree over $B$ for $\phie_i$,
\item for every pair of branches $\flat_1 \in \cT_i, \flat_2 \in \cT_j$, $i \not=j$, for every $a \in B$
such that $a \in Set(\flat_1)$ and $a \in Set(\flat_2)$, 
	the $1$-types of $a$ in $\cL_2(\flat_1)$ and in $\cL_2(\flat_2)$ are identical,
	\item for every pair of branches $\flat_1 \in \cT_i, \flat_2 \in \cT_j$, $i \not=j$ such that $Set(\flat_1)=Set(\flat_2)$, we have that $\cL_2(\flat_1)  = \cL_2(\flat_2)$,
	\item the set of all $1$-types appearing in the  outer-structures defined as labels of  the branches of the trees in $\str{T}$ is $\phiu$-compatible.

\end{enumerate}

\noindent 
Properties (T3), (T4), (F2) and (F3) will be sometimes called the \emph{(forest) consistency conditions}. 

\begin{clm}\label{c:conds}
The sequence of trees $\str{T}=(\cT_1, \ldots, \cT_{\mse})$ and the labelling functions $\cL_1, \cL_2$, constructed as above for the structure $\str{B}$ and the sentence $\varphi$ satisfies conditions (T1)-(T5) and (F1)-(F4).
\end{clm}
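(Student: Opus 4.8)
The plan is to verify that the tree $\cT_i$ together with the labellings $\cL_1,\cL_2$, exactly as produced by the extraction procedure described above, satisfies each of the listed conditions. Since every property is a direct consequence of the construction, the proof is essentially a bookkeeping argument: for each condition, I would point to the step of the construction that guarantees it, and invoke that $\str{B}\models\phi$ where needed. I do not expect any single obstacle; the only thing requiring care is the verification of (T2) and (T5), where one must connect the local choices made along a branch to the global satisfaction of $\psie_i$ and $\psiu_i$ in $\str{B}$.

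\proof
Fix $i$ and consider the tree $\cT_i$ built for $\phie_i$.

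\emph{Conditions (T1a) and (T1b)} hold by inspection of the construction: when $\Qfr^i_j = \forall$ we add one child per element of $B$ (labelled by distinct elements), and when $\Qfr^i_j = \exists$ we add a single child. This is literally how level $j$ is formed from level $j-1$.

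\emph{Condition (T2).} Let $\flat$ be a branch with $Seq(\flat)=(a_1,\ldots,a_{k_i})$. The construction guarantees by induction on the level that the labelling satisfies the invariant
\[
  \str{B} \models \Qfr_{j+1}^i x_{j+1} \ldots \exists x_{k_i}\, \psie_i(a_1,\ldots,a_j,x_{j+1},\ldots,x_{k_i})
\]
for every $j$: at a universal node the invariant is inherited from the parent by the semantics of $\forall$, and at an existential node the witness $b$ was chosen precisely to preserve it. Taking $j=k_i$ yields $\str{B}\models\psie_i(a_1,\ldots,a_{k_i})$. Since $\cL_2(\flat)=\str{B}\restr_* Set(\flat)$ and $\psie_i$ is a Boolean combination of $Set(\flat)$-atoms and single-variable atoms, Lemma~\ref{l:preserving} (applied to the identity pull of $\str{B}$ onto its own outer-substructure) lets us transfer this to $\cL_2(\flat)\models\psie_i(a_1,\ldots,a_{k_i})$.

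\emph{Conditions (T3), (T4), (F2), (F3).} All four labels $\cL_2(\cdot)$ are outer-substructures of the \emph{single} fixed structure $\str{B}$. If $a\in Set(\flat_1)\cap Set(\flat_2)$, then the $1$-type of $a$ in each of $\cL_2(\flat_1),\cL_2(\flat_2)$ equals $\type{\str{B}}{a}$, giving (T3) and (F2). If moreover $Set(\flat_1)=Set(\flat_2)$, then both labels equal $\str{B}\restr_* Set(\flat_1)$ and hence coincide, giving (T4) and (F3); note that these arguments are insensitive to whether the two branches lie in the same tree or in different trees.

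\emph{Conditions (T5) and (F4).} Each $\cL_2(\flat)=\str{B}\restr_* Set(\flat)$ is an outer-substructure of $\str{B}$. For any enumeration $a_1,\ldots,a_{l_j}$ of $Set(\flat)$ we have $\str{B}\models\psiu_j(a_1,\ldots,a_{l_j})$ because $\str{B}\models\phiu_j=\forall x_1\ldots x_{l_j}\,\psiu_j$; since $\psiu_j$ is built from $Set(\flat)$-atoms and single-variable atoms, Lemma~\ref{l:preserving} transfers this to $\cL_2(\flat)\models\psiu_j(a_1,\ldots,a_{l_j})$, so $\cL_2(\flat)$ is $\phiu_j$-compatible for every $j$, i.e.\ $\phiu$-compatible. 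This is (T5). Finally every $1$-type occurring in any label is realised in $\str{B}$, and $\str{B}$ itself witnesses $\phiu$-compatibility of this set of types (one may always use the outer-substructures induced by $\str{B}$ on suitable tuples), which gives (F4).

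Together with (F1), which restates (T1)--(T5) tree by tree, this establishes all the required conditions.
\endproof
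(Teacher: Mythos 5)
Most of your argument matches the paper's (which dismisses (T1)--(T5) and (F1)--(F3) as routine), and your verifications of (T2), (T3)/(F2), (T4)/(F3) and (T5) are correct: all branch labels are outer-substructures of the single structure $\str{B}$, so consistency is automatic, and satisfaction of $\psie_i$ and $\psiu_i$ transfers from $\str{B}$ to the outer-substructures because these formulas are uniform. However, you have misjudged where the difficulty lies: the paper explicitly identifies (F4) as ``the only non-obvious point'' and devotes its entire proof to it, and this is precisely the place where your argument has a gap.

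Condition (F4) does not merely assert that each label $\cL_2(\flat)$ is $\phiu$-compatible (that is (T5)); it asserts that the \emph{set of $1$-types} occurring in the labels is $\phiu$-compatible, which by definition requires that for every $m\le M$, every set $H$ of $m$ \emph{fresh distinct} elements, and every assignment $f$ of these $1$-types to the elements of $H$, one can \emph{build} a $\phiu$-compatible outer-structure on $H$ realizing $f$. Your parenthetical ``one may always use the outer-substructures induced by $\str{B}$ on suitable tuples'' does not deliver this: the assignment $f$ may give the same $1$-type $\alpha$ to several elements of $H$ while $\alpha$ is realized by only one element (or too few elements) of $B$, in which case no induced outer-substructure of $\str{B}$ on \emph{distinct} elements exhibits the required type pattern. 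The paper's fix is to choose a (not necessarily injective) function $g$ sending each $a_i\in H$ to some element of $B$ of $1$-type $f(a_i)$, define the outer-structure on $H$ by $\str{H}\restr_* \{a_1,\ldots,a_m\}\leftarrow pull^*(\str{B}, g(a_1),\ldots,g(a_m))$ — the pull operation is designed exactly to permit repeated source elements — and then argue by contradiction that any violation of some $\psiu_i$ in $\str{H}$ would yield a violation of $\phiu_i$ in $\str{B}$. Your proof needs this construction (or an equivalent one) to close (F4).
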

\begin{proof} 
It is not difficult to see that each of the $\cT_i$ satisfies (T1)-(T5),  and that the whole sequence satisfies (F1)-(F3). 
The only non-obvious point is (F4). Let us prove that it is true. Let $\alpha_1, \ldots, \alpha_k$ be the list of all
$1$-types appearing in the outer-structures assigned by $\cL_2$ to the branches in the whole forest $\str{T}$. Let $m$ be a natural number, $H=\{a_1, \ldots, a_m \}$ be a set of fresh distinct elements,
and $f: \{a_1, \ldots, a_m \} \rightarrow \{\alpha_1, \ldots, \alpha_k \}$ an assignment of $1$-types to these elements. 
We need to construct a
$\phiu$-compatible outer-structure on $H$ in which, for every $i$, the $1$-type of  $a_i$ is $f(a_i)$. 
For each $i$ choose an element $g(a_i) \in B$ such that $\tp^{\str{B}}(g(a_i)) = f(a_i)$; $g$ need not be an injective function. 

Let us  define the outer-structure $\str{H}$ on $H$ by first setting the $1$-type of each $a_i$ to be $f(a_i)$, and
then setting $\str{H} \restr_* \{a_1, \ldots, a_m\} \leftarrow \text{pull}^*(\str{B}, g(a_1), \ldots, g(a_m))$. 

We claim that so defined $\str{H}$ is $\phiu$-compatible. To see this take any conjunct $\phiu_i$
and any sequence of elements $c_1, \ldots, c_{l_i}$ such that $\{c_1, \ldots, c_{l_i}\}=H$, and
assume to the contrary that $\str{H} \not\models \psiu_i(c_1, \ldots, c_{l_i})$.
But then $\str{B} \not\models \psiu_i(g(c_1), \ldots, g(c_{l_i}))$, as the truth-values of the appropriate atoms appearing in $\psiu_i$
in the two considered structures coincide by our definition of $\str{H}$. Contradiction. 
\end{proof}

\subsubsection{Satisfaction forests and the existence of finite models} Let $\varphi$ be a normal form \AUFm{} sentence (we do not assume that a model of $\varphi$ is known). Formally, a \emph{satisfaction forest} for
$\varphi$ \emph{over a domain} $B$ is a sequence of
	trees $\str{T}=(\cT_1, \ldots, \cT_{\mse})$ together with  labelling functions $\cL_1$, $\cL_2$, assigning elements of $B$ to the nodes of the $\cT_i$ (with the exception of their roots to which the special empty label is assigned) and, respectively, outer-structures to their branches, such that each of the trees $\cT_i$ satisfies conditions
	(T1)-(T5), and the whole sequence satisfies conditions (F1)-(F4), with respect to $\cL_1$, $\cL_2$.

\begin{lem} \label{l:modelexists}
A normal form \AUFm{} sentence $\varphi$ has a model over a finite domain $B$ iff it has a satisfaction forest over $B$.
\end{lem}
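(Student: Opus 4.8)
The plan is to prove the biconditional in Lemma~\ref{l:modelexists} by establishing the two directions separately, with the forward direction being essentially free and the backward direction carrying the real work.

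\textbf{The forward direction.} Suppose $\varphi$ has a model $\str{B}$ over the finite domain $B$. Then the construction described in Section~\ref{s:extraction} extracts, from $\str{B}$, a sequence of trees $\str{T}=(\cT_1,\dots,\cT_{\mse})$ together with labelling functions $\cL_1,\cL_2$. By Claim~\ref{c:conds}, this sequence satisfies conditions (T1)--(T5) and (F1)--(F4), which is exactly the definition of a satisfaction forest for $\varphi$ over $B$. Hence the existence of a model over $B$ immediately yields the existence of a satisfaction forest over $B$, and this direction requires no further argument beyond invoking the already-proven claim.

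\textbf{The backward direction.} This is where the substance lies. Assume we are given an abstract satisfaction forest $\str{T}$ for $\varphi$ over $B$, with no model known in advance. The goal is to build a structure $\str{B}$ on domain $B$ satisfying $\varphi$. First I would fix the $1$-type of each element $a\in B$ using condition (F2) (combined with (T3)), which guarantees that every occurrence of $a$ across all branch-labels carries the \emph{same} $1$-type, so assigning $\type{\str{B}}{a}$ is well-defined. Next, for each $m$-element subset of $B$ (for $m\le M$) I must assign an outer-type, thereby fully specifying $\str{B}$ by the remark following Lemma~\ref{l:ntyp}. The natural recipe is: whenever a set $H=Set(\flat)$ arises as the label-set of some branch $\flat$, use the outer-structure $\cL_2(\flat)$ to define $\str{B}\restr_* H$ (via the pull operation); conditions (F3) and (T4) ensure this is unambiguous even when several branches share the same label-set. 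For every subset of $B$ \emph{not} realized as any $Set(\flat)$, I would invoke the $\phiu$-compatibility of the global $1$-type set (condition (F4)) to manufacture a $\phiu$-compatible outer-structure respecting the already-fixed $1$-types. Finally, one verifies that $\str{B}\models\varphi$: the existential conjuncts $\phie_i$ hold because each branch of $\cT_i$ witnesses the corresponding quantifier choices, with (T1) giving universal branching and (T2) giving $\cL_2(\flat)\models\psie_i$ along each branch (here Lemma~\ref{l:preserving} transfers satisfaction from the pulled outer-structure to $\str{B}$); the universal conjuncts $\phiu_i$ hold because \emph{every} outer-substructure of $\str{B}$ of the relevant size was built to be $\phiu$-compatible, covering both the branch-derived and the manufactured subsets.

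\textbf{The main obstacle.} The delicate point is coherence of the outer-type assignment: an element subset of $B$ may be the label-set of branches coming from different trees $\cT_i$, or of several branches within one tree, and these must all dictate the \emph{same} outer-structure, while simultaneously agreeing with the outer-structures imposed on smaller subsets (in particular on the $1$-types). This is precisely what the consistency conditions (T3), (T4), (F2), (F3) are designed to guarantee, so the proof reduces to carefully checking that these conditions are exactly strong enough to make the assignment total and well-defined, and that the fallback use of (F4) never conflicts with a previously fixed part of $\str{B}$ --- a point I would argue using the agreement of $1$-types together with the nonconflicting nature of the pull operation emphasized in the paragraph preceding Lemma~\ref{l:preserving}.
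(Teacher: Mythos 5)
Your proposal is correct and follows essentially the same route as the paper's own proof: the forward direction by invoking the extraction construction and Claim~\ref{c:conds}, and the backward direction by the same three-stage construction (1-types from branch labels via (T3)/(F2), outer-structures on branch label-sets via $\cL_2$ with (T4)/(F3) ensuring consistency, and completion of the remaining subsets via (F4)), followed by the same verification of existential conjuncts along satisfaction trees and universal conjuncts via $\phiu$-compatibility. No substantive difference from the paper's argument.
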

\begin{proof}
Left-to-right implication is justified by the extraction of a satisfaction forest from a given model of $\varphi$
described in Section~\ref{s:extraction}, and in particular by Claim \ref{c:conds}.

In the opposite
direction assume that a satisfaction forest over a finite domain $B$ for $\varphi$ is given. We construct a model $\str{B}$ of $\varphi$
over the domain $B$. The construction is natural:

\smallskip\noindent
\emph{Step 1: $1$-types.} The $1$-type of an element $b \in B$ is defined as the $1$-type of $b$ in the structure
$\cL_2(\flat)$ for an arbitrarily chosen branch $\flat$, in an arbitrarily chosen tree $\cT_i$, for which $b \in Set(\flat)$.

\smallskip\noindent
\emph{Step 2: Witnesses.}
For every tree $\mathcal{T}_i$ and  its every branch $\flat$ of $\mathcal{T}_i$, define the outer-structure on $Set(\flat)$ 
in accordance with $\cL_2(\flat)$. 

\smallskip\noindent
\emph{Step 3: Completion.}
For any set of distinct elements $\{b_1, \ldots, b_k\}$ whose outer-structure is not yet defined, choose any
$\phiu$-compatible outer-structure which retains the already defined $1$-types of the $b_i$.

\smallskip
Properties (T3), (F2), (T4) and (F3) guarantee that Step 1 and Step 2 can be performed without conflicts,
and the existence of an appropriate outer-structure in Step 3 is guaranteed by (F4).

\smallskip
It remains to see that $\str{B} \models \varphi$. 
Consider any existential conjunct of $\phi$, that is a conjunct $\phie_i=\Qfr_1^i x_1 \Qfr_2^i x_2 \ldots \Qfr_{k_i{-}1}^i  x_{k_i{-}1} \exists  x_{k_i} \psie_i$.
The satisfaction tree $\cT_i$ witnesses that $\phie_i$ indeed holds: it describes all possible substitutions for universally quantified
variables, and shows how intermediate and final witnesses for existential quantifiers can be chosen.
Consider now any universal conjunct $\phiu_i=\forall x_1 \ldots x_{l_i} \psiu_i$ and
let $b_1, \ldots, b_{l_i}$ be any sequence of elements of $B$ (possibly with repetitions).
Let $H=\{b_1, \ldots, b_{l_i}\}$. The outer-structure on $H$ has been defined either in Step 2 or in Step 3.
In both cases we know that it is $\phiu$-compatible, in particular it is $\phiu_i$-compatible,
so $\str{H} \models \psi_i^\forall(b_1, \ldots, b_{l_i})$.  
\end{proof}

\subsection{From a model to a satisfaction forest over a small finite domain}

We are ready to present the main construction of this section in which we show that every satisfiable  \AUFm-sentence has a satisfaction forest over a small finite domain. 

Let $\str{A}$ be a (possibly infinite) model of a normal form sentence $\phi$ of the shape as in (\ref{eq:normal}). We show how to construct a satisfaction forest over a domain of size bounded exponentially
in $| \phi |$. By Lemma \ref{l:modelexists} this will guarantee that $\phi$ has a finite model over such a bounded domain.

\subsubsection{Domain}
Let $L$ be the number of $1$-types (over the signature of $\phi$) realized in $\str{A}$, and let these types be enumerated as $\alpha_0, \ldots, \alpha_{L-1}$. Recall that $M$ is the maximal number of variables in a conjunct of $\phi$.  
We define the domain $B$ to be $\{0, \ldots, 2M-1 \} \times \{0, \ldots, \mse-1 \} \times \{0, \ldots, (M-1)^{M-1}-1 \} \times \{0, \ldots, L-1\}$.   Note that $M$ and $\mse$ are bounded linearly  and $L$ is bounded exponentially in $| \phi | $, and hence $ | B |$ is indeed bounded exponentially in $| \phi |$.

For convenience, let us split $B$ into the sets $B_i= \{(i,*,*,*) \}$ (here and in the sequel $*$ will be sometimes used as a wildcard in the tuples denoting elements of the domain). We will sometimes call $B_i$ the $i$-th \emph{layer} of $B$. 

\subsubsection{Some simple combinatorics: Extension functions} During the construction of the satisfaction forest we will design a special strategy for assigning 
labels to the leaves. To this end we introduce  an auxiliary combinatorial tool, which we will call
\emph{extension functions}.

Let us recall a well known Hall's marriage theorem \cite{Hal35}.
A \emph{matching} in a bipartite graph $(G_1, G_2, E$) is a partial injective function $f:G_1 \rightarrow G_2$ such that if $f(a)=b$, then
$(a,b) \in E$. 
\begin{thm}[Hall]
Let $(G_1, G_2, E)$ be a bipartite graph. There exists a matching covering $G_1$ iff for any set $W \subseteq G_1$, the number of vertices of
$G_2$ incident to the edges emitted from $W$ is greater or equal to $|W|$.
\end{thm}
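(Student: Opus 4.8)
The plan is to prove the equivalence by induction on $|G_1|$. Throughout, for $W \subseteq G_1$ write $N(W) \subseteq G_2$ for the set of vertices incident to some edge emitted from $W$, so that Hall's condition reads $|N(W)| \ge |W|$ for every $W$. The necessity (only if) direction is immediate: if $f$ is a matching covering $G_1$, then for any $W$ the restriction $f \restr W$ injects $W$ into $N(W)$, whence $|N(W)| \ge |W|$. For sufficiency I would induct on $n = |G_1|$, the base case $n=1$ being trivial, since the single vertex $a$ has $|N(\{a\})| \ge 1$ by Hall's condition and can thus be matched to some neighbour. For the inductive step I would distinguish two cases according to whether Hall's condition is ever tight on a proper nonempty subset.

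In the first case, assume a strict surplus everywhere: $|N(W)| \ge |W| + 1$ for every nonempty $W \subsetneq G_1$. Then I would pick an arbitrary edge $(a,b)$, tentatively match $a$ to $b$, and delete both vertices. In the smaller graph Hall's condition survives: for any $W' \subseteq G_1 \setminus \{a\}$ at most the single vertex $b$ has been removed from its neighbourhood, so the surviving neighbourhood has size at least $|N(W')| - 1 \ge |W'|$. The inductive hypothesis then supplies a matching of $G_1 \setminus \{a\}$, and adjoining the edge $(a,b)$ covers all of $G_1$.

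In the second case, suppose some nonempty $W_0 \subsetneq G_1$ is tight, i.e.\ $|N(W_0)| = |W_0|$. The subgraph induced on $(W_0, N(W_0))$ inherits Hall's condition, so by induction it admits a matching $f_0$ covering $W_0$. I would then verify Hall's condition on the complementary subgraph induced on $(G_1 \setminus W_0, G_2 \setminus N(W_0))$: for $W' \subseteq G_1 \setminus W_0$ the surviving neighbours are exactly $N(W') \setminus N(W_0)$, and combining $|N(W_0 \cup W')| \ge |W_0| + |W'|$ with the identity $N(W_0 \cup W') = N(W_0) \cup N(W')$ and the tightness $|N(W_0)| = |W_0|$ yields $|N(W') \setminus N(W_0)| \ge |W'|$. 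Induction then gives a matching $f_1$ covering $G_1 \setminus W_0$, and since $f_0$ and $f_1$ use disjoint parts of $G_2$, their union is a matching covering $G_1$.

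The main obstacle is the bookkeeping in the second case: confirming that Hall's condition genuinely propagates to the complementary subgraph, which rests on the cancellation of the $|W_0|$ terms obtained from $N(W_0 \cup W') = N(W_0) \cup N(W')$ and $|N(W_0)| = |W_0|$. Everything else is routine. (Alternatively, one could derive the statement from K\"onig's theorem or from an augmenting-path argument, but the two-case induction above seems the most self-contained route.)
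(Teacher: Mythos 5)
Your proof is correct. Note, however, that the paper does not prove this statement at all: it is the classical Hall marriage theorem, cited from the literature ([Hal35]) and used as a black box to establish the existence of the extension functions in Lemma 3.4, so there is no in-paper argument to compare against. What you have written is the standard Halmos--Vaughan induction on $|G_1|$, and every step checks out: necessity via injectivity of the restricted matching; in the surplus case the deletion of the matched pair $(a,b)$ costs each neighbourhood at most one vertex, which the slack $|N(W')| \ge |W'|+1$ absorbs; and in the tight case the cancellation $|N(W') \setminus N(W_0)| = |N(W_0 \cup W')| - |N(W_0)| \ge (|W_0| + |W'|) - |W_0| = |W'|$ correctly propagates Hall's condition to the complementary subgraph, with the two partial matchings living in disjoint parts of $G_2$. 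The only cosmetic remark is that in the surplus case you should note explicitly that an edge $(a,b)$ exists because $|N(\{a\})| \ge 1$, but this is immediate from the hypothesis applied to the singleton.
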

\noindent 
For a natural number $n$, let $[n]$ denote the set $\{0, \ldots, n-1 \}$, and, for $1 \le l \le n$ let $[n]^{l}$, denote the set of all subsets of $[n]$ of cardinality $l$.

\begin{lem} \label{l:comb}
For every $l, m \in \N$ such that $1 \le  l  <  m$, there exists a $1{-}1$ function $ext_l: [2m]^{l} \rightarrow [2m]^{l+1}$ such that for any $S \in [2m]^{l}$ we have that
	$S \subseteq ext_l(S)$.
\end{lem}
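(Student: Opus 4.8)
The plan is to apply Hall's marriage theorem to a suitable bipartite graph. I would take $G_1 := [2m]^{l}$ and $G_2 := [2m]^{l+1}$, and join $S \in G_1$ to $T \in G_2$ by an edge exactly when $S \subseteq T$. By the definition of a matching, a matching covering $G_1$ in this graph is a partial injective function $f : G_1 \rightarrow G_2$ that is defined on all of $G_1$ and satisfies $S \subseteq f(S)$ for every $S \in G_1$; this is precisely the function $ext_l$ we are after. So the whole statement reduces to verifying Hall's condition for this graph.

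The key observation is that the graph is biregular. Each $S \in [2m]^{l}$ is contained in exactly $2m - l$ members of $[2m]^{l+1}$, each obtained by adjoining one of the $2m - l$ elements of $[2m] \setminus S$; hence every vertex of $G_1$ has degree $2m - l$. Dually, each $T \in [2m]^{l+1}$ contains exactly $l + 1$ members of $[2m]^{l}$, each obtained by deleting one of its elements; hence every vertex of $G_2$ has degree $l + 1$.

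To check Hall's condition I would fix an arbitrary $W \subseteq G_1$, let $N(W) \subseteq G_2$ be its set of neighbours, and double-count the edges joining $W$ to $N(W)$. Counted from the $G_1$ side, this number equals $|W| \cdot (2m - l)$, since every vertex of $W$ has degree $2m - l$ and all its edges land in $N(W)$. Counted from the $G_2$ side, it is at most $|N(W)| \cdot (l+1)$, since every vertex of $G_2$ has degree $l + 1$. Combining the two estimates gives $|W| \cdot (2m - l) \le |N(W)| \cdot (l+1)$, that is, $|N(W)| \ge |W| \cdot \frac{2m - l}{l + 1}$. The hypothesis $l < m$ yields $m \ge l + 1$, whence $2m - l \ge l + 2 > l + 1$, so the fraction is at least $1$ and therefore $|N(W)| \ge |W|$. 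Thus Hall's condition holds, a matching covering $G_1$ exists, and it supplies the required function $ext_l$.

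The argument is entirely routine; the only thing to get right is the edge-counting inequality, and the single place where the hypothesis is genuinely used is the final step $2m - l \ge l + 1 \iff l < m$.
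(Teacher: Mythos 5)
Your proof is correct and follows essentially the same route as the paper: the same bipartite graph between $[2m]^{l}$ and $[2m]^{l+1}$, the same degree computation ($2m-l$ on one side, $l+1$ on the other), and Hall's marriage theorem verified by double-counting edges. The only cosmetic difference is that you derive $|N(W)|\ge|W|$ directly from $\frac{2m-l}{l+1}>1$, whereas the paper argues by contradiction from the inequality $|W|(2m-l)\le(|W|-1)(l+1)$; both hinge on exactly the same use of the hypothesis $l<m$.
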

\begin{proof}
Consider the bipartite graph $ ([2m]^{l}, [2m]^{l+1}, E)$ such that $(S, S') \in E$ iff $S \subseteq S'$. To show that a desired  $ext_l$ exists it
suffices to show the existence of a matching covering entirely the set $[2m]^{l}$. To this end we apply Hall's marriage theorem.
In our graph every 
node from $[2m]^{l}$ has degree $2m-l$ (given an $l$-element subset of $[2m]$ it can be expanded to an $l+1$ subset just by adding to it 
precisely one of the remaining $2m - l $ elements), and every node from $[2m]^{l+1}$ has degree $l+1$ (to obtain an $l$-element subset of a $l+1$-subset
one just removes one of the elements of the latter). Take a subset $W$ of $[2m]^{l}$.
The nodes of this subset are incident to $|W| \cdot (2m-l)$ edges in total. Let us see that the number of nodes in $[2m]^{l+1}$ incident to a node from $W$ is greater than or equal
to $| W |$. Indeed, assume to the contrary that it is not. Then at most $| W | - 1$ nodes absorb $|W| \cdot (2m-l)$ edges emitted by $W$,
but this means that  $|W| \cdot (2m-l) \le (|W|-1) \cdot (l+1)$. Rearranging this inequality we get that 
$|W|(2m-2l-1)  +l +1 \le 0$. But using the assumption that $0 < l < m$ we have that $(2m-2l-1)>0$ and hence
$|W|(2m-2l-1)  +l +1$  must be greater than $0$, too. Contradiction. Thus
our graph satisfies the Hall's theorem assumptions which guarantee the existence of a matching from $[2m]^{l}$ to $[2m]^{l+1}$, covering entirely $[2m]^{l}$.
This matching can be taken as $ext_l$. 
\end{proof}
\noindent 
For every $1 \le l < M$ choose an extension function $ext_l:[2M]^l \rightarrow [2M]^{l+1}$ and let $ext = \bigcup_{l=1}^{M-1} ext_l$, that is, $ext$ is a function which takes a non-empty subset of $[2M]$ of size at most $M-1$ and returns a superset 
containing precisely one new element. Obviously $ext$ remains an injective function.

\subsubsection{Construction of a satisfaction forest} We now describe how to construct a satisfaction forest $\str{T}=(\cT_1, \ldots, \cT_{\mse})$ over the domain $B$ with labellings $\cL_1$, $\cL_2$ for $\phi$.
 It should be helpful to announce how we are going to take care of the  consistency conditions for the whole forest:
\begin{itemize} \itemsep0pt
\item Conditions (F2) and (T3): 
With every element  $a=(*,*,*,l) \in B$  we associate the $1$-type  $\alpha_l$. Whenever $a$ will be used as a label of a node in a satisfaction tree, then
its $1$-type in the outer-structure defined for any branch containing a node labelled with $a$ will be set to $\alpha_l$.
\item Conditions (F3) and (T4): for a a pair of distinct branches $\flat_1$, $\flat_2$ (either belonging to the same tree or to two different trees), we will simply have $Set(\flat_1) \not= Set(\flat_2)$. This condition will be ensured by an appropriate use of the extension function. Here it is important that the last quantifier in every $\phie_i$-conjunct is existential, and hence the last node of every branch in $\mathcal{T}_i$  is also existential, so we can freely choose its label from $B$.

\end{itemize}

\noindent 
Let us explain how to construct a single  $\phie_i$-satisfaction tree $\cT_i$ over $B$  and its labellings.
The general shape of $\cT_i$ is determined by $\phie_i$ and $B$: we know how many nodes we need, we know which of them
are existential, and which are universal, we know the labels of the universal nodes. It remains to assign labels to existential nodes (elements of $B$) and to branches (outer-structures on the set of elements formed from the labels of the nodes on a branch).

For each node of $\cT_i$, we will define its \emph{pattern element} in $\str{A}$. Pattern elements will be pointed to by an auxiliary function $\pat:\cT_i \rightarrow A$. We will choose $\pat(d)$, so that
its $1$-type is equal to type of $\cL_1(d)$. We emphasize that patterns are defined for the nodes of the tree and not for elements of $B$, in particular it is not harmful if two nodes belonging to two different branches, but having the same label have the same pattern element.

Consider a node $d_k$ and assume that all its non-root ancestors $d_1, \dots, d_{k-1}$ have the function $\pat$ and their labels already defined.
We proceed as follows:

\begin{itemize} \itemsep0pt
\item If $d_k$ is universal, then its label $\cL_1(d_k)$ is known
\begin{itemize} \itemsep0pt
	\item 
If $\cL_1(d_k)=\cL_1(d_j)$ for some $j<k$, then we set $\pat(d_k)=\pat(d_j)$. 
\item If the label $\cL_1(d_k)$ is not used by the ancestors of $d_k$, then choose as $\pat(d_k)$ an arbitrary element of $\str{A}$ of the $1$-type assigned to
$\cL_1(d_k)$.
(In particular, we may use an element which was used by one of the ancestors of $d$)
\end{itemize}
\item If $d_k$ is existential, then we need to define both $\cL_1(d_k)$ and $\pat(d_k)$. By our construction we have  that 
\begin{eqnarray*}
\str{A} \models \exists x_k \Qfr_{k+1}^i x_{k+1} \ldots \Qfr_{k_i-1}^i x_{k_i-1}\exists x_{k_i} \psie_i (\pat(d_1), \ldots, \pat(d_{k-1}),x_k, x_{k+1}, \ldots, x_{k_i}).
\end{eqnarray*}
\noindent 
We choose an element $w \in A$ witnessing the previous formula, i.e., an element such that 
$$\str{A} \models \Qfr_{k+1}^i x_{k+1} \ldots \Qfr_{k_i-1}^i x_{k_i-1}\exists x_{k_i} \psie_i (\pat(d_1), \ldots, \pat(d_{k-1}), w, x_{k+1}, \ldots, x_{k_{i}}),$$
 and set $\pat(d_k)=w$. To define the label of $d_k$ we consider two cases:

\begin{itemize} \itemsep0pt
\item
If $d_k$ is not a leaf, then:
\begin{itemize} \itemsep0pt
\item if $\pat(d_j)=w$ for some $j<k$, then set $\cL_1(d_k)=\cL_1(d_j)$
\item otherwise we choose as $\cL_1(d_k)$ an arbitrary element of $B$ which has assigned the $1$-type $\tp^{\str{A}}(w)$, not used by the ancestors of $d_k$ (there are many copies
of each $1$-type in $B$ so it is always possible). 
\end{itemize}

\item
If $d_k$ is a leaf, then let $\flat$ be the branch of $d_k$, and let  $S=\{j: \cL_1(d_l) \in B_j \mbox{ for some } l<k \}$. Of course, $|S| <k \le M$ so $ext(S)$ is defined. Let $s$ be the unique member of $ext(S) \setminus S$.
We take as $\cL_1(d_k)$ an element $(s, i, t, l) \in B_s$ where $l$ is such that $\alpha_l=\tp^{\str{A}}(w)$, and where $t$ is chosen so that none of the  branches $\flat'$ 
of the current tree for
which the labels have been already defined 
such that $Set^-(\flat')=Set^-(\flat)$ used $(s, i, t, l)$ as the label of its leaf. 
We indeed have  enough elements for this, since obviously $|Set^-(\flat)| \le M-1$ and
thus there are at most $(M-1)^{M-1}$ different branches, whose nodes from the first $M-1$ levels are
labelled by elements of $|Set^-(\flat)|$ (recall that there are $(M-1)^{M-1}$ possible choices for $t$).

\end{itemize}

\end{itemize}

\noindent 
Take now any branch $\flat$ of $\cT_i$. It remains to define the outer-structure $\cL_2(\flat)$. 
For any relational symbol $R$ of arity $m$, and any sequence $a_{i_1}, \ldots, a_{i_m}$ of elements of $Set(\flat)$ containing
all the elements of $Set(\flat)$, we set $R(a_{i_1}, \ldots, a_{i_m})$ to be true iff
$R(\pat(a_{i_1}), \ldots, \pat(a_{i_m}))$ is true in $\str{A}$. For every $a_j$, its $1$-type is set to be equal to the $1$-type of $\pat(a_j)$.
 This completes the definition of the outer-structure $\cL_2(\flat)$ on $Set(\flat)$. 
Note that the construction ensures that this outer-structure satisfies $\psi_i^{\exists}(Seq(\flat))$.

\subsubsection{Correctness}

Let us now see that the above-defined satisfaction forest indeed satisfies all the required conditions. 

\begin{itemize} \itemsep0pt
\item Conditions (T1), (T2) and (T3) should be clear.
\item For (T4) we show that  there is no pair of branches $\flat_1$, $\flat_2$ in a tree $\cT_i$ with $Set(\flat_1)=Set(\flat_2)$.
For a branch $\flat$ denote  $S_\flat=\{j: \cL_1(d) \in B_j \mbox{ for some } n \in \flat \}$. Take any pair $\flat_1$, $\flat_2$ of different branches from $\cT_i$.
If $S_{\flat_1} \not=S_{\flat_2}$
then obviously $Set(\flat_1)\not=Set(\flat_2)$. In the opposite case, due to the use of the function $ext$, the labels of the leaves of $\flat_1$ and $\flat_2$ belong
to the same layer $B_s$, and no other labels of nodes from both branches belong to $B_s$.
This means that if $Set^-(\flat_1)\not=Set^-(\flat_2)$, then still $Set(\flat_1)\not=Set(\flat_2)$.
On the other hand, if $Set^-(\flat_1)=Set^-(\flat_2)$, then by our construction the leaves of $\flat_1$ and $\flat_2$ are two different elements $b_1=(s, i, x, *)$ and $b_2=(s,i,y,*)$ from layer $B_s$, and $b_1 \in Set(\flat_1)$ but $b_1 \not\in Set(\flat2)$
and thus $Set(\flat_1) \not=Set(\flat_2)$.
\item To show that (T5) holds assume to the contrary that for some branch $\flat$, $\str{H}=\cL_2(\flat)$ is not $\phiu$-compatible; take $i$ for
which it is not $\phiu_i$-compatible. So, for some sequence $a_1, \ldots, a_{l_1}$ such that $\{a_1, \ldots, a_{l_1}\}=Set(\flat)=H$, we have
$\str{H} \not\models \psiu_i ( a_1, \ldots, a_{l_i})$. But then the definition of the outer-structure in $\cL_2(\flat)$ implies that $\str{A} \not\models \psiu_i(\pat(a_1), \ldots, \pat(a_{l_i}))$, that is $\str{A}$ violates $\phiu_i$. Contradiction.
   
\item Conditions (F1), (F2) should be clear.
\item For (F3) we show that  there  is no pair of branches $\flat_1$ in a tree $\cT_i$ and  $\flat_2$ in  $\cT_j$ for $i\not=j$, with $Set(\flat_1)=Set(\flat_2)$. We proceed as in the argument for (T4), in the last step observing that the leaves  $b_1$ and $b_2$ are different from each other since
$b_1=(s, i, *, *)$ and $b_2=(s, j, *,*)$, for some $i\not=j$.
\item For (F4) we reason precisely as in the reasoning for (F4) in the proof of  the Claim in Section~\ref{s:satfor} (we just replace the structure $\str{B}$
from this proof with the currently considered structure $\str{A}$). 
\end{itemize}

\subsubsection{Complexity of satisfiability}
To complete the proof of  Theorem \ref{t:fmp} we need to see that the satisfiability problem for \AUFm{} is \NExpTime-complete.

The lower bound is inherited from the lower bound for \FOt{} without equality \cite{Lew80}.
Let us turn to the upper bound.
By Corollary \ref{l:nf2}, it suffices to show how to decide satisfiability of a normal form sentence $\varphi$. As we have seen,  if $\varphi$ is satisfiable, then it has a model with exponentially
bounded domain. We guess some  natural description of such a model $\str{A}$. We note
that this description is also of exponential size with respect to $|\varphi|$: indeed, we need to describe some number (linearly bounded in $|\varphi|$)
of relations of arity at most $|\varphi|$, and it is straightforward, taking into consideration the size of the domain, that a description of a single such relation is at most exponential in $|\varphi|$.
A verification of a single normal form conjunct in the guessed structure can be done in an exhaustive way, by considering all possible substitutions
for the variables.

Alternatively, instead of guessing a model, one could guess a satisfaction forest for $\varphi$. Again, a routine inspection reveals that the size of its
description can be bounded exponentially in $|\varphi|$; also the verification of the properties (T1)-(T5), (F1)-(F4) would not be problematic.

\subsection{An alternative proof by the probabilistic method} \label{s:probabilistic}

In this subsection we reprove Theorem~\ref{t:fmp} by a use of the probabilistic method. In the context of the decision problem for fragments of
first-order logic this method was used first by  Gurevich and Shelah to show the finite model property for the G\"odel class without equality
\cite{GS83}. Our approach is inspired by this, though it is more complicated due to the specificity of our logic. We will also present an analysis of the size of a minimal model whose existence is guaranteed by our reasoning.

It will be convenient to employ the following notion. We say that a structure $\str{A}$ \emph{properly satisfies} a sentence $\phi$
if the existential player can win any play of the standard verification game using a strategy which requires her to always substitute fresh
elements (that is elements that have not yet been used in the play) for the existentially
quantified variables. Similarly a satisfaction forest is \emph{proper} if for every existential node its $\cL_1$-label does not label
any of its ancestors. 

It is easy to see that any satisfiable equality-free first-order sentence is properly satisfied in some model. For example,
if 
$\str{A} \models \phi$, then $\phi$ is properly satisfied in the structure  $\str{A}'$, defined as the structure over the domain $A \times \N$ such that
$\str{A}' \restr \{ (a_1, i_1), \ldots, (a_k, i_k) \} \leftarrow pull(\str{A}, (a_1, i_1) \leftarrow a_1, \ldots, (a_k, i_k)
\leftarrow a_k)$, for all $k$, the $a_j$ and the $i_j$. Indeed, for any $a \in A$, the elements $(a,0), (a,1), \ldots$, are then indistinguishable
and the existential player has then always many ``proper'' choices.
Similarly, any satisfiable
\AUFm{} sentence has a proper satisfaction forest that can be naturally extracted from a proper model.

\subsubsection{Compressed satisfaction forest.}

Let $\phi$ be a satisfiable normal form \AUFm{}-sentence as in (\ref{eq:normal}), and let $\str{A} \models \phi$ be its arbitrary (possibly infinite) proper  model.

Let  $\str{T}=(\cT_1, \ldots, \cT_{\mse})$ with labelling functions $\cL_1, \cL_2$ be a proper satisfaction forest for $\phi$ over the domain $A$. 
Recalling that $M$ is the maximal number of variables in a conjunct of $\phi$, let $\AAA=\{\type{\str{A}}{a}: a \in A\}$ be the set of $1$-types realized in $\str{A}$ and
$\BBB$ the set of all relevant outer-types realized in $\str{A}$:
\begin{eqnarray} 
\nonumber && \BBB=\{\outertype{\str{A}}{a_1, \ldots, a_k}: 1 \le k \le M, a_1, \ldots a_k \text{ distinct elements of } A \}
\end{eqnarray}

We first rebuild $\str{T}$ into a \emph{compressed} satisfaction forest $\str{T}^*=(\cT_1^*, \ldots, \cT^*_{\mse})$, with labellings $\cL^*_1, \cL^*_2$ in which the labelling function for nodes returns elements
of $\AAA \cup \Vars$ rather than elements of $A$ and the labelling function for branches returns outer-types rather then outer-structures
on subsets of the domain of $A$. The intuition is as follows. If the $\cL_1^*$-label of a node $d$ from a level $x$ is an $\alpha \in \AAA$, then $d$ represents the assignment (during a play of the verification game) of any element of $1$-type $\alpha$ for $x$, while if it is an $y \in \Vars$, then $d$ represents 
assigning for $x$ the element which is already assigned for $y$.

In the first step, we define the new labelling of nodes, $\cL_1^*$. For every $\cT_i$ and its every node $d$:
\begin{itemize}
\item if there is an ancestor of  $d$ with the same label as $d$, then take such an ancestor $d'$ which is closest to
the root; let its level be the level $x_i$; set $\cL_1^*(d)=x_i$. 
\item if there is no such an ancestor, then set $\cL_1^*(d)=\tp^{\str{A}}(\cL_1(d))$. 
\end{itemize}
\noindent 
Next we cut some nodes of the $\cT_i$ leaving only its sufficient ``compressed'' representation.
In the depth-first manner, for every tree $\cT_i$ and its every node $d$, if the children of $d$ belong to 
a universal level $x_i$, then, for every $\alpha \in \AAA$, mark one of the children $d'$ of $d$ such that $\tp^{\str{A}}(\cL_1(d'))=\alpha$,
and mark all the children $d'$ of $d$ for which $\cL_1^*(d') \in \Vars$. 
After this, remove all the subtrees rooted at unmarked children of $d$. 
Denote $\cT_i^*$ the trees so-obtained. 

Finally, we define the labelling function $\cL^*_2$.  For each branch $\flat$ that survived the above process, let $Seq^*(\flat)$ denotes
the subsequence of $Seq(\flat)$ in which only the first occurrence of every element is retained.
As $\cL_2^*(\flat)$ we take $\outertype{\str{A}}{Seq^*(\flat)}$.

 Let $\BBB_{wit}:=\{\cL_2^*(\flat): \flat \text{ is a branch of } \cT^*_i \text{ for some } i\}$,
and let $\BBB_{wit}^+$ be the closure of $\BBB_{wit}$  under permutations of the domains of outer-types: $\BBB_{wit}^+:=\{ \beta \in \BBB: \text{there is } \beta'\in \BBB_{wit} \text{ isomorphic to } \beta \}$.
For every $1 \le k \le M$ and every sequence of $1$-types $\alpha_1, \ldots, \alpha_k$, choose a $k$-outer-type in which
the $1$-type of the element $i$ is $\alpha_i$. Let $\BBB_{cmp}$ be the set of the so-chosen outer-types.
Let $\BBB^*=\BBB_{wit} \cup \BBB_{cmp}$. 

Note that the degree of nodes of $\cT_i^*$ is bounded by $|\AAA \cup \Vars|$
and the number of branches by $(|\AAA \cup \Vars|)^M$. Hence the number of outer-types 
in $|\BBB^*|$ is not greater than $M! \cdot (|\AAA \cup \Vars|)^M + M |\AAA|^M$,
which is exponential in $|\phi|$.

We say that a (partial) branch $\flat=(d_1, \ldots, d_k)$ of $\cT_i^*$  \emph{corresponds} to a sequence of elements $a_1, \ldots, a_k$ of $A$
if for all $i=1, \ldots, k$:
\begin{itemize}
\item whenever $a_i \not= a_j$ for all $j<i$, then $\cL_1^*(d_i)=\type{\str{A}}{a_i},$
\item if $a_i=a_j$ for some $j<i$, then, denoting by $j_0$ the smallest such $j$,  $\cL_1^*(d_i)=x_{j_0}.$
\end{itemize}
Of course, a branch may correspond to many different sequences of elements.

\subsubsection{Random structures.}
Let us enumerate the $1$-types in $\AAA$ as $\alpha_0, \ldots, \alpha_{L-1}$.
For every $n \in \N$, we generate a random structure $\str{A}_s$ in the following steps:

\smallskip\noindent
\emph{Stage 0: The Domain.} Set $A_n=\{0, \ldots, L-1\} \times \{1, \ldots, n\}$.

\smallskip\noindent
\emph{Stage 1: The $1$-types}. For every $(i,j)$  set $\type{\str{A}_n}{(i,j)}:=\alpha_i$.

\smallskip\noindent
\emph{Stage 2: The $k$-outer-types for $2 \le k \le M$}. For every subset of $A$ of size at most $M$, take an arbitrary enumeration of its elements $a_1, \ldots, a_k$, and let $\alpha_1, \ldots, \alpha_k$ be the sequence of their $1$-types in $\str{A}$. Let $\BBB_0$ be the set of $k$-outer-types $\beta$ from
$\BBB^*$ for which $\beta\restr i = \alpha_i$, for $i=1, \ldots, k$. $\BBB_0$ is not empty by the construction of $\BBB_{cmp}$. Choose $\outertype{\str{A}_n}{a_1, \ldots, a_k}$ randomly
from $\BBB_0$, with uniform probability.

\smallskip\noindent
\emph{Stage 3: The $k$-outer-types for $k>M$}. For every subset of $A$ of size bigger than $M$, take an enumeration $a_1, \ldots, a_k$ of its elements,
and set $\outertype{\str{A}_n}{a_1, \ldots, a_k}$ arbitrarily, e.g. by setting all the non-unary facts to be false.

\subsubsection{Probability estimation}
Let us now estimate the probability $\Prob[\str{A}_n \not\models \phi]$. As all the outer-types in $\BBB^*$ are taken from a model of $\phi$ they
are all $\phiu$-compatible, so we have that $\str{A}_n \models \phiu$.
Hence:

$$\Prob [\str{A}_n \not\models \phi] = \Prob[\bigcup_{i=1}^{\mse} \str{A}_n \not\models \phie_i] \le \sum_{i=1}^{\mse} \Prob[ \str{A}_n \not\models \phie_i].$$

Consider a single conjunct $\phie_i=\Qfr_1^i x_1 \Qfr_2^i x_2 \ldots \Qfr_{k_i{-}1}^i  x_{k_i{-}1} \exists  x_{k_i} \psie_i$.
We estimate $\Prob[ \str{A} \not\models \phie_i]$ by the probability that this formula is not satisfied under the fixed strategy of choosing 
intermediate witnesses for existential quantification, recorded in $\cT^*_i$. 
Formally, this strategy works as follows. Let $l$ be the number of universal quantifiers in $\phie_i$, and assume that elements 
$a_1, \ldots, a_l$ are going to be assigned to universally quantified variables. 
Denote $\bar{a}=(a_1, \ldots, a_l)$. We define the sequence of elements $f(\bar{a})=\bar{b}=(b_1, \ldots, b_{k_i})$
that should be substituted for all variables of $\phie_i$. We proceed from left to right. Assume $b_1, \ldots, b_{j-1}$ are
defined. If the $j$-th quantifier $Q^i_j$ of $\phie_i$ is universal, then let $k$ be the number of
universal quantifiers among $Q^i_1, \ldots, Q^i_{j}$. Set $b_j:=a_k$. If $Q^{i}_j$ is existential,
then let $\flat$ be the partial branch of $\cT^*_i$ corresponding to the sequence $(b_1, \ldots, b_{j-1})$.
Let $d$ be the successor of the last element of $\flat$ (it is unique, as it belongs to an existential level), and let $\alpha_u=\cL^*_1(d)$. Set $b_j:=(u,t)$,
where $t$ is the smallest index such that $(u,t) \not\in \{b_1, \ldots, b_{j-1}\}$.

We have

$$\Prob [\str{A}_n \not\models \phie_i] \le \Prob[\bigcup_{\bar{a} \in A_n^l} \str{A}_n \not\models \exists x_{k_i} \psie_i(f(\bar{a}), x_{k_i})],$$

and further

$$ \Prob[\bigcup_{\bar{a} \in A_n^l} \str{A}_n \not\models \exists x_{k_i} \psie_i(f(\bar{a}), x_{k_i})] \le
\sum_{\bar{a} \in A_n^l}  \Prob[\str{A}_n \not\models \exists x_{k_i} \psie_i(f(\bar{a}), x_{k_i})]$$

We need to estimate $\Prob[\str{A}_n \not\models \exists x_{k_i} \psie_i(f(\bar{a}), x_{k_i})]$. Let us consider the partial branch of
$\cT^*_i$ corresponding to $f(\bar{a})$. Let $d$ be the element in the last existential level completing this branch, and denote the
whole branch by $\flat$.
Let $\alpha_u=\cL_1^*(d)$.  
There are at least $n-k_i+1$ elements of $1$-type $\alpha_u$ is $\str{A}_n$ that do not appear in $f(\bar{a})$. Consider a single such 
element $b$. In the process of defining the outer-types in $\str{A}_n$ the outer-type of some tuple being a permutation of  $(f(\bar{a}), b)$ was chosen randomly from $\BBB^*$. One possible candidate for this outer-type was the corresponding permutation of the outer-type $\cL_2^*(\flat)$. If it was this particular outer-type that was chosen, then $\str{A}_n \models \psie_i(f(\bar{a}), b)$. The chance for this was at least $1 / |\BBB^*|$. Hence

$$ \Prob [\str{A}_n \not\models \psie_i(f(\bar{a}), b)] \le 1 - 1/|\BBB^*|.$$

For distinct candidate final witnesses  $b_j$ as above, the events $[\str{A}_n \not\models \psie_i(f(\bar{a}), b_j)]$ are independent (note that this is the place in our reasoning where we use the assumption that the considered quantifier prefix ends with the existential quantifier), which implies:

$$\Prob[\str{A}_n \not\models \exists x_{k_i} \psie_i(f(\bar{a}), x_{k_i})] \le (1-1/|\BBB^*|)^{n-k_i+1}.$$

Then, noting that the number of tuples $\bar{a} \in A_n^l$ is $(nL)^l$:

$$\Prob [\str{A}_n \not\models \phie_i)] \le (nL)^l \cdot (1-1/|\BBB^*|)^{n-k_i+1},$$

and finally 

\begin{equation}
  \label{eqn:bound-failure}
  \Prob [\str{A}_n \not\models \phi] \le \mse \cdot (nL)^M \cdot (1-1/|\BBB^*|)^{n-M+1}.
\end{equation}

As the parameters $\mse$, $L$, $M$ and $|\BBB^*|$ are independent from $n$ the limit of the above estimation
as $n \rightarrow \infty$ is $0$. So, for sufficiently large $n$, it is smaller than $1$. That is for
some $n$ we have that $\Prob[\str{A}_n \models \phi]>0$. This means that there is a finite model of $\phi$. 

\subsubsection{Size of models.}
We can estimate the size of models obtained as above via the probabilistic method by further analysing the last
estimation on $\Prob [\str{A}_n \not\models \phi]$.
More precisely, we need to find a bound on $n$ (as a function of $|\phi|$) for which $\Prob [\str{A}_n \not\models \phi]<1$,
and hence conclude the existence of model $\str{A}_n$ of size $nL$ (as the domain of $\str{A}_n$ is $\{1,\dots,L\}\times\{1,\dots,n\}$).
The following claim establishes such a bound.

\begin{clm}
  There exists an exponential function $f(x)$ such that $\Prob[\str{A}_n \not\models\phi]<1$ for all $n\ge f(|\phi|)$.
\end{clm}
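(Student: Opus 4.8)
The plan is to solve the inequality (\ref{eqn:bound-failure}) for $n$, turning the requirement $\Prob[\str{A}_n \not\models \phi] < 1$ into an explicit threshold on $n$ and then checking that this threshold is exponential in $|\phi|$. First I would record the relevant magnitudes of the parameters on the right-hand side: $\mse$ and $M$ are bounded linearly in $|\phi|$, while by Lemma~\ref{l:ntyp} both $L = |\AAA|$ and $|\BBB^*|$ are bounded exponentially; consequently $\ln L$ and $\ln \mse$ are bounded polynomially. Writing $K := |\BBB^*|$ for brevity, it then suffices to arrange $\mse\,(nL)^M (1 - 1/K)^{n - M + 1} < 1$.

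Taking logarithms and using the elementary inequality $\ln(1 - 1/K) \le -1/K$ (equivalently $1 - x \le e^{-x}$) to linearise the base of the decaying factor, it is enough to find $n$ satisfying
\[
\frac{n - M + 1}{K} > \ln \mse + M \ln L + M \ln n .
\]
The left-hand side is linear in $n$ while the right-hand side grows only logarithmically, so suitable $n$ exist; the one subtlety is that $n$ appears on both sides through the term $M \ln n$ coming from the polynomial factor $(nL)^M$. I would resolve this by absorbing $M \ln n$ into a fixed fraction of $n$: since $\ln n / n$ is decreasing for $n \ge e$, one checks that for $n \ge 4K^2 M^2$ one has $KM \ln n \le n/2$, and hence $M \ln n \le n/(2K)$.

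Feeding this back into the displayed inequality, the remaining linear-in-$n$ slack must dominate the constant terms, which holds as soon as
\[
n \;\ge\; \max\bigl(\,4 K^2 M^2,\; 2K \ln \mse + 2 K M \ln L + 2M\,\bigr).
\]
Every quantity inside the maximum is a product of exponentially bounded factors ($K$, $L$) with polynomially bounded ones ($M$, $\mse$, $\ln L$, $\ln \mse$), hence is itself bounded exponentially in $|\phi|$. Taking $f(|\phi|)$ to be this maximum yields the required exponential function and, via $|A_n| = nL$, the announced exponential bound on the model size.

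I expect the main obstacle to be nothing deep but purely the bookkeeping around the implicit inequality $n \gtrsim KM \ln n$: the threshold must be chosen so that the $\ln n$ contribution is genuinely swallowed, without inadvertently inflating the bound beyond exponential. Once that is handled, the remainder is a routine substitution of the known magnitudes of $\mse$, $M$, $L$ and $|\BBB^*|$.
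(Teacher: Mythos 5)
Your proposal is correct and follows essentially the same route as the paper's proof: bound $\mse, M$ polynomially and $L, |\BBB^*|$ exponentially, apply $1-y\le e^{-y}$, take logarithms, and dispose of the self-referential logarithmic term (your $M\ln n$ versus the paper's $\ln f(x)$) by showing it is dominated by a fixed fraction of the linear term once $n$ exceeds an explicit exponential threshold. The only difference is presentational --- the paper guesses a concrete $f(x)=20\cdot 2^{2p(x)}q(x)\ln(2)x^2$ and verifies term by term, while you solve for the threshold directly --- and your arithmetic (in particular $KM\ln n\le n/2$ for $n\ge 4K^2M^2$, via monotonicity of $\ln n/n$ and $\ln y\le y/2$) checks out.
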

\begin{proof}
First, recall that we already established that $|\BBB^*|$ is at most exponential in $|\phi|$, hence there exists a polynomial $p(x)$ such that $|\BBB^*|<2^{p(|\phi|)}$.
Similarly, by Lemma~\ref{l:ntyp} the number of 1-types $L$ can be bounded by $2^{q(|\phi|)}$ for some polynomial $q(x)$.
We may assume that $p(x)$ and $q(x)$ are monotonic and at least $10x$.
Finally, observe that $\mse$ and $M$ are at most $|\phi|$.

Hence, we can rewrite the right-hand side of (\ref{eqn:bound-failure}) in terms of $p(|\phi|),q(|\phi|)$ and $|\phi|$:
$$ \mse \cdot (nL)^M \cdot (1-1/|\BBB^*|)^{n-M+1} \le |\phi|\cdot (n2^{q(|\phi|)})^{|\phi|} \cdot (1-1/2^{p(|\phi|)})^{n-|\phi|+1}. $$

Now, we choose a function $f(x)$ such that for all $n \ge f(|\phi|)$ the right side of the inequality above is strictly less than $1$.
Let $f(x)=20 \cdot 2^{2\cdot p(x)}\cdot q(x) \cdot \ln(2) \cdot x^2$, which is clearly an exponential function.
Hence, to conclude we need to show that $$ x\cdot (f(x)\cdot2^{q(x)})^{x} \cdot (1-1/2^{p(x)})^{f(x)-x+1} < 1. $$

We use the well-known inequality $1-y \le e^{-y}$ to obtain that
$$ x\cdot (f(x)\cdot2^{q(x)})^{x} \cdot (1-1/2^{p(x)})^{f(x)-x+1} \le x\cdot (f(x)\cdot2^{q(x)})^{x} \cdot \exp(-(f(x)-x+1)/2^{p(x)}) < 1. $$
Then we apply natural logarithm to simplify:
$$ \ln(x) + x\cdot \ln f(x) + x\cdot q(x)\cdot\ln(2) -(f(x)-x+1)/2^{p(x)} < 0,$$
and further we rearrange the terms:
$$  2^{p(x)}\cdot \ln(x) + 2^{p(x)}\cdot x\cdot \ln f(x) + 2^{p(x)}\cdot x\cdot q(x)\cdot\ln(2) + x{-}1 < f(x).$$

We can see that each of $2^{p(x)}\cdot \ln(x)$,  $2^{p(x)}\cdot x\cdot q(x)\cdot\ln(2)$ and $x{-}1$ is clearly less than $f(x)/5$.
Thus, it is left to show that $2^{p(x)}\cdot x\cdot \ln f(x)$ is also less than $f(x)/5$.
However, this is not difficult to see either, as $2^{p(x)}\cdot x < \sqrt{f(x)}$ by the definition of $f(x)$ and also $\ln y < \sqrt{y}/5$ for all $y\ge 1280$.
Therefore, as $f(x)$ is monotonic, and already $f(1)\ge1280$, we have that $$2^{p(x)}\cdot x\cdot \ln f(x) < \sqrt{f(x)}\cdot \sqrt{f(x)}/5 = f(x)/5,  $$
which finishes the proof.
\end{proof}

\section{The three-variable alternating uniform one-dimensional fragment}
\label{s:auf}
The aim of this section is to show the following result for \AUFthree{}.

\begin{thm} \label{t:fmpauf}
\AUFthree{}  has the exponential model property.
Hence the satisfiability problem (=finite satisfiability problem) for \AUFthree{} is \NExpTime-complete.
\end{thm}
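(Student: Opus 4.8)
The plan is to follow the route of Section~\ref{s:aufmany}: reduce to a normal form and build directly, from an arbitrary (possibly infinite) model $\str{A}$, a finite model of exponentially bounded size. By the normal form of Subsection~\ref{s:normalform} and Corollary~\ref{l:nf2} it suffices to treat a conjunction of simple three-variable sentences without $0$-ary symbols, classified by their quantifier prefix. The observation that drives everything is a decoupling forced by uniformity: in a simple conjunct $\Qfr_1 x\,\Qfr_2 y\,\Qfr_3 z\,\psi$ every atom of $\psi$ mentioning at least two variables mentions all three, so whether $\str{A}\models\psi(a,b,c)$ depends only on $\type{\str{A}}{a},\type{\str{A}}{b},\type{\str{A}}{c}$ together with the facts involving all of $a,b,c$, i.e.\ on $\outertype{\str{A}}{a,b,c}$, and is completely independent of the three pairwise outer-types (and analogously for two-variable conjuncts). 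Thus the $3$-outer-types on triples, the $2$-outer-types on pairs and the $1$-types on singletons assign truth values to disjoint sets of facts and may be fixed independently, which I would exploit repeatedly via the $pull^*$ operation and Lemma~\ref{l:preserving}.

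The easy conjuncts are handled as before: purely universal conjuncts reduce to a $\phiu$-compatibility requirement on all outer-types of the finished model, conjuncts whose prefix ends with $\exists$ are satisfied by providing witnesses in the satisfaction-forest style of Subsection~\ref{s:satfor}, and leading existential quantifiers are discharged by designating a few global witness elements. The genuine novelty, which I would isolate first as its own subclass, are the conjuncts whose block ends with a universal quantifier while still containing an existential one, the paradigmatic and hardest case being one of the shape $\forall x\,\exists y\,\forall z\,\psi$. For such a conjunct I would, starting from $\str{A}$, fix for each $1$-type $\alpha$ a representative $a$ and a witness $b$ for it, and for each $1$-type $\gamma$ record the $3$-outer-type realized in $\str{A}$ on $(a,b,c)$ for some $c$ of type $\gamma$; since $b$ is a good witness and $c$ is arbitrary, $\str{A}\models\psi(a,b,c)$, so this outer-type forces $\psi$. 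The finite domain is then a bounded product giving many copies of each of the (exponentially many, by Lemma~\ref{l:ntyp}) $1$-types; each element $a$ receives a dedicated witness $b_a$, and the $3$-outer-type of every witness triple $(a,b_a,c)$ is set by pulling the recorded outer-type, which by Lemma~\ref{l:preserving} makes $\psi(a,b_a,c)$ true for all $c$ simultaneously.

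The main obstacle is the global consistency of this $3$-outer-type assignment. Because $z$ ranges over the whole domain, a single triple-set $\{d_1,d_2,d_3\}$ can be targeted by the witness demands of two distinct elements (for instance when one element's witness is itself a witnessed element), and the two pulled outer-types, differing only in the role each $d_i$ plays, will in general disagree on the facts using all three of $d_1,d_2,d_3$. I expect this collision analysis to be the heart of the proof, and the place where the restriction to three variables is essential, since with more variables the combinatorics of overlapping demands explode. I would resolve it by equipping the domain with a layer structure and a bookkeeping device analogous to the extension function $ext$ of Subsection~\ref{s:aufmany}: by forcing every witness edge $a\mapsto b_a$ to move to a strictly larger set of layers, the witness relation becomes acyclic and each triple is claimed by at most one demand, so its $3$-outer-type can be set without conflict. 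Crucially, by the decoupling above these assignments never touch the pairwise facts, so the two-variable universal conjuncts and all pairwise data are treated afterwards by an independent completion choosing $\phiu$-compatible pair- and triple-outer-types for every not-yet-defined set, exactly as in Step~3 of the proof of Lemma~\ref{l:modelexists}.

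Finally, I would assemble the full fragment by carrying out these constructions simultaneously over one common layered domain, interleaving the witness forests for the $\exists$-ending conjuncts, the layered witness assignment for the $\forall\exists\forall$-type conjuncts, and the global choices for the leading-existential conjuncts, and then completing. The domain size is the product of the exponential number of $1$-types with a polynomial number of index coordinates, hence exponential in $|\phi|$, giving the exponential model property; the matching lower bound, and thus \NExpTime-completeness, is inherited from \FOt{} without equality \cite{Lew80}. The step I expect to be most delicate is making the layered witness assignment and the completion cohere so that the heterogeneous conjuncts are all satisfied at once, which is exactly why treating the pure $\forall\exists\forall$ subclass in isolation first is worthwhile.
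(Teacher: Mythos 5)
Your overall architecture (normal form, classification of conjuncts by quantifier prefix, isolating the $\forall\exists\forall$ case first, then assembling over a layered domain and completing with $\phiu$-compatible outer-types) matches the paper's, and you correctly identify the crux: the global consistency of the $3$-outer-types demanded by the $\forall x\exists y\forall z\,\phi_j$ conjuncts. But your proposed resolution of that crux does not work. You want to layer the domain so that the witness relation becomes acyclic and ``each triple is claimed by at most one demand.'' That is unachievable: every element must have a witness for every $\mathcal{I}_3$-conjunct, and that witness must itself have witnesses, so chains $a\mapsto b=\ff_j(a)\mapsto c=\ff_k(b)$ always exist; since the trailing $\forall z$ ranges over the \emph{entire} domain, $a$'s demand constrains the $3$-outer-type of $\{a,b,c\}$ (with $c$ in the role of the universally quantified element) while $b$'s demand constrains the very same triple (with $a$ in that role). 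No acyclic layering separates these, because --- unlike the witnesses for $\forall\forall\exists$ or $\forall\exists\exists$ conjuncts, where the existentially chosen third element can be placed in a fresh layer --- the third element here is not chosen by you; it is everything. Analogous collisions occur when one element witnesses two others, or carries witnesses for two different $\mathcal{I}_3$-conjuncts.

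The missing idea is that such collisions must be made \emph{compatible} rather than avoided. The paper achieves this with \emph{generalized types}: each element of the small model carries not only its $1$-type but also the $1$-type of an element it witnesses (together with the index of the conjunct) and the $1$-types of all of its own $\mathcal{I}_3$-witnesses. This guarantees that for a chain $b_1\leadsto_j b_2\leadsto_k b_3$ there is a single element $a_{\tau_2}$ of $\str{A}$ realizing the middle generalized type, and the whole $3$-outer-type is pulled in one shot from the star $(\ff_j^{-1}(a_{\tau_2}),\,a_{\tau_2},\,\ff_k(a_{\tau_2}))$, satisfying both demands simultaneously; the other collision patterns are handled by similar centered pulls. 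Your recorded patterns, indexed only by pairs of $1$-types and centered independently on each demanding element's own representative, will in general disagree on the facts mentioning all three elements of a doubly-claimed triple. Without the generalized-type bookkeeping (or some other mechanism producing a common pattern in $\str{A}$ for every collision configuration), the construction for the $\mathcal{I}_3$-conjuncts fails; the remaining parts of your plan (satisfaction forests for $\exists$-ending conjuncts, the circular layer scheme and extension function for avoiding conflicts among $\mathcal{I}_5$- and $\mathcal{I}_6$-witnesses) are sound and essentially as in the paper.
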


Clearly, the upper bound in the second part of this theorem  is implied by the first part: to verify that a given \AUFthree{}
sentence is satisfiable we just guess a description of its appropriately bounded model, and verify it (in a straightforward manner).
We recall that the corresponding lower bound is retained from \FOt{} without equality \cite{Lew80}.

Recall that by Corollary \ref{l:nf2} we can restrict attention to normal form formulas without $0$-ary predicates. For convenience, we eliminate conjuncts whose quantifier prefix starts with the
existential quantifier, like $\exists x \Qfr y \Qfr z \eta_i$, by replacing them by $\forall x \exists y P(y) \wedge \forall x \Qfr y \Qfr z (P(x) \rightarrow \eta_i)$, for a fresh symbol $P$. We also replace  conjuncts $\forall x \psi(x)$ whose quantifier prefix consists just of a single universal quantifier,
by $\forall x \forall y \psi(x)$.  

Hence, a normal form  \AUFthree{} sentence can be written as 
$\phi=\phi_1 \wedge \phi_2$ for
\begin{align*}
\phi_1=&  \bigwedge_{i \in \mathcal{I}_1} \forall x \forall y \phi_i(x,y) \wedge \bigwedge_{i \in \mathcal{I}_2} \forall x \forall y \forall z \phi_i(x,y,z) \wedge 
\bigwedge_{i \in \mathcal{I}_3} \forall x \exists y \forall z \phi_i(x,y,z), \\
\phi_2=& \bigwedge_{i \in \mathcal{I}_4} \forall x \exists y  \phi_i(x,y) \wedge \bigwedge_{i \in \mathcal{I}_5} \forall x \forall y \exists z \phi_i(x,y,z) \wedge \bigwedge_{i \in \mathcal{I}_6} \forall x \exists y \exists z \phi_i(x,y,z).
\end{align*}
where the $\mathcal{I}_i$ are pairwise disjoint sets of indices, and all the $\phi_i$ are {uniform} Boolean combinations of atoms over the appropriate sets of variables
($\{x,y,z \}$ or $\{x,y \}$), that is of atoms containing either all of the free variables of $\phi_i$ or just one of them.

Conjuncts indexed by $\mathcal{I}_i$, $1 \le i \le 6$, will be sometimes called the $\mathcal{I}_i$-conjuncts.

For the rest of this section we fix an \AUFthree{} normal form sentence $\phi=\phi_1 \wedge \phi_2$ and its model $\str{A}$.
  We show how to construct
a small model of $\phi$. Compared to small model constructions for other fragments of first-order logic, one non-typical 
problem we need to deal with here is respecting $\mathcal{I}_3$-conjuncts, that is conjuncts starting with the quantifier pattern $\forall\exists\forall$. To help the reader understand what happens, we split our
construction into two parts. We first build (Subsection~\ref{s:phi1}) a small model $\str{B} \models \phi_1$; here we indeed need to concentrate mostly on the $\mathcal{I}_3$-conjuncts.
Then (Subsection~\ref{s:phi2}) we will use $\str{B}$ as a building block in the construction of a finite, exponentially bounded, model for the whole $\phi$. By Corollary \ref{l:nf2}, this suffices to prove Theorem \ref{t:fmpauf}.

\subsection{Dealing with  \texorpdfstring{${\forall\exists\forall}$}{}-conjuncts.} \label{s:phi1}

Here, we give a construction of a model $\str{B}$ satisfying all the ($\mathcal{I}_1 \cup \mathcal{I}_2 \cup \mathcal{I}_3$)-conjuncts (i.e. $\str{B} \models \phi_1$).
To properly handle the $\mathcal{I}_3$-conjuncts, we pick for every $j \in \mathcal{I}_3$ a witness function $\ff_j:A \rightarrow A$
such that, for every $a \in A$, we have $\str{A} \models \forall z \phi_j(a, \ff_j(a), z)$. An element $\ff_j(a)$ will be sometimes
called a \emph{witness} for $a$ and $\phi_j$ (or for $a$ and the appropriate $\mathcal{I}_3$-conjunct). Similar terminology
will be also used later for conjuncts from $\phi_2$.
Here and in the rest of this section we will use the convention that when referring to the indexes of $\mathcal{I}_3$-conjuncts,
we will use the letter $j$, while for the other groups of conjuncts (or in situations where it is not relevant which group of
conjuncts is considered) we will usually use the letter $i$. 

Let us introduce the following definition of a \emph{generalized type} of an element $a \in A$.
Formally, denoting $\AAA$ the set of $1$-types realized in $\str{A}$, a generalized type is the following tuple:
$$ \tau = (\alpha(\tau), \beta(\tau), j(\tau), s(\tau)) \in \AAA \times (\AAA \cup \{ \bot \}) \times (\{1, \ldots |\mathcal{I}_3| \} \cup \{ \bot\}) \times \AAA^{|\mathcal{I}_3|}. $$
We say that an element $a \in A$ realizes a generalized type $\tau$ if:
\begin{itemize}\itemsep0pt
  \item $\alpha(\tau)$ is the 1-type of $a$;
  \item if $\beta(\tau)\neq\bot$, then there exists an element $b \in A$ such that $\beta(\tau)$ is the 1-type of $b$, and $a$ is the witness for $b$, and the $\mathcal{I}_3$-conjunct with index $j(\tau) \neq \bot$, i.e. $f_{j(\tau)}(b) = a$;
    and if $\beta(\tau)=\bot$, then we do not impose any requirements, except that $j(\tau)=\bot$;
  \item $s(\tau) = (s(\tau)_1, \ldots, s(\tau)_{|\mathcal{I}_3|})$ is the sequence of 1-types of witnesses for $a$ for all the $\mathcal{I}_3$-conjuncts, i.e., $s(a)_j = \type{\str{A}}{f_j(a)}$ for all $1 \le j \le |\mathcal{I}_3|$.
\end{itemize}
Intuitively, if $a$ realizes $\tau$, then $\tau$ captures the information about the 1-types of $a$ itself and all the witnesses of $a$,
from the perspective of an element $b$ witnessed by $a$ (if such an element $b$ exists).
We want to stress that one element might realize potentially \emph{many} generalized types, since the choices of $\beta(\tau)$ and $j(\tau)$ might not be unique. The set of generalized types realized by $a$ in $\str{A}$ is denoted $\gtp(a)$. 

Let $\gtp(\str{A}) := \bigcup_{a \in A} \gtp(a)$ be the set of all  generalized types realized in $\str{A}$.
For each $\tau \in \gtp(\str{A})$ pick a single representative $\pat_\gtp(\tau) = a_\tau$ such that $\tau \in \gtp(a)$  (we do not require $\pat_\gtp$ to be an injective function).

\smallskip\noindent\emph{Stage 0: The domain.} Let the domain of $\str{B}$ be the set $$ B = \gtp(\str{A}) \times \{0, 1, 2, 3 \}. $$
We will sometimes call the sets $B_m = \{ (\tau,m) : \tau \in \gtp(\str{A}) \}$ \emph{layers}.
We can naturally lift
the function $\pat_\gtp$ to the function $\pat:B \to A$: $\pat((\tau, m)) = \pat_\gtp(\tau)$.

Before proceeding to the construction, we need one more definition, namely the definition of a \emph{witnessing graph}. A witnessing graph will contain a \emph{declaration} about who is a witness for whom for $\mathcal{I}_3$-conjuncts in the final structure $\str{B}$.

Formally, we first define the \emph{pre-witnessing graph} to be the directed graph $\mathcal{G}'_w = (B,E')$ with the edge-labelling function $\ell: E' \rightarrow \{1,\ldots,|\mathcal{I}_3|\}$. In $\mathcal{G}'_w$ we connect
$b_1 = (\tau_1, m_1)$ to $b_2 = (\tau_2, m_2)$ by an arc $e$ with label $\ell(e) = j(\tau_2)$ if $m_1+1 = m_2 (\mod 4)$, $\beta(\tau_2) = \alpha(\tau_1)$, and $s(\tau_1)_{j(\tau_2)} = \alpha(\tau_2)$.
As the reader may notice:
\begin{itemize}
\item the pre-witnessing graph connects only elements which are in consecutive layers (modulo 4), 
\item for every vertex there is at least one outgoing arc in every colour, and 
\item there are no multiple arcs between the same pair of elements.
\end{itemize}
\noindent 
We next ``functionalise'' the pre-witnessing graph by restricting the arc-set $E'$ to contain exactly one outgoing arc in each colour for every vertex. Let this new arc-set be $E$.
Thus, we obtain a \emph{witnessing graph} $\mathcal{G}_w = (B,E)$ and write $b_1 \leadsto_j b_2$ if $b_1$ is connected to $b_2$ with an arc coloured by $j$.

The construction itself will be given by the following iterative algorithm.
The role of the generalized types and the witnessing graph is to guide the algorithm in selecting patterns from $\str{A}$ to be pulled into $\str{B}$, avoiding potential conflicts between different conjuncts (especially) of type $\mathcal{I}_3$.
For the clarity of the presentation, all the conditions present in the algorithm need to be taken modulo permutation of the considered elements in each step, and by $\ff_j^{-1}(a)$ we mean to take any $b \in A$ satisfying $\ff_j(b)=a$.
The algorithm is as follows:

\smallskip\noindent
{\em Stage 1: 1-types.} We set the $1$-types of elements $b$ in $\str{B}$, as given by $\alpha(b)$.

\smallskip\noindent
{\em Stage 2: 2-outer-types.} For all pairs $\{ b_1=(\tau_1,m_1),b_2=(\tau_2,m_2) \}$ of elements from $B$, $b_1 \not= b_2$ do:
  \begin{itemize} \itemsep0pt
    \item if $\outertype{\str{B}}{b_1, b_2}$ is already defined, then continue.
    \item if $b_1 \leadsto_j b_2$ for some $j$, then set $(\str{B} \restr \{ b_1, b_2\}) \leftarrow pull(\str{A}, b_1 \leftarrow a_{\tau_1}, b_2 \leftarrow \ff_j(a_{\tau_1}) )$.
    \item otherwise, set $(\str{B} \restr \{b_1, b_2\}) \leftarrow pull(\str{A}, b_1 \leftarrow a_{\tau_1}, b_2 \leftarrow a_{\tau_2} )$.
  \end{itemize}

\smallskip\noindent
{\em Stage 3: 3-outer-types.} For all triples $\{ b_1=(\tau_1,m_1),b_2=(\tau_2,m_2),b_3=(\tau_3,m_3) \}$ of pairwise distinct elements from $B$
do:
  \begin{itemize} \itemsep0pt 
    \item if $\outertype{\str{B}}{b_1,b_2,b_3}$ is already defined, then continue.
    \item if $b_1 \leadsto_j b_2$ and $b_2 \leadsto_k b_3$, for some $j,k$, then set $$\str{B} \restr_* \{b_1, b_2, b_3\} \leftarrow pull^*(\str{A}, b_1 \leftarrow \ff_j^{-1}(a_{\tau_2}), b_2 \leftarrow a_{\tau_2}, b_3 \leftarrow \ff_k(a_{\tau_2})).$$

    \item if $b_1 \leadsto_j b_2$ and $b_1 \leadsto_k b_3$, for some $j,k$, then set $$\str{B} \restr_* \{b_1, b_2, b_3\} \leftarrow pull^*(\str{A}, b_1 \leftarrow a_{\tau_1}, b_2 \leftarrow \ff_j(a_{\tau_1}), b_3 \leftarrow \ff_k(a_{\tau_1})).$$

    \item if $b_2 \leadsto_j b_1$ and $b_3 \leadsto_k b_1$, then set $$\str{B} \restr_* \{b_1, b_2, b_3\} \leftarrow pull^*(\str{A}, b_1 \leftarrow a_{\tau_1}, b_2 \leftarrow \ff_j^{-1}(a_{\tau_1}), b_3 \leftarrow \ff_k^{-1}(a_{\tau_1})).$$

    \item if $b_1 \leadsto_j b_2$ for some $j$ and $b_3$ is connected to neither $b_1$ nor $b_2$ in the witnessing graph, then set $$\str{B} \restr_* \{b_1, b_2, b_3\} \leftarrow pull^*(\str{A}, b_1 \leftarrow a_{\tau_1}, b_2 \leftarrow \ff_j(a_{\tau_1}), b_3 \leftarrow a_{\tau_3}).$$

    \item otherwise, set $$\str{B} \restr_* \{b_1, b_2, b_3\} \leftarrow pull^*(\str{A}, b_1 \leftarrow a_{\tau_1}, b_2 \leftarrow a_{\tau_2}, b_3 \leftarrow a_{\tau_3} )$$.
  \end{itemize}

\begin{clm}
  $\str{B} \models \phi_1$.
\end{clm}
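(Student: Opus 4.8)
The plan is to transfer the satisfaction of each conjunct of $\phi_1$ from $\str{A}$ to $\str{B}$ through the pull operations used in the construction, relying on Lemma~\ref{l:preserving}. Recall that every outer-type of $\str{B}$ on a set of at most three elements is, by Stages~1--3, a faithful copy (via $pull$/$pull^*$) of the outer-(sub)structure of $\str{A}$ induced by the corresponding pattern elements, and that all pattern elements are chosen to carry the $1$-type prescribed in Stage~1, so that no pull creates a conflict. Since $\phi$ uses only three variables, evaluating any $\phi_i$ at a tuple of elements of $B$ only ever refers to the outer-type of the set of \emph{distinct} elements occurring in that tuple, a set of size at most three; hence Lemma~\ref{l:preserving} is always applicable.

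I would first dispose of the universal conjuncts ($\mathcal{I}_1$ and $\mathcal{I}_2$). Fix such a conjunct $\forall \bar{x}\, \phi_i$ and an arbitrary valuation $\bar{b}$ of $\bar{x}$ in $B$ (possibly with repetitions). Let $c_1,\dots,c_r$ ($r \le 3$) be the distinct elements occurring in $\bar{b}$ and let $p_1,\dots,p_r$ be the pattern elements from which the outer-type of $\{c_1,\dots,c_r\}$ was pulled. Because $\phi_i$ is uniform, each of its non-unary atoms uses all the variables of $\phi_i$, so after substituting $\bar{b}$ it mentions all of $c_1,\dots,c_r$; thus Lemma~\ref{l:preserving} gives $\str{B} \models \phi_i(\bar{b})$ iff $\str{A} \models \phi_i(\bar{p})$, where $\bar{p}$ is obtained from $\bar{b}$ by replacing each $c_s$ with $p_s$. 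The latter holds since $\str{A} \models \forall \bar{x}\, \phi_i$. This settles $\str{B} \models \bigwedge_{i \in \mathcal{I}_1 \cup \mathcal{I}_2} \forall \bar{x}\, \phi_i$.

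The heart of the argument is the $\mathcal{I}_3$-conjuncts. Fix $j \in \mathcal{I}_3$ and $b_1 \in B$. By the functionalisation of the witnessing graph there is a unique $b_2$ with $b_1 \leadsto_j b_2$, and I claim $b_2$ is a good witness, i.e.\ $\str{B} \models \phi_j(b_1, b_2, b_3)$ for every $b_3 \in B$. The point is that in each branch of Stage~3 the pull defining the outer-type of $\{b_1,b_2,b_3\}$ turns the pair $(b_1,b_2)$ into a copy of a genuine witness pair $\bigl(w, \ff_j(w)\bigr)$ of $\str{A}$; then $\str{A} \models \forall z\, \phi_j(w, \ff_j(w), z)$ together with Lemma~\ref{l:preserving} yields $\str{B} \models \phi_j(b_1, b_2, b_3)$. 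To check this uniformly I would argue by the layer of $b_3$ relative to those of $b_1$ and $b_2$. Since arcs join only consecutive layers modulo~$4$, the arc $b_1 \leadsto_j b_2$ already forbids any reverse arc $b_2 \leadsto b_1$, and for $b_3$ exactly one kind of arc to or from $\{b_1,b_2\}$ is possible depending on its layer: this routes the triple into Case~2 or Case~4 ($b_3$ in $b_2$'s layer), into Case~1 as the path $b_1 \to b_2 \to b_3$ ($b_3$ two layers ahead), into Case~1 as the path $b_3 \to b_1 \to b_2$ ($b_3$ one layer behind), or into Case~3 with common target $b_2$ ($b_3$ in $b_1$'s layer); Case~5 can never be triggered, because the arc $b_1 \leadsto_j b_2$ guarantees that one of Cases~1--4 applies. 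A direct inspection of the pull in each case shows that the pattern of $(b_1,b_2)$ is always of the form $\bigl(w, \ff_j(w)\bigr)$, with $w = a_{\tau_1}$ or $w = \ff_j^{-1}(a_{\tau_2})$ according to the case. The degenerate tuples $b_3 \in \{b_1, b_2\}$ reduce to the $2$-outer-type of $\{b_1,b_2\}$, which Stage~2 pulls from $\bigl(a_{\tau_1}, \ff_j(a_{\tau_1})\bigr)$, again a witness pair.

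The main obstacle is precisely this last case analysis: one must verify that no admissible configuration of witnessing arcs among $b_1,b_2,b_3$ forces a pull in which $b_2$ fails to be the $\ff_j$-image of $b_1$. The modular layering into $\{0,1,2,3\}$ is exactly the device that makes the analysis go through, as it simultaneously excludes two-way arcs and guarantees that at most one arc connects $b_3$ to $\{b_1,b_2\}$, leaving only the handful of cases listed above.
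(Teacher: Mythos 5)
Your proposal is correct and follows essentially the same route as the paper's own argument: transfer of the universal conjuncts via the pull operations and Lemma~\ref{l:preserving}, and satisfaction of the $\mathcal{I}_3$-conjuncts via the outgoing $j$-coloured arc in the witnessing graph, whose target is turned into a genuine witness because every Stage-3 pull maps the pair $(b_1,b_2)$ with $b_1 \leadsto_j b_2$ onto a pair of the form $\bigl(w,\ff_j(w)\bigr)$ in $\str{A}$. In fact your layer-by-layer case analysis spells out precisely the details the paper's sketch leaves to the reader.
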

\begin{proof}
  Most of the proof follows from the same case analysis as in the algorithm itself. Hence, we give just a sketch and leave some details for the reader.

  First notice that in the final structure $\str{B}$ all the $\mathcal{I}_1$- and $\mathcal{I}_2$-conjuncts are satisfied, since we have defined all the $2$- and $3$-outer-types in $\str{B}$ by copying (pulling) them from $\str{A}$.
  More precisely, let $\forall x \forall y \forall z \phi_i(x,y,z)$ be one of the $\mathcal{I}_2$-conjuncts (for $\mathcal{I}_1$-conjuncts the situation is analogous).
  Thanks to the uniformity of our logic, all the atoms in $\phi_i$ either use all the variables $x,y,z$ or just one of them.
  Hence, if we evaluate $\phi_i(a,b,c)$ on any three (not necessary distinct) elements $a,b,c \in B$, the value of $\phi_i(a,b,c)$ depends only on: the 1-types of $a,b,c$, and the connections joining all the three elements $a,b,c$ simultaneously.
  Since, the 1-types are fixed beforehand, and the connections joining all the elements $a,b,c$ are pulled as an outer-structure from a valid model $\str{A}$, then $\str{B} \models \phi_i(a,b,c)$ must hold by Lemma \ref{l:preserving}.
  One can also notice that no conflicts were introduced when copying different outer-structures.
  Indeed, in the algorithm we defined outer-structure on every at most three-element subset of $B$ exactly once, and the outer-structures cannot speak about any proper subset of elements of its domain, except the 1-types which where fixed already in \emph{Step 1}.

Let us now explain that the $\mathcal{I}_3$-conjuncts are satisfied.
  Let $\forall x \exists y \forall z \phi_j(x,y,z)$ be any $\mathcal{I}_3$-conjunct. Fix an element an assume that $b \in B_m$.
  We need to see that in the next layer, i.e. $B_{(m+1)\mod 4}$, there exists an element that can serve as a witness for $b$. It is equivalent to having an outgoing edge from $b$ coloured by $j$.
  Indeed, if there exists  such an edge, then the witnessing graph forces us to pick an appropriate pattern from a model $\str{A}$.
  Recall that each layer $B_m$ consists of pairs $(\tau,m)$, where $\tau$ ranges over all generalized types realized in $\str{A}$.
  Therefore, there must be a representative $a$ realizing $\tau$ in $\str{A}$.
  Hence, we have an element $\ff_j(a)$, such that $\str{A} \models \forall z \phi_j(a,\ff_j(a),z)$, which itself realizes some other generalized type $\tau'$.
  Thus, in $B_{(m+1)\mod 4}$ we have an element $b' = (\tau',(m+1)\mod 4)$ satisfying $b \leadsto_j b'$.
  The fact that $\str{B} \models \forall z \phi_j(b,b',z)$ comes from the same line of arguments as in the previous paragraph.
\end{proof}
\noindent 
In the resulting structure $\str{B}$, similarly as for the structure $\str{A}$, we define witness functions $\ff_j: B \to B$ for each $\mathcal{I}_3$-conjunct.
More precisely, we can set $\ff_j(b) = b'$ if $b \leadsto_j b'$; from the construction it follows that $\str{B} \models \forall z \phi_j(b,\ff_i(b),z)$.
Note that the new functions $\ff_j$ are not necessarily injective, but they are without fixpoints and have disjoint images.

Finally, we give a bound on the size of the produced structure $\str{B}$.
The number of generalized types can be trivially bounded by $t(t+1)(|\mathcal{I}_3|+1)t^{|{\mathcal{I}_3}|}$, where $t=|\AAA|$. As mentioned in the Preliminaries (Lemma~\ref{l:ntyp}) $t$ can be bounded by a function exponential in $|\phi|$, and $|\mathcal{I}_3|$ is clearly  polynomial in $|\phi|$. Hence:
\begin{clm}
The domain of $\str{B}$ is bounded exponentially in $|\phi|$.
\end{clm}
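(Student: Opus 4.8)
The plan is to read the bound off directly from the definition of the domain. Since $B=\gtp(\str{A})\times\{0,1,2,3\}$, we have $|B|=4\cdot|\gtp(\str{A})|$, so it suffices to show that the number of generalized types realized in $\str{A}$ is exponentially bounded in $|\phi|$; the factor of $4$ contributed by the layers is a harmless constant.

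To count generalized types, I would simply multiply the sizes of the four coordinate sets from which such a type is drawn. Writing $t=|\AAA|$ for the number of $1$-types realized in $\str{A}$, a generalized type lives in $\AAA\times(\AAA\cup\{\bot\})\times(\{1,\dots,|\mathcal{I}_3|\}\cup\{\bot\})\times\AAA^{|\mathcal{I}_3|}$, so the first coordinate contributes a factor $t$, the second $t+1$, the third $|\mathcal{I}_3|+1$, and the last $t^{|\mathcal{I}_3|}$. This yields the crude upper bound $|\gtp(\str{A})|\le t(t+1)(|\mathcal{I}_3|+1)\,t^{|\mathcal{I}_3|}$ already announced before the claim.

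It then remains to verify that this product is exponential in $|\phi|$. By Lemma~\ref{l:ntyp} the number of $1$-types over the signature of $\phi$ is exponential, so there is a polynomial $q$ with $t\le 2^{q(|\phi|)}$; and $|\mathcal{I}_3|$, being bounded by the number of conjuncts of $\phi$, is at most $|\phi|$. The factors $t$, $t+1$ and $|\mathcal{I}_3|+1$ are each clearly at most exponential, and a product of finitely many exponential functions is again exponential. The only factor deserving attention, and the single step I would treat with care, is $t^{|\mathcal{I}_3|}$: here an exponential quantity is raised to a polynomial power, giving $t^{|\mathcal{I}_3|}\le\bigl(2^{q(|\phi|)}\bigr)^{|\phi|}=2^{q(|\phi|)\cdot|\phi|}$. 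Since the exponent $q(|\phi|)\cdot|\phi|$ is a product of two polynomials, it is itself polynomial in $|\phi|$, so $t^{|\mathcal{I}_3|}$ remains exponential. Multiplying all the bounds together, and with the constant $4$, keeps us within an exponential bound, which proves the claim.
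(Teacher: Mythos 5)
Your proposal is correct and follows essentially the same route as the paper: the paper also bounds $|\gtp(\str{A})|$ by $t(t+1)(|\mathcal{I}_3|+1)t^{|\mathcal{I}_3|}$ with $t=|\AAA|$, invokes Lemma~\ref{l:ntyp} for the exponential bound on $t$ and the polynomial bound on $|\mathcal{I}_3|$, and concludes. Your explicit verification that $t^{|\mathcal{I}_3|}=2^{q(|\phi|)\cdot|\phi|}$ stays exponential is the one step the paper leaves implicit, and it is handled correctly.
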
 

\subsection{Finite model property for AUF$\mathbf{_1^3}$} \label{s:phi2}

Now we build a model $\str{C} \models \phi_1 \wedge \phi_2$. It will be composed out of some number of copies
of the structure $\str{B} \models \phi_1$ from the previous subsection, carefully connected to provide witnesses
for conjuncts from $\phi_2$.  

\smallskip\noindent
\emph{Stage 0: The domain.} Denote $\mathcal{I}=\mathcal{I}_4 \cup \mathcal{I}_5 \cup \mathcal{I}_6$.
Let us form $\str{C}$ over the following finite domain.
$$C=B \times \mathcal{I} \times \{0, 1\} \times \{0, \ldots, 5\}.$$

\noindent

For every $i \in \mathcal{I} , u=0,1$, and $0 \le m \le 5$ make the substructure $\str{C} \restr (B \times \{i\} \times \{u\} \times \{m\})$ isomorphic to $\str{B}$, via
the natural isomorphism sending each $(b,i,u,m)$ to $b$. We also naturally transfer the functions $\ff_j$ and $\pat$ from $\str{B}$ to $\str{C}$, setting $\ff_j((b,i,u,m))=(\ff_j(b), i,u,m)$ and $\pat((b,i,u,m))=\pat(b)$, for
all $b, i, u, m$. For convenience, we split $C$ into five subsets $C_m=B \times \mathcal{I} \times \{0, 1\} \times \{m\}$,
for $m=0,\ldots, 5$.

In the next stages we will successively take care of the all types of conjuncts of $\phi$.
We will describe a strategy for providing witnesses for $\mathcal{I}_4$-, $\mathcal{I}_5-$, and $\mathcal{I}_6$-conjuncts.
Our strategy will guarantee no conflicts, e.g., if a $3$-outer-type is defined
on  a tuple $c_1, c_2, c_3$ to make $c_3$ a witness for $c_1$, $c_2$ for an $\mathcal{I}_5$-conjunct,
then this tuple will not be used any more for a similar task (e.g., to make $c_1$ a witness for $c_2,c_3$).
To design such a strategy we fix an auxiliary injective function $\chc$ taking a subset of $\{0, \ldots, 5\}$ of size two, and extending it by $1$ element from
$\{0, \ldots, 5\}$. Such a function exists by Lemma \ref{l:comb}.
Having provided the witnesses we will carefully complete the structure ensuring that for every $c \in C$, $\ff_j(c)$ remains a witness
for $c$ and the $\mathcal{I}_3$-conjunct indexed by $j$.

\smallskip\noindent
\emph{Stage 1: Witnesses for $\mathcal{I}_4$-conjuncts.} For every  $c \in C$ and every $i \in \mathcal{I}_4$ repeat the following. Let $m$ be the index such
that $c \in C_m$. Let $a=\pat(c)$. Let $a' \in A$ be such that $\str{A} \models \phi_i(a,a')$. It may happen that $a'=a$. Take as $c'$ any element from $B \times \{i \} \times \{0 \} \times \{m+1 (\mod 6) \}$
with the same $1$-type as $a'$. Set $(\str{C} \restr \{c,c'\}) \leftarrow pull(\str{A}, c \leftarrow a, c' \leftarrow a')$.

\smallskip\noindent
\emph{Stage 2: Witnesses for $\mathcal{I}_5$-conjuncts.} For every unordered pair of elements $c_1, c_2 \in C$, $c_1 \not= c_2$ and every $i \in \mathcal{I}_5$ repeat the following. Let $m_1, m_2$ be the indices such that $c_1 \in C_{m_1}$
and $c_2 \in C_{m_2}$.  We
choose two pattern elements $a_1, a_2 \in A$ with the same $1$-types as $c_1$ and, resp., $c_2$, and an index $m'$ as follows.
\begin{enumerate}[label=(\alph*),align=left] 
	\item If $c_2=\ff_j(c_1)$ for some $j$ (in this case $m_1=m_2$), then let $a_1=\pat(c_1)$, and let $a_2=\ff_j(a_1)$;  set $m'=m_1+1 (\mod 6)$.
		
	\item If $c_1=\ff_j(c_2)$ for some $j$ ($m_1=m_2$), then let $a_2=\pat(c_2)$, and let $a_1=\ff_j(a_2)$;  set $m'=m_1+1 (\mod 6)$.

	\item If none of the above holds, then let $a_1=\pat(c_1)$, $a_2=\pat(c_2)$
	(it may happen that $a_1=a_2$); if $m_1=m_2$, then set $m'=m_1 + 1 (\mod 6)$, otherwise
	set $m'$ to be the only element of  $\chc(\{m_1,m_2\}) \setminus \{m_1, m_2\}$. 
\end{enumerate}
	\noindent 
	Let $a' \in A$ be such that $\str{A} \models \phi_i(a_1,a_2, a')$, and let $a''$ be such that $\str{A} \models \phi_i(a_2,a_1, a'')$.
	Take as $c'$  any element from $B \times \{i \} \times \{0 \} \times \{m' \}$ with the same $1$-type as $a'$, and as $c''$ any element 
from $B \times \{i \} \times \{1 \} \times \{m' \}$ with the same $1$-type as $a''$. Set $\str{C} \restr_* \{c_1,c_2,c'\} \leftarrow pull^*(\str{A}, c_1 \leftarrow a_1,c_2 \leftarrow a_2, c' \leftarrow a'))$, and
set $\str{C} \restr_* \{c_2,c_1,c''\} \leftarrow pull^*(\str{A}, c_2 \leftarrow a_2, c_1 \leftarrow a_1, c'' \leftarrow a'')$.
	
	Additionally, for any $c \in C$ and any $i \in \mathcal{I}_5$, repeat the following. Let $m$ be the index such that $c \in C_m$. Let $a=\pat(c)$. Let $a' \in A$ be such that $\str{A} \models \phi_i(a,a,a')$. 
Take as $c'$ any element from $B \times \{i \} \times \{0 \} \times \{m+1 (\mod 6) \}$
with the same $1$-type as $a'$. Set $(\str{C} \restr \{c,c'\}) \leftarrow pull(\str{A}, c \leftarrow a, c' \leftarrow a')$.

\smallskip\noindent
\emph{Stage 3: Witnesses for $\mathcal{I}_6$-conjuncts.}
For every $c \in C$, and every $i \in \mathcal{I}_6$, repeat the following. Let $m$ be the index such that $c \in C_m$. Let $a = \pat(c)$. 
Let $a', a'' \in A$ be such that $\str{A} \models \phi_i(a,a',a'')$. Take as $c'$ any element in $C_m$, in a different copy
of $\str{B}$ than $c$, with the same $1$-type as $a'$. Take as $c''$ any element in $B \times \{i \} \times \{0 \} \times \{m+1 (\mod 6) \}$ with the same $1$-type as
$a''$, and set $\str{C} \restr_* \{c,c',c''\} \leftarrow pull^*(\str{A}, c \leftarrow a, c' \leftarrow a',c'' \leftarrow a'')$.

\smallskip\noindent
\emph{Stage 4: Taking care of the $\mathcal{I}_3$-conjuncts in $\str{C}$.} For every $c \in C$, for every $c' \in C$ and every $j \in \mathcal{I}_3$ such that
the outer-substructure on $\{c, \ff_j(c), c' \}$ has not yet been defined (note that the elements $c, \ff_j(c), c'$ must be pairwise distinct in this case), find any $a' \in A$ with the same $1$-type as $c'$, take $a=\pat(c)$,
and set $\str{C}\restr_* \{c, \ff_j(c), c' \} \leftarrow pull^*(\str{A}, c \leftarrow a, \ff_j(c) \leftarrow \ff_j(a), c' \leftarrow a')$.

\smallskip\noindent
\emph{Stage 5: Completing $\str{C}$.} For any tuple of pairwise  distinct elements $c_1,c_2,c_3 \in C$ for which 
the outer-substructure of $\str{C}$ on  has not yet been defined, set 
$\str{C}\restr_* \{c_1, c_2, c_3 \} \leftarrow pull^*(\str{A}, c_1 \leftarrow \pat(c_1), c_2 \leftarrow \pat(c_2), c_3 \leftarrow \pat(c_3))$.  
Similarly, for any pair of distinct elements $c_1,c_2 \in C$ for which 
the substructure of $\str{C}$  has not yet been defined, set 
$\str{C}\restr \{c_1, c_2\} \leftarrow pull(\str{A}, c_1 \leftarrow \pat(c_1), c_2 \leftarrow \pat(c_2))$.  

\begin{clm}
$\str{C} \models \phi$.
\end{clm}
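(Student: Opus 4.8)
The plan is to check the six groups of conjuncts one at a time, grouping them as the purely universal conjuncts $\mathcal{I}_1\cup\mathcal{I}_2$, the three existential groups $\mathcal{I}_4,\mathcal{I}_5,\mathcal{I}_6$, and finally the delicate $\mathcal{I}_3$-conjuncts. The common engine throughout is Lemma~\ref{l:preserving} together with the uniformity of $\phi$: for any $i$, the truth of the matrix $\phi_i$ on a tuple of elements of $\str{C}$ depends only on the $1$-types of those elements and on the outer-type of the whole tuple. Every outer-substructure of $\str{C}$ on at most three elements is ultimately copied from $\str{A}$ by a $pull$ or $pull^*$ — either inside a copy of $\str{B}$ (fixed in Stage~0) or in one of Stages~1--5 — so it suffices, for each conjunct, to exhibit a source tuple in $\str{A}$ that satisfies the corresponding matrix and whose $1$-types match. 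As a preliminary \emph{no-conflict} observation I would record that each outer-substructure is defined exactly once: within-copy triples are supplied entirely by $\str{B}$, while the witness elements produced in Stages~1--3 always sit in a layer different from (at least one of) the elements they serve, so by the design of the layers and the extension function $\chc$ no triple is pulled twice with conflicting data.

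For $\mathcal{I}_1$- and $\mathcal{I}_2$-conjuncts the argument is immediate. Since these are purely universal and $\str{A}\models\phi$, \emph{every} pair and triple of $\str{A}$ satisfies the relevant matrix; as every $2$- and $3$-outer-type of $\str{C}$ is a pull from some tuple of $\str{A}$, Lemma~\ref{l:preserving} and uniformity give $\str{C}\models\phi_i$ on all tuples with no further choices to make.

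For $\mathcal{I}_4$, $\mathcal{I}_5$ and $\mathcal{I}_6$ I would read off the witnesses from the explicit constructions. In each of Stages~1--3 the required witness(es) are introduced by pulling from a tuple of $\str{A}$ chosen precisely so that the matrix holds there, e.g.\ $\str{A}\models\phi_i(a_1,a_2,a')$ for the $\mathcal{I}_5$-witness $c'$; the $1$-types match because each witness is taken with the same $1$-type as its pattern and $\alpha(\cdot)=\type{\str{A}}{\pat(\cdot)}$, so Lemma~\ref{l:preserving} transfers satisfaction to $\str{C}$. For $\mathcal{I}_5$ I would also point to the ``additionally'' clause, which supplies the diagonal witnesses for the case $c_1=c_2$, so that all \emph{ordered} pairs with repetition are served; for $\mathcal{I}_6$ the single pull of Stage~3 provides both existential witnesses at once.

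The hard part is the $\mathcal{I}_3$-conjuncts $\forall x\exists y\forall z\,\phi_j$, because the trailing $\forall z$ ranges over all of $C$, so the witness $\ff_j(c)$ fixed inside a copy of $\str{B}$ must survive against \emph{every} third element, including the fresh elements introduced only to serve $\phi_2$. I would argue that the $2$-outer-type of $\{c,\ff_j(c)\}$ lives inside one copy of $\str{B}$ and already satisfies $\phi_j$ (as $\str{B}\models\forall z\,\phi_j(b,\ff_j(b),z)$), disposing of $z\in\{c,\ff_j(c)\}$, and that for $z=c'$ distinct from both it suffices that the $3$-outer-type of $\{c,\ff_j(c),c'\}$ satisfy $\phi_j$. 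Such a triple is defined either inside a copy of $\str{B}$ — precisely when $c'$ is itself a witness of $c$ or of $\ff_j(c)$, so all three elements share one copy and the careful Stage~3 of Subsection~\ref{s:phi1} already guarantees $\phi_j$ — or in Stage~2, cases (a)/(b), where the pull is taken from $\{a_1,\ff_j(a_1),a'\}$ and $\str{A}\models\phi_j(a_1,\ff_j(a_1),a')$ because $\ff_j(a_1)$ witnesses $a_1$ in $\str{A}$ (the index $j$ being unambiguous since the images of the $\ff_j$ are disjoint) — or else in Stage~4, where by design we pull $\ff_j(c)\leftarrow\ff_j(a)$ with $a=\pat(c)$, so again $\str{A}\models\phi_j(a,\ff_j(a),a')$. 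In every case Lemma~\ref{l:preserving} yields $\str{C}\models\phi_j(c,\ff_j(c),c')$. The point I would emphasize is that Stage~4 pulls the middle element \emph{along $\ff_j$} rather than along $\pat$: had these triples been left to the generic completion in Stage~5, it would have used $\pat(\ff_j(c))$, which need not equal $\ff_j(\pat(c))$, and the witness could have been destroyed — it is exactly the reservation of the witness-triples for Stage~4, made possible by the separate layers and the function $\chc$, that rescues the $\mathcal{I}_3$-conjuncts. Assembling the three parts gives $\str{C}\models\phi_1\wedge\phi_2=\phi$.
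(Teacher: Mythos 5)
Your proof is correct and follows essentially the same route as the paper's: universal conjuncts via the fact that every outer-substructure of $\str{C}$ is pulled from $\str{A}$ (Lemma~\ref{l:preserving} plus uniformity), existential witnesses read off from Stages~1--3, preservation of the $\mathcal{I}_3$-witnesses through Stage~2(a)/(b) and Stage~4, and conflict-freeness via the layer scheme and the injectivity of $\chc$. You spend more care on the $\mathcal{I}_3$ case and less on the conflict analysis than the paper does, but the argument is the same.
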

\begin{proof}
The purely universal conjuncts of $\phi$ ($\mathcal{I}_1-$ and $\mathcal{I}_2$-conjuncts) are satisfied since every $1$-type, $2$-outer-type
and $3$-outer-type defined in $\str{C}$ is copied (pulled) from $\str{A}$ (either in this subsection or in the previous one where $\str{B}$ was
constructed), which is a model of $\phi$.  All elements have witnesses for $\mathcal{I}_4$- $\mathcal{I}_5$-, and
$\mathcal{I}_6$-conjuncts, as this is guaranteed in Stages 1, 2, and 3, respectively. Finally, for every $c$ and every $j \in \mathcal{I}_3$,
$\ff_j(c)$ is a witness for $c$ and $\forall x \exists y \forall z \phi_j(x,y,z)$ in its copy of $\str{B}$. In Stage 4 we carefully define
$3$-outer-types on tuples consisting of $c, \ff_j(c)$ and elements from the other copies of $\str{B}$, so that $\ff_j(c)$ remains a witness
for $c$ in the whole structure $\str{C}$. 

The only point which remains to be explained is that there are no conflicts in assigning types to tuples of elements. First note
that there are no conflicts when assigning $2$-outer-types. They are defined in Stage 1 and in the last paragraph of Stage 2 (and completed
in Stage 5--but this Stage may not be a source of conflicts as it sets only the types which have not been set before).
The strategy is that the elements from $C_m$ look for witnesses in $C_{m+1 \text{ mod } 5}$,  and if an element $c$ looks for witnesses for 
conjuncts indexed by different $i_1, i_2$, then it always looks for them in different copies of $\str{B}$. (A reader familiar
with \cite{GKV97} may note that the way we deal with $2$-outer-types is similar to the one used in that paper, where three sets
of elements are used and the witnessing scheme is modulo $3$.) Considering $3$-outer-types, they are set in Stages 2, and 3 (plus
the completion in Stage 4 and Stage 5). No outer-type definition from Stage 2 can conflict with an outer-type definition from Stage 3: in Stage 2 we define
outer-types on tuples consisting of elements either belonging two three different $C_m$, or two of them
belongs to the same $C_m$ and the third one to $C_{m+1}$, to a copy of $\str{B}$ indexed by $\mathcal{I}_5$;
in Stage 3 we set outer-types on tuples consisting of two elements from the same $C_m$ and an element from $C_{m+1}$, but from
a copy of $\str{B}$ indexed by $\mathcal{I}_6$. Definitions from the same stage, involving two different $C_m$ do not conflict
with each other due to our circular witnessing scheme requiring elements from $C_m$ to look for witnesses in $C_{m+1}$.
The most interesting case is setting $3$-outer-types for elements from  $3$ different $C_m$-s (Stage 2). This is done without
conflict since the function $\chc$ is injective.
\end{proof}
\noindent 
Since $|C|=8|B||\mathcal{I}|$, $|\mathcal{I}|$ is linear in $|\phi|$ and $|B|$ is exponential in $\phi$
we see that $|C|$ is bounded exponentially in $|\phi|$, which completes the proof of Theorem. \ref{t:fmpauf}.

\section{An excursion into three-variable logic}
\label{s:fothreem}

The main technical advance in this Section will be
showing that the decidability is retained if we relax the uniformity conditions in normal form for  \AUFthree{}, by allowing the formulas $\phi_i$ in $\mathcal{I}_5$-
and $\mathcal{I}_6$-conjuncts to be arbitrary (not necessarily uniform) Boolean combinations of atoms.
We believe that this result is interesting in itself, but additionally it allows us to define a pretty rich decidable class 
of \FOthree{} formulas with nested quantification, which we call \FOthreem{}. Recall that \FOthreem{} is defined in  Preliminaries, where we also observed that it contains both \FOt{} and \AUFthree{}.

Let us see that we are indeed able to reduce \FOthreem{} sentences to normal form resembling that of \AUFthree{}. We say that an \FOthreem{} sentence is in normal form if $\phi=\phi_1 \wedge \phi_2$ for
\begin{eqnarray*}
&\phi_1=  \bigwedge_{i \in \mathcal{I}_2} \forall x \forall y \forall z \phi_i(x,y,z) \wedge 
\bigwedge_{i \in \mathcal{I}_3} \forall x \exists y \forall z \phi_i(x,y,z), \text{ and }\\
&\phi_2= \bigwedge_{i \in \mathcal{I}_5} \forall x \forall y \exists z (x\not=y \rightarrow \phi_i(x,y,z)),
\end{eqnarray*}
where the $\phi_i$ for $i \in \mathcal{I}_2 \cup \mathcal{I}_3$ are positive Boolean combinations of literals using
either one variable or all variables from $\{x,y,z\}$ (uniform combinations), 
and the $\phi_i$ for $i \in \mathcal{I}_5$ are  positive Boolean combinations of arbitrary literals (without equality). 
We note that even though
we generally do not admit equality we allow ourselves to use $x \not= y$ as a premise in $\mathcal{I}_5$-conjuncts;
as we will see this use of equality simplifies model constructions. As those conjuncts arise as equivalent replacements of some equality-free conjuncts $\forall x \forall y \exists z \phi_i'(x,y,z)$ (see the last step of the proof sketch below), we may safely use Lemma \ref{l:preserving}, concerning the pull operation for normal form \FOthreem{} formulas.

\begin{lem}\label{l:nf3}
(i)
The satisfiability problem for \FOthreem{} can be reduced in nondeterministic polynomial time to the satisfiability problem
for normal form \FOthreem{} sentences.
(ii) If the class of all normal form \FOthreem{} sentences has the finite (exponential) model property, then also the whole \FOthreem{} has the
finite (exponential) model property. 
\end{lem}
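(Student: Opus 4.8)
\emph{The plan.} The plan is to imitate the Scott-style reduction used for \AUF{} in Lemma~\ref{l:nf}. Assuming $\phi$ is in negation normal form, I would repeatedly select an innermost \FOthreem{}-subformula $\psi$ that begins with a quantifier (so that, by innermost-ness, its matrix is already quantifier-free), introduce a fresh relational symbol $P_\psi$ whose arity equals the number of free variables of $\psi$ (one of $0,1,2$), add a conjunct axiomatising it by the one-directional implication $P_\psi(\bar v)\rightarrow\psi$, and replace $\psi$ by $P_\psi(\bar v)$ in $\phi$. Since $\phi$ is in NNF, every occurrence of $\psi$ is positive, so the one-directional axiom suffices for equisatisfiability, exactly as in Lemma~\ref{l:nf}. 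The loop is deterministic; the only nondeterminism --- needed to match the exact shape of the normal form --- is the guessing of the truth values of the fresh $0$-ary symbols coming from sub-sentences, handled as in Corollary~\ref{l:nf2}. Note that \FOthreem{} has no block $\exists v\exists v'$ in its syntax, so the pattern $\forall\exists\exists$ never appears, which is why the normal form needs only the three conjunct types.

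\emph{Matching the conjunct shapes.} The core is a finite case analysis showing that, after pushing the implication through the prefix of $\psi$, each axiom is a conjunction of formulas of one of the three admissible shapes:
\begin{itemize}
\item if $\psi=\exists v\,\chi$ is produced by rule (iii), the axiom becomes $\forall\bar v\,\exists v(\neg P_\psi(\bar v)\vee\chi)$; padding the prefix with dummy universal variables so that exactly two universal quantifiers precede the existential one yields a $\forall\forall\exists$-conjunct with an arbitrary quantifier-free matrix, i.e.\ an $\mathcal I_5$-conjunct;
\item if $\psi=\forall v\,\chi$ comes from rule (iv), with $\chi$ over $\{v,v'\}$, the axiom is the two-variable universal $\forall v'\forall v(\neg P_\psi(v')\vee\chi)$, which I also place among the $\mathcal I_5$-conjuncts by adjoining a dummy existential variable that the matrix does not mention;
\item if $\psi=\exists v\forall v'\chi$ or $\psi=\forall v\forall v'\chi$ comes from rule (v), the axiom becomes $\forall w\,\exists v\,\forall v'(\neg P_\psi(w)\vee\chi)$, respectively $\forall w\,\forall v\,\forall v'(\neg P_\psi(w)\vee\chi)$, i.e.\ an $\mathcal I_3$-, respectively an $\mathcal I_2$-conjunct (corner cases where $v,v'$ coincide reduce to lower-arity cases).
\end{itemize}

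\emph{The main obstacle.} I expect the delicate step to be verifying the \emph{uniformity} of the $\mathcal I_2$- and $\mathcal I_3$-conjuncts: their matrices may contain only literals using a single variable or all of $x,y,z$, so in particular no \emph{binary} fresh atom $P_\psi$ is allowed to appear. This is exactly guaranteed by the shape of rule (v): the subformulas combined there are required to have at most one free variable, whence every fresh atom substituted into an $\mathcal I_2$- or $\mathcal I_3$-axiom is unary or nullary (hence uniform), while the remaining literals use all three variables by the same rule. Binary fresh atoms are created only when a subformula with two free variables is abstracted, and such subformulas occur only inside rules (iii) and (iv), all of whose axioms land among the arbitrary-matrix $\mathcal I_5$-conjuncts. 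This is the structural reason the relaxation of uniformity can be confined to the existential conjuncts.

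\emph{Diagonal, termination, and part (ii).} Each $\mathcal I_5$-axiom first arises in the equality-free form $\forall x\forall y\exists z\,\phi_i'(x,y,z)$, which I split into the guarded conjunct $\forall x\forall y\exists z(x\neq y\rightarrow\phi_i')$ and the diagonal $\forall x\exists z\,\phi_i'(x,x,z)$. The diagonal is a two-variable $\forall\exists$ statement; re-inserting it as an $\mathcal I_5$-conjunct with a dummy universal variable produces a matrix not mentioning that variable, so its own diagonal imposes nothing new and the recursion stops (the two-variable universal conjuncts of rule (iv) behave analogously, their diagonals being one-variable universals absorbed into $\mathcal I_2$). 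Faithfulness of the guarded form presupposes domains of size at least two; the one-element case is decided trivially and separately. As each step strictly shrinks the not-yet-normalised part of $\phi$ and adds conjuncts of total size $O(|\phi|)$, the loop halts after polynomially many steps and $|\phi_{\mathrm{NF}}|$ is polynomial in $|\phi|$, which gives (i). For (ii), the two directions are as in Lemma~\ref{l:nf}: a model of $\phi$ expands, on the correctly guessed $0$-ary values, to a model of some $\phi_{\mathrm{NF}}$ over the same domain; conversely, since each axiom gives $P_\psi\rightarrow\psi$ and every $P_\psi$ occurs positively, monotonicity shows that the reduct of any model of $\phi_{\mathrm{NF}}$ to the signature of $\phi$ is a model of $\phi$ over the same domain. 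Hence a finite, resp.\ exponential (in $|\phi_{\mathrm{NF}}|$, and thus in $|\phi|$), model of the normal form yields one for $\phi$.
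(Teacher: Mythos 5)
Your proposal is correct and follows essentially the same route as the paper's proof sketch: Scott-style renaming of quantified subformulas by fresh predicates of arity at most two, one-directional axioms justified by negation normal form, the key observation that rule (v) forces the abstracted subformulas inside $\mathcal{I}_2$-/$\mathcal{I}_3$-conjuncts to be at most unary (so uniformity survives while binary fresh atoms are confined to the $\mathcal{I}_5$-part), dummy quantifiers to absorb the lower-arity conjuncts, the $x\neq y$ guard, and separate treatment of one-element domains. The only divergence is minor: you split each $\forall\forall\exists$-conjunct into a guarded conjunct plus a separate diagonal conjunct $\forall x\exists z\,\phi_i(x,x,z)$ (re-padded into the normal form), whereas the paper folds both into the single conjunct $\forall x\forall y\exists z(x\neq y\rightarrow\phi_i(x,y,z)\wedge\phi_i(x,x,z))$ demanding a common witness $z$; your variant is, if anything, the more transparently equivalence-preserving of the two.
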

\begin{proof} (Sketch)
We proceed as in the case of \AUFm{}, introducing new relation symbols to represent subformulas
starting with a block of quantifiers. This time we need symbols of arity $0$, $1$ and $2$. In particular,  subformulas of the form $\exists v \forall v' \eta$ ($\forall v \forall v' \eta$)
with a free variable $v''$ are replaced by $P(v'')$ for a fresh symbol $P$, and axiomatised by 
$\forall v'' \exists v \forall v' (P(v'') \rightarrow \eta)$ ($\forall v'' \forall v \forall v' (P(v'') \rightarrow \eta)$),
and subformulas $\exists v \eta(v, v', v'')$ are replaced by $R(v',v'')$ which is axiomatised by 
$\forall v \forall v' \exists v'' (R(v', v'') \rightarrow \eta))$.
This way, possibly by renaming the variables, we get a formula  of the shape as in \AUFthree{}-normal form but allowing the formulas $\phi_i$ for $i \in \mathcal{I}_5 \cup \mathcal{I}_6$
to be arbitrary (not necessarily uniform) Boolean combinations of atoms. We can now eliminate 
$\mathcal{I}_1$-, $\mathcal{I}_4$- and $\mathcal{I}_6$-conjuncts. 
We first proceed with $\mathcal{I}_6$-conjuncts $\forall x \exists y \exists z \phi_i(x,y,z)$, replacing them
with the conjunctions of $\forall x \exists y R(x,y)$ and $\forall x \forall y \exists z (R(x,y) \rightarrow \phi_i(x,y,z))$, for a 
fresh binary symbol $R$.
Then for  $\mathcal{I}_1$-conjuncts
$\forall x \forall y \phi_i(x,y)$, we just equip them  with a dummy existential quantifier: $\forall x \forall y \exists z \phi_i(x,y)$,
and for $\mathcal{I}_4$-conjuncts $\forall x \exists y \phi_i(x,y)$ we add a dummy universal quantifier and rename variables:
$\forall x \forall y \exists z \phi_i(x,z)$. We end up with only $\mathcal{I}_5$-conjuncts in the $\phi_2$ part of $\phi$.
Finally, any conjunct of the form $\forall x\forall y \exists z \phi_i(x,y,z)$ is replaced by $\forall x \forall y \exists z
(x \not= y \rightarrow \phi_i(x,y,z) \wedge \phi_i(x,x,z))$.
It should be clear that our transformations are sound over the domains consisting of at least two elements. 
(Singleton models can be enumerated and checked separately.)
\end{proof}

\subsection{The complexity of FO$\mathbf{^3_-}$} \label{ss:complexity}

In this section we show that \FOthreem{} is decidable, moreover its satisfiability problem is still in \NExpTime. 
Concerning the obtained results, this subsection is a bit redundant, as in the next subsection, applying an alternative,
probabilistic approach, we demonstrate the exponential model property for \FOthreem{}, which immediately gives the same
\NExpTime-upper bound on the complexity of the satisfiability problem, and additionally proves the same bound on 
the complexity of the finite satisfiability problem.

We decided to keep this subsection (although presenting some constructions and proofs in a slightly more sketchy way), 
since it gives some additional insight in the nature of the considered problems. We proceed here in a rather classical way:
we show that a normal form \FOthreem{} sentence $\phi$ is satisfiable iff it has a certain \emph{witness for satisfiability}, of exponentially bounded size, as stated in the following lemma. Such a witness of satisfiability is a structure which is 
``almost'' a model of $\phi$. It will turn out that one can quite easily (i) extract a witness from a given model of $\phi$, and (ii) unravel a witness
into a proper infinite model of $\phi$. In both directions we use a chase-like procedure. 

\begin{lem} \label{l:satwitness}
A normal form \FOthreem{} sentence $\phi=\phi_1 \wedge \phi_2$ is satisfiable iff there exists a structure $\str{D}$ of size bounded by
$h(|\phi|)$ for some fixed exponential function $h$, and functions without fixed points $\ff_j:D \rightarrow D$ for $j \in \mathcal{I}_3$, such that: 
\begin{enumerate}[label=\textit{(\alph*)},align=left]
	\item $\str{D} \models \phi_1$
	\item for every $d, d' \in D$ and every $j \in \mathcal{I}_3$, we have  $\str{D} \models \phi_j(d, \ff_j(d), d')$.

	\item for every $i \in \mathcal{I}_5$, for every $2$-type $\beta$ realized in $\str{D}$, there is a realization $(d, d')$ of $\beta$ and an element $d''$ such that $d'' \not=d$, $d'' \not= d'$ and 	$\str{D} \models \phi_i(d,d',d'')$.

	\item[\textit{(c')}] for every $i \in \mathcal{I}_5$, for every $2$-type $\beta$ realized in $\str{D}$ by a pair $(d_0, \ff_j(d_0))$ for some $d_0$ and $j$, there is $d \in D$
	such that for $d'=\ff_j(d)$ the pair $(d,d')$ has type $\beta$, and there is an element $d''$ such that $d'' \not=d$, $d'' \not= d'$ and	$\str{D} \models \phi_i(d,d',d'')$.

\end{enumerate}
\end{lem}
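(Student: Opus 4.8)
The plan is to prove the two implications separately: the forward (``only if'') direction by extracting a bounded witness from a model, and the backward (``if'') direction by unravelling a witness into an infinite model. Both are chase-like. Throughout I would work with a \emph{proper} model (witnesses always fresh), which exists for every satisfiable equality-free sentence as recalled at the start of Subsection~\ref{s:probabilistic}; this guarantees that the element witnessing an $\mathcal{I}_5$-conjunct can always be taken distinct from $x$ and $y$, matching the distinctness demands $d''\neq d,\ d''\neq d'$ in (c) and (c').

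For the forward direction, let $\str{A}\models\phi$ be proper. First I would invoke the construction of Subsection~\ref{s:phi1}, which turns $\str{A}$ into an exponentially bounded $\str{B}\models\phi_1$ together with fixed-point-free witness functions $\ff_j$ satisfying $\str{B}\models\forall z\,\phi_j(b,\ff_j(b),z)$ for all $b$ and $j\in\mathcal{I}_3$; this already yields properties (a) and (b) and the size bound $h(|\phi|)$. To secure (c) and (c') I would enlarge $\str{B}$ by a bounded family of witness gadgets. Since the signature is finite there are only exponentially many $2$-types; let $T_2$ be the set of those realized in $\str{A}$. For each $\beta\in T_2$ and each $i\in\mathcal{I}_5$ I would pick in $\str{A}$ a realization $(a,a')$ of $\beta$ with $a\neq a'$ and a fresh witness $a''\notin\{a,a'\}$ with $\str{A}\models\phi_i(a,a',a'')$, and pull a fresh triple realizing $(a,a',a'')$ into $\str{D}$, completing the remaining outer-types by further pulls from $\str{A}$. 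Condition (c') is analogous, selecting the representative among the $\ff_j$-pairs $(a,\ff_j(a))$ of $\str{A}$ (distinct since $\ff_j$ has no fixed points) and reusing the $\mathcal{I}_3$-witness structure. As every added tuple is pulled from $\str{A}\models\phi_1$, property (a) is preserved, and every $2$-type realized in the enlarged $\str{D}$ lies in $T_2$ and hence carries the required witness; the number of gadgets is polynomial in $|\mathcal{I}_5|$ times exponential in $|\phi|$, so $|D|$ stays exponential.

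For the backward direction I would build an infinite model $\str{M}$ by a forest-like chase seeded with a copy of $\str{D}$. The invariant to maintain is that every element sits inside some copy of $\str{D}$ (so it inherits a $1$-type and its $\ff_j$-witnesses internally) and that every defined $2$- and $3$-type is realized in $\str{D}$; by (a) this keeps all $\mathcal{I}_2$-conjuncts and all already-witnessed $\mathcal{I}_3$-conjuncts true. To satisfy $\phi_2$ I would process every pair $(e,e')$ with $e\neq e'$ whose $\mathcal{I}_5$-requirement is still open. Let $\beta$ be its $2$-type. If $(e,e')$ is not an $\ff_j$-pair, I apply (c) to find a representative $(d,d')$ of type $\beta$ with witness $d''$, attach a fresh copy of $\str{D}$, designate the copy of $d''$ as the witness $e''$, set the $3$-type of $(e,e',e'')$ to agree with $(d,d',d'')$, and complete the remaining cross-types by pulls from $\str{D}$. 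If instead $e'=\ff_j(e)$, I use (c') together with (b), so that copying the $3$-type of $(d,\ff_j(d),d'')$ makes $e''$ witness the $\mathcal{I}_5$-conjunct \emph{and} satisfy $\phi_j(e,\ff_j(e),e'')$, keeping $\ff_j(e)$ a valid $\mathcal{I}_3$-witness for $e$. Iterating over all pairs and all freshly created elements and passing to the limit yields $\str{M}\models\phi_1\wedge\phi_2=\phi$.

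The main obstacle is the consistency bookkeeping when new elements are introduced — both the gadgets of the forward direction and the fresh copies of the chase. Because $\mathcal{I}_5$-conjuncts are \emph{non-uniform}, witnessing a pair forces genuine binary facts between the witness $e''$ and $(e,e')$; moreover every $\mathcal{I}_3$-conjunct must keep holding with the new element in the role of $z$, i.e.\ whenever a triple $(d,\ff_j(d),e'')$ becomes defined we need $\str{D}\models\phi_j(d,\ff_j(d),e'')$. These demands must not clash with previously fixed $2$-types nor violate any universal $\mathcal{I}_2$-conjunct. This is precisely what the split between (c) and (c') absorbs, and I expect to discharge it by the same modular, fresh-copy device used in Stage~4 of Subsection~\ref{s:phi2}: process elements in a controlled order, constrain only the parent/child tuple of a newly attached copy, and default all remaining new types by pulls from $\str{D}$ (resp.\ $\str{A}$), pulling an $\ff_j$-triple whenever the tuple contains an $\ff_j$-pair so that $\mathcal{I}_3$-witnesses survive. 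Then every fresh element is genuinely unconstrained at the moment its types are fixed, and no conflict can arise.
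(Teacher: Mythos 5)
Your proposal follows essentially the same route as the paper: the forward direction builds the exponential block $\str{B}\models\phi_1$ from Subsection~\ref{s:phi1} and then repairs conditions (c) and (c') by pulling fresh $\mathcal{I}_5$-witness triples from $\str{A}$, and the backward direction unravels $\str{D}$ by an infinite chase that attaches fresh copies and, crucially, pulls all new outer-types from $\str{D}$ itself rather than from an external model, using (c)/(c') to find the representative triple and (b) to keep $\ff_j(e)$ a valid $\mathcal{I}_3$-witness when the pair being served is an $\ff_j$-pair. All the key points are present, including the observation that completion by pulls keeps every realized $2$-type inside the set realized in $\str{A}$, so a single batch of gadgets suffices.

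The one place where you are under-specified is the forward-direction gadgets. Condition (b) quantifies over \emph{every} $d\in D$, so the fresh witness element $a''$ (and any other fresh element of your ``triple'') must itself carry fixed-point-free $\ff_j$-images satisfying $\str{D}\models\phi_j(d,\ff_j(d),d')$ for all $d'$; a bare pulled triple provides no such structure. The paper resolves this by attaching an entire fresh disjoint copy of $\str{B}$ at each repair step and choosing the witness $b_3$ \emph{inside} that copy, so the new elements inherit their $\mathcal{I}_3$-witnesses internally, and the Stage-4-style completion (pulling an $\ff_j$-triple whenever a tuple contains an $\ff_j$-pair) keeps those witnesses valid against all elements of the enlarged domain. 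You invoke exactly this fresh-copy device in the backward direction and in your consistency discussion, so the fix is available to you; you just need to apply it to the gadgets as well, and note that the domain then grows by at most $|\mathcal{I}_5|\cdot|T_2|$ copies of $\str{B}$, preserving the exponential bound.
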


\noindent 
A structure $\str{D}$, together with the functions $\ff_j$ meeting the conditions of the above lemma will be called \emph{a witness of satisfiability for $\phi$}. As we see $\str{D}$ satisfies the $\mathcal{I}_2$- and $\mathcal{I}_3$-conjuncts of $\phi$ (Conditions (a), (b)), but
need not to satisfy the $\mathcal{I}_5$-conjuncts. However, if a pair $d_0, d_0'$ does not have a witness for an
$\mathcal{I}_5$-conjunct, then
there is an appropriate ``similar'' pair $d, d'$ having such a witness (Conditions (c), (c')). 

Let us first prove the left-to-right direction of Lemma \ref{l:satwitness}.

\smallskip\noindent
{\bf Left-to-right: extracting a witness of satisfiability from a model.}

\smallskip\noindent
\emph{Stage 1: The building block $\str{B}$.} Let $\str{A}$ be a model of a normal form $\FOthreem$ formula $\phi = \phi_1 \wedge \phi_2$.
Let $\AAA$ and $\BBB$ be the sets of all 1-types and all 2-outer-types realized in $\str{A}$, respectively.
Since our logic does not support equality, w.l.o.g.~we assume that for each 1-type $\alpha \in \AAA$, there exists a pair $a,b\in\str{A}$ realizing the 2-outer-type $\beta \in \BBB$ which ``unravel'' $\alpha$,
which means that: $\type{\str{A}}{a} = \type{\str{A}}{b} = \alpha$, and for each relation symbol $R$ in the signature it holds: $\str{A} \models R(a,b)$ iff $\str{A} \models R(a,a)$, and $\str{A} \models R(b,a)$ iff $\str{A} \models R(a,a)$.

Let $\str{B}$ be the finite structure constructed for $\phi_1$ and $\str{A}$ as in Section~\ref{s:phi1}. 
It will serve as a building block in the construction of a witness of satisfiability $\str{D}$ for $\phi$.
Inspecting the construction of $\str{B}$ we note that the set of $1$-types realized in $\str{B}$ is precisely
$\AAA$, and the set of $2$-types is contained in $\BBB$; and regarding the size of $\str{B}$, it is 
exponentially bounded in $|\phi|$. 
Let  $\ff_j:B \to B$ for $j \in \mathcal{I}_3$ be the witness functions defined as in Section~\ref{s:phi1}.

\smallskip\noindent
\emph{Stage 2: The domain of $\str{D}_s$.}
The witness of satisfiability $\str{D}$ will consists of some number of copies of the structure $\str{B}$, say $k$ (will be clarified later).
More precisely, we will construct a sequence of structures $\str{D}_0, \str{D}_1, \dots, \str{D}_{k-1}$, where $\str{D}_0$ is isomorphic to $\str{B}$ and $\str{D}_{k-1} = \str{D}$.
Each $\str{D}_{s+1}$ will be created from $\str{D}_s$ by attaching a fresh disjoint copy of $\str{B}$, providing an $\mathcal{I}_5$-witness for some pair of elements of $(\str{D}_{s+1} \restr D_s)$, and then completing the structure.

Formally, the domain of $\str{D}_s$ is the set $$D_s=B \times \{0, \dots, s \}.$$
We make the structure $\str{D}_s \restr (B \times \{s\})$ isomorphic to $\str{B}$,
and for $s>0$ we make the structure $\str{D}_s \restr (B \times \{0,\dots,s-1\})$ isomorphic to $\str{D}_{s-1}$.
We should think that $B \times \{0\}$ is the ``main'' copy of $\str{B}$, while $B \times \{m\}$, for $m>0$, are auxiliary copies that will help us to satisfy the conditions (c) and (c').
We can adopt the functions $\ff_j:B \to B$ to the functions $\ff_j:D_s \to D_s$ by letting $\ff_j((b,m)) = (\ff_j(b),m)$.

We introduce a notion of an \emph{extended 2-type} $(\beta, j)  \in \BBB \times (\mathcal{I}_3 \cup \{ \bot \})$, which in addition to the information about the 2-outer-type contains an information about whether one of the element (of a pair realizing this $2$-type) is an $\mathcal{I}_3$-witness for the other. 
We say that $a_1,a_2 \in \str{A}$ ($a_1, a_2 \in \str{D}_s$) realize an extended 2-type $(\beta, j)$ in $\str{A}$ (resp.~$\str{D}_s$) if
(1) $a_1,a_2$ realize 2-outer-type $\beta$, and (2) either $i = \bot$, or $\ff_j(a_1) = a_2$, or $\ff_j(a_2) = a_1$.
Denote by $\BBB^+$ the set of all  extended 2-types realized in $\str{A}$.
For each $i\in \mathcal{I}_5$ select a pattern function $\pat^{\str{A}}_i: \BBB^+ \to A^3$ such that $\pat^{\str{A}}_{i}(\beta, j) = (a_1, a_2, a_3)$ if $\{a_1, a_2\}$ realizes $(\beta, j)$ in $\str{A}$, and $\str{A} \models \phi_i(a_1, a_2, a_3).$

\smallskip\noindent
\emph{Stage 3: Constructing $\str{D}_{s+1}$ from $\str{D}_s$.}
Let $b_1, b_2 \in \str{D}_s$ be a pair of distinct elements realizing an extended 2-type $(\beta, j)$ for which  (c) or (c') fails
for some $i \in \mathcal{I}_5$.
Let $(a_1, a_2, a_3) = \pat^{\str{A}}_i(\beta, j)$. Set $(\str{D}_{s+1} \restr \{ b_1, b_2, b_3 \}) \leftarrow pull(\str{A}, b_1 \leftarrow a_1, b_2 \leftarrow a_2, b_3 \leftarrow a_3),$
where $b_3 \in B \times \{s+1\}$ is any element of the freshly added copy of $\str{B}$ with the same 1-type as $a_3$.
After providing a witness for $b_1, b_2$ this way, we need to complete $\str{D}$, i.e., we need to specify the relations between all pairs and triples of elements for which we have not done it yet.
Take any subset $S \subseteq D_{s+1}$ of size $2$ or $3$, and search for an appropriate pattern in $\str{A}$, then pull it into $\str{D}_{s+1} \restr S$.
We will omit the details here. Consult the Section~\ref{s:phi1} and the algorithm defined there for the very similar line of the reasoning.
Therefore, $\str{D}_{s+1} \models \phi_1$ (Condition (a)), and $\str{D}_{s+1} \models \phi_i(d, \ff_j(d), d')$ for each $j\in \mathcal{I}_3$ (Condition (b)).

\smallskip\noindent
\emph{Stage 4: Finalization.}
Conditions (a) and (b) are satisfied after every step, and after at most $|\BBB^+|$ many steps (the number $k$), which is polynomial
in $|\BBB|$ and $|\phi|$,  all the $2$-types will satisfy (c) and (c'). Thus, all the conditions are met for $\str{D}_k$. 
The size of the domain of $\str{D}_k$ is at most exponential in $|\phi|$, since $\str{D}_k$ consists in fact of $|\BBB^+|$ copies
of $\str{B}$, and both $\str{B}$ and $|\BBB^+|$ are exponentially bounded in $\phi$.

\smallskip
Let us now turn to the right-to-left direction of Lemma \ref{l:satwitness}.

\smallskip\noindent
{\bf Right-to-left: building a model from a witness of satisfiability}

\smallskip
Let $\str{D}$ be a witnesses of satisfiability.
If $\phi$ does not have any $\mathcal{I}_5$-conjuncts, then $\str{D}$ is already a model of $\phi$.
In the other case, we will show how to construct a new witness of satisfiability $\str{D}'$, but bigger.
Iterating this construction indefinitely, and taking the limit structure, we will obtain an infinite model of $\str{A}$.

The construction of a new witnesses of satisfiability is very similar to the process of extracting a witnesses of satisfiability from a model. Hence, we will just sketch the construction, and stress the differences.
The domain of $\str{D}'$ is the set\footnote{By $\binom{A}{2}$, where $A$ is a set, we mean the set of all $2$-element subsets of $A$. }
$$D'=D \times \{ (\bot, \emptyset) \} \cup D \times \mathcal{I}_5 \times {\binom{D}{2}},$$ 
and let the structure on $\str{D}' \restr D \times \{s\}$ be isomorphic to $\str{D}$ for each $s \in \{(\bot, \emptyset)\} \cup (\mathcal{I}_5 \times \binom{D}{2})$.
Then we take care of witnesses for all $\mathcal{I}_5$-conjuncts for all the pairs of elements in $\str{D}_{\bot, \emptyset}$. The witness for a conjunct $\phi_i$ and a pair $\{d_1, d_2\}$ is provided in $D_{i, \{d_1, d_2\}}$.
The major difference is that this time we pull outer-substructures not from an existing model of $\phi$, but rather from $\str{D}$ itself;
this is always possible due to Conditions (c) and (c').
Then we complete the structure, similarly as in our previous constructions, by setting the relations between all the pairs and the triples of elements for which we have not done it yet.
Again, in this completion process the outer-substructures are pulled from $\str{D}$, and not from an external structure.
So after that all the $\mathcal{I}_2$- and $\mathcal{I}_3$-conjuncts are satisfied, and all the pairs of elements from $D$
have witnesses for all $\mathcal{I}_5$-conjuncts.
We claim that the newly obtained structure is again a witness of satisfiability, with all elements from 
$D$ having the required witnesses, but with a strictly larger domain.
This allows us to continue the process ad infinitum, with the natural limit structure being a model of $\phi$.

\smallskip
Having shown Lemma \ref{l:satwitness}, and noting that checking if a given structure $\str{D}$ is a witness of satisfiability 
for $\phi$ can be easily done in time polynomial in $|D|$ and $|\phi|$, we get:
\begin{thm}
The satisfiability problem for \FOthreem{} is \NExpTime-complete.
\end{thm}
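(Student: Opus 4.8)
The plan is to establish the two matching bounds separately, leaning on the machinery already developed. For the lower bound I would invoke the containment of equality-free \FOt{} in \FOthreem{} noted in the Preliminaries: the negation normal form of any \FOt{}-sentence is an \FOthreem{}-sentence, and this conversion is computable in polynomial time. Since satisfiability of equality-free \FOt{} is already \NExpTime-hard by Lewis \cite{Lew80}, hardness transfers verbatim to \FOthreem{}.

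The upper bound is where the preceding lemmas do the heavy lifting. First I would apply Lemma \ref{l:nf3}(i) to reduce, in nondeterministic polynomial time, the satisfiability of an arbitrary \FOthreem{}-sentence to the satisfiability of a normal form sentence $\phi=\phi_1 \wedge \phi_2$; as the overall algorithm is nondeterministic, these guessing steps fold harmlessly into the final bound, so it suffices to decide satisfiability of normal form sentences in \NExpTime{}. By Lemma \ref{l:satwitness}, $\phi$ is satisfiable if and only if it admits a witness of satisfiability, namely a structure $\str{D}$ of size at most $h(|\phi|)$ for a fixed exponential $h$, together with fixed-point-free functions $\ff_j:D \to D$ for $j \in \mathcal{I}_3$, meeting conditions (a)--(c'). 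The algorithm therefore nondeterministically guesses such a pair $(\str{D},(\ff_j)_j)$. Because $|D|$ is exponentially bounded in $|\phi|$ and the signature of $\phi$ has fixed (at most ternary) arity, the full relational diagram of $\str{D}$ together with the tabulated $\ff_j$ has a description of exponential size, so this guess is admissible for an \NExpTime{} machine.

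What remains is to confirm that the verification of conditions (a)--(c') runs in time polynomial in $|D|+|\phi|$, hence exponential in $|\phi|$. Condition (a) amounts to evaluating each $\mathcal{I}_2$- and $\mathcal{I}_3$-conjunct on all ordered triples from $D$; with the prefix stripped, each test is a quantifier-free evaluation, and there are polynomially many (in $|D|$) of them. Condition (b) ranges over all $d, d' \in D$ and $j \in \mathcal{I}_3$, testing $\str{D} \models \phi_j(d, \ff_j(d), d')$, again polynomially many quantifier-free checks. Conditions (c) and (c') are existential statements over the polynomially many $2$-types realized in $\str{D}$: for each such type one scans for a suitable pair and a distinct third element, which is a polynomial search. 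I expect the only real subtlety, deserving care rather than constituting a true obstacle, to be the bookkeeping in (c'): one must verify that the guessed $\ff_j$ are genuinely fixed-point-free and that (c') is tested precisely on those $2$-types realized by \emph{witnessing} pairs $(d_0, \ff_j(d_0))$, so that the verification faithfully implements Lemma \ref{l:satwitness}. Assembling these observations, the guessed object is validated in \NExpTime{}, which together with the lower bound yields \NExpTime-completeness.
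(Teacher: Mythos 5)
Your proposal is correct and follows essentially the same route as the paper: the upper bound is obtained by combining the normal form reduction of Lemma \ref{l:nf3}(i) with the exponential-size witness-of-satisfiability characterization of Lemma \ref{l:satwitness}, guessing the witness and verifying conditions (a)--(c') in time polynomial in $|D|+|\phi|$, while the lower bound is inherited from equality-free \FOt{} via Lewis. The paper states the verification step only in one line, so your more explicit accounting of the checks (including model-checking the $\forall\exists\forall$ conjuncts, which involves nested quantifiers but is still polynomial in $|D|$) is a harmless elaboration rather than a deviation.
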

\noindent 
Note that in this subsection we say nothing about finite satisfiability for \FOthreem{} or about its finite model property.
These are left for the next subsection.

\subsection{Finite model property via probabilistic method} \label{ss:fmp}
Here we show show that \FOthreem{} has the finite model property. As in Section~\ref{s:probabilistic} we employ the probabilistic method.
This time we will use as finite building blocks the structures $\str{B} \models \phi_1$ constructed deterministically. Randomness will come to play
when those blocks are joined.

Let  $\phi=\phi_1 \wedge \phi_2$ be a normal form \FOthreem{}-sentence and  $\str{A}$ its model. 
W.l.o.g.~we assume that any $1$-type realized in $\str{A}$ is realized by at least three distinct elements (if it is not the case, 
we replace $\str{A}$ by the structure $\str{A}'$ with domain $A \times \N$, as in the beginning of Section~\ref{s:probabilistic}).
As previously, we pick, for every $j \in \mathcal{I}_3$, a witness function $\ff_j:A \rightarrow A$
such that, for every $a \in A$, we have $\str{A} \models \forall z \phi_j(a, \ff_j(a), z)$.
Let $\AAA{}, \BBB{}$ and $\CCC{}^*$ be the set of $1$-types realized in $\str{A}$, the set of $2$-types realized in $\str{A}$, and, resp., the
set of $3$-outer-types realized in $\str{A}$. 

Let us construct a model $\str{B} \models \phi_1$, together with functions $\pat:B \rightarrow A$ and $\ff_j:B \rightarrow B$, for $j=1, \ldots, |\mathcal{I}_3|$, the latter without fixed points, as in Section~\ref{s:phi1}. We may assume that the set of
$1$-types realized in $\str{B}$ is $\AAA$ (cf.~Section~\ref{ss:complexity}, Left-to-right, Stage 1).

We now define a sequence of (partly) random structures $\str{C}_n$, for $n=1,2, \ldots$. 

\smallskip\noindent
{\em Stage 0: The domain.}  $C_n=B \times \{1, \ldots, n\}$. We make the substructures $\str{C}_n \restr B \times \{m\}$ 
isomorphic to $\str{B}$, for all $m=1,2, \ldots, n$, and set no other connections in this stage. We naturally lift 
functions $\pat$ and the $\ff_j$ to $\str{C}_n$.

\smallskip\noindent
{\em Stage 1: $2$-types.} For every pair $c_1, c_2 \in C_n$ of distinct elements for which  $\type{\str{C}_n}{c_1,c_2}$ has not yet been defined (note that $c_1$ and $c_2$ belong then to different copies of $\str{B}$ in this case), proceed as follows.
Let $\alpha_1=\type{\str{C}_n}{c_1}$, $\alpha_2=\type{\str{C}_n}{c_2}$, and let $\BBB_0$ be the set of $2$-types $\beta$ realized in $\str{A}$ such that
$\beta \restr x_1=\alpha_1$ and $\beta \restr x_2 =\alpha_2$. 
Clearly $\BBB_0$ is non-empty, since even if $\alpha_1=\alpha_2$ we will have two distinct realizations of this
$1$-type by our assumption about $\str{A}$.
Choose $\type{\str{C}_n}{c_1,c_2}$ at random from $\BBB_0$, with uniform probability. 

\smallskip\noindent
{\em Stage 2a: $3$-outer-types.} For every tuple $c_1, c_2, c_3 \in C_n$  such that $c_2=\ff_j(c_1)$ for some $j$, and $\outertype{\str{C}_n}{c_1,c_2,c_3}$ has 
not yet been defined, proceed as follows. 
Recall that by the construction of $\str{B}$, $c_2$ may not
be in the image of any $\ff_k$ for $j\not=k$ in this case.
Let $a_1=\pat(c_1)$, $a_2=\ff_j(a_1)$, and $\alpha_3=\type{\str{C}_n}{c_3}$. Let $\CCC^*_0$ be the set of $3$-outer-types realized in $\str{A}$ by 
tuples $a_1, a_2, a_3$, for all $a_3 \in A$ having $1$-type $\alpha_3$. Note that $\CCC^*_0$ is non-empty--even if $a_1$ and $a_2$ have type $\alpha_3$, then at least one appropriate
$a_3$, different from $a_1$, $a_2$ exists since we assumed that every $1$-type realized in $\str{A}$ is realized at least three times.
Choose $\outertype{\str{C}_n}{c_1,c_2,c_3}$  at random from $\CCC^*_0$, with uniform probability. 

\smallskip\noindent
{\em Stage 2b: $3$-outer-types contd}. For every tuple $c_1, c_2, c_3 \in C_n$ for which $\outertype{\str{C}_n}{c_1,c_2,c_3}$ has not yet been defined, proceed as follows.
Let $\alpha_1=\type{\str{C_n}}{a_1}$, $\alpha_2=\type{\str{C_n}}{a_2}, \alpha_3=\type{\str{C_n}}{a_3}$,  and let $\CCC^*_0$ be the set of $3$-outer-types $\gamma^*$ realized in $\str{A}$ such that
$\gamma^* \restr x_1=\alpha_1$, $\gamma^* \restr x_2 =\alpha_2$, and $\gamma^* \restr x_3 =\alpha_3$. Again, using the assumption
about $\str{A}$ we note that $\CCC^*_0$ is non-empty. Choose $\outertype{\str{C}_n}{c_1,c_2,c_3}$ at random from $\CCC^*_0$, with uniform probability.

\smallskip

Let us now estimate the probability that $\str{C}_n \not\models \phi$. It is not difficult to observe that, regardless of our random choices, $\str{C}_n \models \phi_1$. Indeed, the conjuncts indexed by $\mathcal{I}_3$ are respected since they are respected
in $\str{B}$, and we ensure that $\ff_j(c)$ remains an appropriate witness for $c$ in $\str{C}_n$, for every $c$ in Stage 2a; the conjuncts indexed by $\mathcal{I}_2$
are respected since they are respected in $\str{B}$, and all outer-types defined in $\str{C}$ are copied from $\str{A}$ (Stages 2a, 2b). 
Hence, we need to take into consideration only the satisfaction of $\phi_2$. Let $F^i_{c_1, c_2}$, for a pair of distinct elements  $c_1, c_2 \in C_n$ and
$i \in \mathcal{I}_5$, denotes the event ``$\str{C}_n \not\models \exists z \phi_i(c_1, c_2, z)$''.   We first estimate its probability.

Take a pair of distinct elements $c_1, c_2 \in C_n$ and $i \in \mathcal{I}_5$. There are three cases: (a) $c_2=f_j(c_1)$ for some $j$, (b) $c_1=f_j(c_2)$ for some $j$,
and (c) none of the above holds. Let us go in details through case (a). Let $a_1=\pat(c_1)$, $a_2=\ff_j(a_1)$. In this case $c_1$ and $c_2$ belong to the same copy of $\str{B}$, say to $\str{B} \times \{k\}$, and $\type{\str{C}_n}{c_1,c_2}$ was set in Stage 0 to be equal to  $\type{\str{A}}{a_1, a_2}$.
If $\str{A} \models \phi_i(a_1, a_2, a_1)$, then $\str{C}_n \models \phi_i(c_1, c_2, c_1)$, and if 
$\str{A} \models \phi_i(a_1, a_2, a_2)$, then $\str{C}_n \models \phi_i(c_1, c_2, c_2)$, so the pair $a_1, a_2$ has a witness for the $\phi_i$. Otherwise,
there is an element $a_3$, $a_3 \not\in \{a_1, a_2\}$ such that $\str{A} \models \phi_i(a_1, a_2, a_3)$. Let $\alpha_3=\type{\str{A}}{a_3}$.
Consider now any element $c_3$ of type $\alpha_3$ in $\str{B} \times (\{1, \ldots k-1, k+1, \ldots n)$. During the construction of $\str{C}_n$ we randomly chose $\type{\str{C}_n}{c_1, c_3}$, $\type{\str{C}_n}{c_2, c_3}$ (Stage 1), $\outertype{\str{C}_n}{c_1, c_2, c_3}$ (Stage 2a). We had at most $|\BBB|$ choices for $\type{\str{C}_n}{c_1, c_3}$  and at least one of  them was $\type{\str{A}}{a_1, a_3}$. Analogously, for $\type{\str{C}_n}{c_2, c_3}$.
We had at most $|\CCC^*|$ choices for $\outertype{\str{C}_n}{c_1, c_2,c_3}$, and at least one of them was $\outertype{\str{A}}{a_1,a_2,a_3}$. Hence
the probability that $\{c_1 \rightarrow a_1, c_2 \rightarrow a_2, c_3 \rightarrow a_3\}$ is a partial isomorphism, and hence $\str{C}_n \models \phi_i(c_1, c_2, c_3)$
is at least $\epsilon=1/(|\BBB|^2 \cdot |\CCC^*|)$. As there are at least $n-1$ elements of type $\alpha_3$ in $\str{B} \times (\{1, \ldots k-1, k+1, \ldots n)$ (at least one in each copy of $\str{B}$), and the structures on $c_1,c_2,c_3$ and $c_1,c_2,c_3'$ for different such
elements $c_3,c_3'$ are chosen independently, it follows that $\Prob[F^i_{c_1, c_2}] \le (1-\epsilon)^{n-1}$. 

Note that the event ``$\str{C}_n \not\models \phi$'' is equal to $\bigcup_{c_1 \not=c_2} \bigcup_{i \in \mathcal{I}_5} F^i_{c_1,c_2}$. Hence $$\Prob [\str{C}_n \not\models \phi] = \Prob[\bigcup_{c_1 \not=c_2} \bigcup_{i \in \mathcal{I}_5} F^i_{c_1,c_2}] \le
\sum_{c_1 \not=c_2}\sum_{i \in \mathcal{I}_5} \Prob[F^i_{c_1,c_2}]=n|B|(n|B|-1)|\mathcal{I}_5|(1-\epsilon)^{n-1}.$$
The limit of the above estimation when $n$ approaches infinity is $0$ (note that $\epsilon$, $|B|$ and $|\mathcal{I}_5|$ do not depend on $n$). In particular, for some $n \in \N$, we have that
$\Prob [\str{C}_n \not\models \phi] < 1$, and thus one of the possible (finite number) of choices of $\str{C}_n$ is indeed a finite model of $\phi$.

We can refine this reasoning and achieve an exponential bound on the size of a minimal finite model by applying the following claim:

\begin{clm}
  There exists an exponential function $f$ such that $\Prob[\str{C}_n \not\models \phi] < 1$ for $n \ge f(|\phi|)$.
\end{clm}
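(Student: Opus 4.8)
The plan is to mirror the proof of the analogous claim in Section~\ref{s:probabilistic}, since the probability estimate derived here, $\Prob[\str{C}_n \not\models \phi] \le n|B|(n|B|-1)|\mathcal{I}_5|(1-\epsilon)^{n-1}$, has exactly the same qualitative shape as the bound~(\ref{eqn:bound-failure}): a polynomial-in-$n$ prefactor times a geometric factor $(1-\epsilon)^{n-1}$ whose base is bounded away from $1$ by an inverse-exponential gap. First I would bound every parameter independent of $n$ by an explicit exponential function of $|\phi|$. By Lemma~\ref{l:ntyp} the numbers of $2$-types and $3$-outer-types satisfy $|\BBB|, |\CCC^*| < 2^{q(|\phi|)}$ for a suitable polynomial $q$, the size bound on $\str{B}$ from Section~\ref{s:phi1} gives $|B| < 2^{r(|\phi|)}$ for a polynomial $r$, and clearly $|\mathcal{I}_5| \le |\phi|$. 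Consequently $1/\epsilon = |\BBB|^2 \cdot |\CCC^*| < 2^{p(|\phi|)}$ for $p = 3q$, whence $\epsilon > 2^{-p(|\phi|)}$ and the geometric base obeys $1-\epsilon < 1-2^{-p(|\phi|)}$; I may assume $p, r$ monotone and at least $10x$.

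Substituting these bounds, it suffices to exhibit an exponential $f$ such that for all $n \ge f(|\phi|)$,
$$ (n \cdot 2^{r(|\phi|)})^2 \cdot |\phi| \cdot (1 - 2^{-p(|\phi|)})^{n-1} < 1. $$
Following the earlier claim essentially verbatim, I would apply $1-y \le e^{-y}$ to replace the geometric factor by $\exp(-(n-1)/2^{p(|\phi|)})$, and then take natural logarithms to reduce the goal to a polynomial-versus-exponential inequality of the form
$$ 2^{p(x)}\bigl(2\ln f(x) + 2 r(x)\ln 2 + \ln x\bigr) + 1 < f(x), $$
writing $x$ for $|\phi|$. Choosing $f(x) = C \cdot 2^{2p(x)} \cdot r(x) \cdot x^2$ for a suitable absolute constant $C$ makes this hold.

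The main obstacle, exactly as in Section~\ref{s:probabilistic}, is the self-referential term $2^{p(x)}\cdot 2\ln f(x)$, in which $f$ appears on both sides of the inequality. I would dispatch it in the same manner: the definition of $f$ yields $2^{p(x)}\cdot (\text{a fixed polynomial in } x) < \sqrt{f(x)}$, and the elementary estimate $\ln y < \sqrt{y}/5$ for $y$ above a fixed threshold (combined with the monotonicity of $f$ and $f(1)$ already exceeding that threshold) gives $\ln f(x) < \sqrt{f(x)}/5$, so that $2^{p(x)}\cdot 2\ln f(x) < \sqrt{f(x)}\cdot\sqrt{f(x)}/5 = f(x)/5$. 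The three remaining summands, $2^{p(x)}\cdot 2r(x)\ln 2$, $2^{p(x)}\ln x$ and $1$, are each trivially below $f(x)/5$ by the choice of $f$; summing the four resulting bounds of $f(x)/5$ closes the inequality. Since $f$ is plainly exponential in $|\phi|$ and the domain of $\str{C}_n$ has size $n|B|$, this simultaneously delivers the promised exponential bound on the size of a minimal finite model.
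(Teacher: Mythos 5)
Your proposal is correct and follows essentially the same route as the paper: bound $|B|$, $|\BBB|$, $|\CCC^*|$, $|\mathcal{I}_5|$ by explicit exponentials/polynomials in $|\phi|$, pick an explicit exponential $f$, take logarithms, and reduce to a polynomial-versus-exponential inequality using the linearization of $\ln(1-\varepsilon)$ (the paper invokes the Mercator series where you use $1-y\le e^{-y}$, which is the same estimate). The only cosmetic difference is that you dispatch the self-referential $\ln f(x)$ term via the $\ln y<\sqrt{y}/5$ device imported from the earlier claim, whereas the paper expands $\log f(x)$ explicitly; both close the argument.
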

\begin{proof}
  Denote by $|\phi|$ the size of $\phi$ in any reasonable encoding.
  In Section~\ref{s:phi1} we estimated the size of the domain of $\str{B}$ to be exponential in $|\phi|$. Hence, there exist a polynomial $p_1(x)$ such that $|B| \le 2^{p_1(|\phi|)}$.
  Similarly, from Lemma \ref{l:ntyp} we get that there exists polynomials $q_1(x)$ and $q_2(x)$ such that $|\BBB| \le 2^{q_1(|\phi|)}$ and $|\CCC^*| \le 2^{q_2(|\phi|)}$.
  W.l.o.g. we can assume that $p_1(x),q_1(x),q_2(x) \ge 10x$.

  Let us state the bound on the probability of the event ``$\str{C}_{n+1} \not\models \phi$'' in terms of $p(x) := 2p_1(x),$ and $q(x) := 2q_1(x) + q_2(x)$:
  \begin{equation}\label{e:siema}\Prob [\str{C}_{n+1} \not\models \phi] = (n+1)|B|((n+1)|B|-1)|\mathcal{I}_5|(1-\epsilon)^{n} \le 10n^2 2^{p(|\phi|)}|\phi|(1-\varepsilon(|\phi|))^{n}, \end{equation}
    where $\varepsilon(x)$ is defined as $\varepsilon(x) := 1/{2^{q(x)}}$, and therefore $\varepsilon(|\phi|) \le 1/(|\BBB|^2\cdot|\CCC^*|) = \epsilon$.

  We need to show that taking $n=f(|\phi|)$, for some exponential function $f$, makes the right-hand side of (\ref{e:siema}) strictly less than one.\footnote{We did not tried to optimize this bound for the sake of simplicity.}
  Let $f(x) = 10\ln(2)\cdot2^{2q(x)}p(x)$. Then, taking log on both sides of:
  $$ 10f(x)^2 2^{p(x)} x (1 - \varepsilon(x))^{f(x)} < 1 $$
  gives us:
  $$ \log(10) + 2\log(f(x)) + p(x) + \log(x) + f(x)\log(1 - \varepsilon(x)) < 0. $$
  Recall that the Mercator Series states that $\ln(1-x) = -\sum_{i=1}^{\infty}{x^i/i}$ for all $x \in (-1,1)$.
  Hence we want to see that:
  $$ \log(10) + 2\log(f(x)) + p(x) + \log(x) < \frac{f(x)}{\ln(2)}\sum_{i=1}^{\infty}{\frac{\varepsilon(x)^i}{i}}. $$
  Since $f(x)$ and $\varepsilon(x)$ are always positive, the above inequality is implied by:
  $$ \log(10) + 2\log(f(x)) + p(x) + \log(x) < f(x)\varepsilon(x)/\ln(2). $$
  Expanding the definition of $\varepsilon(x)$ and $f(x)$:
  $$ \log(10) + 2\log(10\ln(2)) + 4q(x) + 2\log(p(x)) + p(x) + \log(x) < 10\ln(2)\cdot2^{2q(x)}p(x)\cdot2^{-q(x)}/\ln(2). $$
  Finally we get:
  $$ \log(10) + 2\log(10\ln(2)) + 4q(x) + 2\log(p(x)) + p(x) + \log(x) < 10\cdot2^{q(x)}p(x), $$
  and it is easy to see that each term on the left-hand side is bounded by $2^{q(x)}p(x)$ (possibly for a large enough $x$),
	so the inequality holds.
\end{proof}
\noindent 
Recalling Lemma \ref{l:nf3} we conclude.

\begin{thm}
\FOthreem{} has the finite (exponential) model property. The finite satisfiability problem (=satisfiability problem) for \FOthreem{} is \NExpTime-complete.
\end{thm}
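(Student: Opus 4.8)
The plan is to assemble the machinery developed in this subsection into the two claimed statements: the finite exponential model property and the \NExpTime-completeness of (finite) satisfiability. By Lemma~\ref{l:nf3}(ii) it suffices to establish the finite exponential model property for normal form \FOthreem{}-sentences, since the property then transfers to the whole fragment; and by Lemma~\ref{l:nf3}(i) the complexity upper bound for normal form sentences yields the upper bound for arbitrary \FOthreem{}-sentences. Thus the entire argument reduces to wrapping up what has already been proven for the normal form case.

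First I would derive the finite exponential model property for a normal form sentence $\phi = \phi_1 \wedge \phi_2$. Assuming $\phi$ is satisfiable, I fix a model $\str{A}$ and run the probabilistic construction of the random structures $\str{C}_n$ described above. The deterministic part of the construction guarantees $\str{C}_n \models \phi_1$ irrespective of the random choices, so the only possible failures are the events $F^i_{c_1,c_2}$ attached to the $\mathcal{I}_5$-conjuncts, whose total probability was bounded by $n|B|(n|B|-1)|\mathcal{I}_5|(1-\epsilon)^{n-1}$. Invoking the preceding claim, there is an exponential function $f$ with $\Prob[\str{C}_n \not\models \phi] < 1$ whenever $n \ge f(|\phi|)$. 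Fixing such an $n$ forces the existence of at least one concrete choice of $\str{C}_n$ that is a genuine finite model of $\phi$; since its domain is $B \times \{1, \ldots, n\}$ with both $|B|$ and $n$ bounded exponentially in $|\phi|$, this model has exponentially bounded size, as required.

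Finally I would settle the complexity. The exponential model property places satisfiability in \NExpTime{}: one nondeterministically guesses a structure over an exponentially bounded domain (its description is of exponential size, as the signature contains boundedly many relations of arity at most three) and verifies each normal form conjunct by exhaustive substitution. Because every satisfiable \FOthreem{}-sentence already has a finite model, finite satisfiability coincides with satisfiability, and the same upper bound applies to both. The matching lower bound is inherited from \FOt{} without equality \cite{Lew80}, which embeds into \FOthreem{}. I expect the only genuinely delicate point to be bookkeeping rather than a real obstacle: one must check that the exponential size bound survives the normal form reduction of Lemma~\ref{l:nf3}, that is, that the fresh $0$-, $1$- and $2$-ary symbols introduced there enlarge the signature only polynomially, so that the number of realized types, and hence the block $\str{B}$ and the final model, remain exponentially bounded in the size of the original sentence.
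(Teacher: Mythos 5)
Your proposal is correct and follows essentially the same route as the paper: the theorem is obtained by combining the probabilistic construction of the structures $\str{C}_n$ and the claim bounding $n$ exponentially with Lemma~\ref{l:nf3} for the normal form reduction, the standard guess-and-check argument for the \NExpTime{} upper bound, and the lower bound inherited from \FOt{} without equality. Your closing remark about the normal form reduction enlarging the signature only polynomially is a sensible bookkeeping check that the paper leaves implicit.
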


\section{Conclusions}

We proposed an extension of the uniform one-dimensional fragment by mixed blocks of quantifiers, \AUF.
We identified its two subfragments: \AUFm{} and \AUFthree{} with exponential model property
and \NExpTime-complete satisfiability problem. We also extended \AUFthree{} to a more expressive decidable fragment of \FOthree{}. The main open question which is left is whether the whole  \AUF{} is decidable.
We believe that the techniques developed in this paper can be refined to cover the case of the extension of \AUFm{} admitting blocks of quantifiers ending with $\exists \forall$. These extension would actually
capture also \AUFthree{}. The general case seems to require some deeper insight.

Another  question is what happens to our logics if a use of equality (free or uniform) is allowed.
We conclude this paper with some notes concerning this question.

\label{s:conc}
\subsection{Notes on equality}
All the results are obtained in this paper in the absence of equality. 
We have two options of adding equality: first, we can treat the equality symbol
as relation symbols from the signature and allow only its uniform use, as in the formula $\forall xy (P(x) \wedge P(y) \rightarrow x=y)$; second, we can allow for
free use of equality, as e.g. in the formula $\forall xyz (R(x,y,z) \rightarrow x \not=y)$.
We recall that \UF{} with free use of equality is still decidable and has the exponential model property \cite{KK14}.

Regarding the  free use of equality consider the following \AUFthree{} formula:
$$\exists x S(x) \wedge \forall x \exists y \forall z ( \neg S(y) \wedge R(x,y,z)  \wedge (x=z \vee \neg R(z,y,x))$$
It is satisfied in the model whose universe is the set of natural numbers, $S$ is true only at $0$ and $R(x,y,z)$ is 
true iff $y=x+1$. It is readily verified that there are no finite models (every $x$ needs to take a fresh $y$ as a witness since otherwise $R(z,y,x)$ would
 not hold for one of the earlier $z$, $z \not=x$).

The above example can be simply adapted to the case of \AUFm{} with free use of equality. We just add a dummy existentially quantified variable $t$, and
require it to be equal to the previous, universally quantified variable $z$. To accommodate all the variables we increase the arity of $R$ by $2$ (one can think
that the first and the last position of $R$ from the previous example have been doubled):
$$\exists x S(x) \wedge \forall x \exists y \forall z \exists t. (t=z \wedge \neg S(y) \wedge R(x,x,y,z,t)  \wedge (x=z \vee \neg R(z,t,y,x,x))).$$

These examples show that satisfiability and finite satisfiability
are different problems for \AUFm{}/\AUFthree{} with free use of equality. We do not know if they are decidable, however. This issue is left for further investigations.

As for the uniform use, we suspect that it does not spoil the finite model property. Such uniform use 
allow one to say that some $1$-types are realized at most once. We believe that treating the realizations of such types
with appropriate care  (extending the approach with \emph{kings} and \emph{court} for \FOt{} from \cite{GKV97}) one can show that the
finite model property survives.  We leave the details to be
worked out.

It is clear that allowing equality in \FOthreem{} spoils the decidability, as \FOthreem{} contains formulas
of the form $\forall x \forall y \exists z \phi$, for quantifier-free $\phi$, which form an
undecidable subclass of G\"odel class with equality \cite{Gol84}.

\section*{Acknowledgements} This work is supported by  NCN grant No. 2021/41/B/ST6/00996. The authors thank the reviewers of this manuscript and the reviewers of the conference papers \cite{FK23} and \cite{Kie23} (on which this article is based) for their helpful comments.

\bibliographystyle{alphaurl}
\bibliography{mybib}

\end{document}